\definecolor{darkblue}{rgb}{0,0,.8}
\newtheorem{theorem}{Theorem}[section]
\newtheorem{lemma}[theorem]{Lemma}
\newtheorem{proposition}[theorem]{Proposition}
\newtheorem{corollary}[theorem]{Corollary}
\newtheorem{conjecture}[theorem]{Conjecture}
\newcommand{\chit}{\protect\raisebox{0.25ex}{$\chi$}}
\renewcommand{\i}{\text{i}}
\newcommand{\Q}{\mathfrak{Q}}
\newcommand{\q}{\mathfrak{q}}
\newcommand{\ket}[1]{|#1\rangle}
\newcommand{\bra}[1]{\langle #1|}
\title{\Large\bf Open spin chains with dynamic lattice supersymmetry}
\author{\normalsize \textsc{Christian Hagendorf} and \textsc{Jean Li\'enardy}
\medskip\\
{\normalsize
  \begin{minipage}{\textwidth}
  \begin{center}
  \textit{
   Universit\'e catholique de Louvain\\
  Institut de Recherche en Math\'ematique et Physique\\
  Chemin du Cyclotron 2, 1348 Louvain-la-Neuve, Belgium} \\
  \bigskip
  \href{mailto:christian.hagendorf@uclouvain.be}{\normalsize 
\texttt{christian.hagendorf@uclouvain.be}},
\href{mailto:jean.lienardy@uclouvain.be}{\normalsize 
\texttt{jean.lienardy@uclouvain.be}}
  \end{center}
  \end{minipage}
}
}
\date{\normalsize{\today}}
\begin{document}
\maketitle

\begin{abstract}
The quantum spin $1/2$ XXZ chain with anisotropy parameter $\Delta=-1/2$ possesses a dynamic supersymmetry on the lattice. This supersymmetry and a generalisation to higher spin are investigated in the case of open spin chains. A family of non-diagonal boundary interactions that are compatible with the lattice supersymmetry and depend on several parameters is constructed. The cohomology of the corresponding supercharges is explicitly computed as a function of the parameters and the length of the chain. For certain specific values of the parameters, this cohomology is shown to be non-trivial. This implies that the spin-chain ground states are supersymmetry singlets. Special scalar products involving an arbitrary number of these supersymmetry singlets for chains of different lengths are exactly computed. As a physical application, the logarithmic bipartite fidelity of the open quantum spin $1/2$ XXZ chain with $\Delta=-1/2$ and special diagonal boundary interactions is determined.
\end{abstract}

\tableofcontents

\section{Introduction}
\label{sec:Introduction}

The spin $1/2$ XXZ Heisenberg chain is arguably one of the most important exactly solvable quantum models of interacting spins in one dimension. One the one hand, its study has inspired the development of many modern techniques of quantum integrability such as the Bethe ansatz \cite{bethe:31}, the quantum-inverse scattering method \cite{korepin:93,maillet:00_2} or the vertex-operator approach \cite{jimbo:94}. On the other hand, it is related to a variety of physically interesting problems, most notably the theory of quantum magnetism \cite{maillet:07,schollwoeck:08}.

In this article, we study the open integrable XXZ Heisenberg chain with arbitrary spin $\ell/2$ \cite{kirillov:87}. We focus on a particular value of the anisotropy parameter where the spin chain exhibits an additional symmetry beyond its integrability: \textit{supersymmetry} \cite{witten:82}. For concreteness, let us consider the familiar case $\ell=1$. For diagonal boundary fields the spin-chain Hamiltonian of a chain of length $L$ is given by
\begin{subequations}
\begin{equation}
  H = -\frac{1}{2}\sum_{j=1}^{L-1} \left(\sigma_j^1\sigma_{j+1}^1 + \sigma_j^2\sigma_{j+1}^2+ \Delta \sigma_j^3\sigma_{j+1}^3\right)+ p \sigma_1^3 + p'\sigma_L^3
  \label{eqn:GenericOpenXXZ}
\end{equation}
where $\sigma^1,\sigma^2$ and $\sigma^3$ denote the Pauli matrices, $\Delta$ the anisotropy parameter, and $p,p'$ the boundary magnetic fields. Fendley and Yang \cite{yang:04} showed that for
\begin{equation}
  \Delta = -\frac{1}{2}, \quad p=p' = -\frac{1}{4}
\end{equation}%
\label{eqn:OpenXXZIntro}%
\end{subequations}%
the Hamiltonian is supersymmetric: up to a constant shift (that we specify later) it can be written as the anticommutator of a \textit{supercharge} and its adjoint. This supercharge is a nilpotent operator that maps states of a chain of length $L$ to states of a chain of length $L+1$: It is dynamic. Dynamic supersymmetry on the lattice has since been observed for many other spin chains \cite{beisert:04,hagendorf:12,hagendorf:13,meidinger:14}, in particular for the periodic and twisted spin $\ell/2$ XXZ Heisenberg chains with a particular value of its anisotropy parameter that depends on $\ell$  \cite{hagendorf:13}. In the following, we refer to this value as the supersymmetric point.

In this article, we determine a multi-parameter family of boundary magnetic fields for the open integrable XXZ Heisenberg chains with spin $\ell/2$ at the supersymmetric point that are compatible with a dynamic lattice supersymmetry. These boundary terms generically are non-diagonal and may differ at the first and last site of the chain. We achieve this through a generalisation of the supercharges found in previous works. Furthermore, we identify all values of the parameters for which the spin-chain Hamiltonians possess so-called supersymmetry singlets \cite{witten:82}. Supersymmetry singlets are special eigenstates of the Hamiltonian that are annihilated by both the supercharge and its adjoint. If they exist then they are automatically ground states and therefore of great physical interest. Their existence is related to the existence of a non-trivial cohomology of the supercharge. We explicitly compute this cohomology.

We use our cohomology results in order to determine sum rules for special scalar products involving the supersymmetry singlets. Specifically, let us denote by $|\psi_L\rangle$ a singlet for a spin chain of length $L$. We consider the overlaps
\begin{equation}
  \langle \psi_L|\left(|\psi_{L_1}\rangle\otimes |\psi_{L_2}\rangle\otimes \cdots \otimes |\psi_{L_m}\rangle\right), \quad L_1+L_2+\cdots+L_m = L.
  \label{eqn:GeneralGSOverlap}
\end{equation}
  These scalar products and their scaling limit are of interest in the field of quantum quenches and quantum entanglement. In particular, the case $m=2$ is related to an entanglement measure called the \textit{bipartite fidelity} \cite{dubail:11}. For one-dimensional quantum critical systems, this quantity has a large-$L$ asymptotic expansion whose first few terms have been predicted by conformal field theory (CFT) techniques \cite{dubail:11,dubail:13}. These CFT predictions have been confirmed by lattice derivations at the leading order in a few cases \cite{dubail:13,weston:11,weston:12}.
  Here, we use the supersymmetry to show that the scalar product in \eqref{eqn:GeneralGSOverlap} can (in a suitable normalisation) be computed from the sole knowledge of a single special component of each involved singlet. For the Hamiltonian \eqref{eqn:OpenXXZIntro}, we provide exact finite-size expressions of these components and therefore an explicit formula for the scalar products as a function of $L_1,\dots,L_m$. This allows us to exactly compute the large-$L$ expansions of the scalar products and show that they match the CFT predictions both at leading \textit{and} subleading orders.

The layout of this article is as follows. In \cref{sec:SUSYSpinChains} we
review the formalism of dynamic lattice supersymmetry for open quantum spin chains. In particular, we discuss the supercharge of the open integrable XXZ Heisenberg chain with spin $\ell/2$ at its supersymmetric point and particular 
 diagonal boundary interactions. We present a new multi-parameter deformation of this supercharge in \cref{sec:DeformedSUSY}. The deformation allows us to identify a family of non-diagonal boundary interactions that is compatible with an exact lattice supersymmetry. The purpose of \cref{sec:E0States} is to determine the values of the deformation parameters for which the supercharges possess supersymmetry singlets. We achieve this by exactly computing the cohomology of the supercharge.
In \cref{sec:SP}, we analyse a number of properties of these supersymmetry singlets. In particular, we find sum rules for the scalar products \eqref{eqn:GeneralGSOverlap}. We analyse them for the supersymmetry singlets of the Hamiltonian \eqref{eqn:OpenXXZIntro} in \cref{sec:SPXXZ}. In particular, we compute their scaling limit and compare our findings to the predictions of conformal field theory. We present our conclusions in \cref{sec:Conclusion}.

\section{Dynamic lattice supersymmetry}

\label{sec:SUSYSpinChains}

In this section, we recall the concept and formalism of dynamic lattice supersymmetry for spin chains
\cite{yang:04,hagendorf:12,hagendorf:13,meidinger:14}. Furthermore, we discuss the supercharges for the open integrable XXZ spin chains with spin $\ell/2$ at their supersymmetric point that we analyse and generalise in this article.

\paragraph{Supercharge and Hamiltonian.} Throughout this article, we consider open quantum spin chains of finite length. We denote by $V^L$ the Hilbert space of a spin chain of length $L$. It is given by the $L$-fold tensor product of single-spin Hilbert spaces $V$:
\begin{equation}
  V^L = \underset{L\,\text{times}}{\underbrace{V\otimes V\otimes \cdots \otimes V}}.
\end{equation}
We focus on the case $ V  = \mathbb C^{\ell+1}$ where $\ell$ is an arbitrary fixed positive integer. We refer to \cite{meidinger:14} for a more general discussion of models where $V$ is a super vector space.

The supercharge $\Q$ of our spin-chain models is a length-increasing operator that maps $V^L$ to $V^{L+1}$ for each $L\geqslant 1$.\footnote{In previous works, the restriction of $\Q$ to $V^L$ was indicated by a subscript: $\Q_L$. We omit these subscripts for the supercharges (and all other operators) to keep the notation as simple as possible.
If needed, we write $\Q:V^L\to V^{L+1}$ in order to emphasise that we consider the action of the supercharge on $V^L$.}
When acting on $V^L$, it is given by the following alternating sum:
\begin{equation}
  \Q = \sum_{j=1}^L (-1)^j \q_j.
  \label{eqn:SCSupercharge}
\end{equation}
Here, the length-increasing operators $\q_j$ are
\begin{equation}
\q_j = \underset{j-1}{\underbrace{1\otimes \cdots \otimes 1}} \otimes  \q\otimes \underset{L-j}{\underbrace{1 \otimes \cdots \otimes 1}}
\end{equation}
where $\q:V\to V\otimes V$ denotes the so-called \textit{local supercharge}. One checks that if $\q$ obeys the relation
\begin{equation}
  (\q\otimes 1-1\otimes \q)\q = 0
  \label{eqn:Coassociativity}
\end{equation}
then the operator $\Q$ is nilpotent,
\begin{equation}
  \Q^2=0,
  \label{eqn:QNP}
\end{equation}
in the sense that the action of  $\Q^2 : V^L \to V^{L+2}$ yields zero on any element of $V^L$, for each $L\geqslant 1$. We refer to \eqref{eqn:Coassociativity} as the coassociativity property.

The canonical (complex) scalar product of the spin-chain Hilbert space allows us to define the adjoint supercharge $\Q^\dagger$. It is a length-decreasing operator that maps $V^L$ to $V^{L-1}$ for each $L\geqslant 2$. We have
$\langle \psi|(\Q^\dagger|\phi\rangle) = (\langle \phi|(\Q |\psi\rangle))^\ast$ for all $|\phi\rangle \in V^{L}, |\psi\rangle \in V^{L-1}$ and each $L\geqslant 2$. It follows from \eqref{eqn:QNP} that the adjoint supercharge is also nilpotent,
\begin{equation}
  (\Q^\dagger)^2=0.
  \label{eqn:QDNP}
\end{equation}
This means that for each $L\geqslant 3$, the application $(\Q^\dagger)^2:V^L\to V^{L-2}$ yields zero on every element of $V^L$.

In supersymmetric quantum mechanics, the Hamiltonian is given by the anticommutator of a supercharge $\Q$ and its adjoint $\Q^\dagger$:
\begin{equation}
  H = \Q\Q^\dagger + \Q^\dagger \Q.
  \label{eqn:DefHFromQ}
\end{equation}
Using the specific supercharge \eqref{eqn:SCSupercharge} we find that the Hamiltonian \eqref{eqn:DefHFromQ} is the sum of a bulk part, describing nearest-neighbour interactions, and boundary terms:
\begin{equation}
  H = \sum_{i=1}^{L-1} h_{i,i+1} + (h_{\text{\rm \tiny B}})_1+(h_{\text{\rm \tiny B}})_L.
  \label{eqn:Hamiltonian}
\end{equation}
Here, $h_{i,i+1}$ denotes the Hamiltonian density $h: V^2\to V^2$, acting on the sites $i$ and $i+1$. In terms of the local supercharge it is given by
\begin{equation}
h = -(1\otimes \q^\dagger)(\q\otimes 1) - (\q^\dagger \otimes 1)(1\otimes \q)+\q\q^\dagger + \frac12\left(\q^\dagger\q \otimes 1 +1 \otimes \q^\dagger \q \right).
\label{eqn:HamiltonianDensity}
\end{equation}
Furthermore, the boundary interaction at the first and last site of the chain is encoded in the operator $h_{\text{\rm \tiny B}}:V\to V$. In terms of the local supercharge, we find
\begin{equation}
  h_{\text{\rm \tiny B}} = \frac{1}{2}\q^\dagger\q.
  \label{eqn:BoundaryHamiltonian}
\end{equation}
In \eqref{eqn:Hamiltonian} the boundary terms are the same at both ends of the spin chain. In \cref{sec:DeformedSUSY}, we show for a specific choice of $\q$ that the definition supercharge can be generalised in order to incorporate unequal boundary terms at both ends without modifying the bulk part.

Unlike $\Q$ and $\Q^\dagger$ the Hamiltonian $H$ is not a length-changing operator. From \eqref{eqn:QNP} and \eqref{eqn:QDNP} follow the following relations:
\begin{equation}
  H \Q = \Q H, \quad H\Q^\dagger = \Q^\dagger H.
  \label{eqn:CommuteHQ}
\end{equation}
Hence, both $\Q$ and $\Q^\dagger$ are formally symmetry operators, and therefore the Hamiltonian $H$ is said to be supersymmetric. The supersymmetry is dynamic in the sense that the Hamiltonians on the left- and right-hand side of \eqref{eqn:CommuteHQ}
act on the Hilbert spaces of spin chains whose lengths differ by one. This supersymmetry leads to special properties of the spectrum of $H$. The construction \eqref{eqn:DefHFromQ} implies that this spectrum is real and non-negative. Furthermore, it follows from the commutation relations \eqref{eqn:CommuteHQ} that the spectra for chains of different length have common eigenvalues. We discuss these properties in detail below in \cref{sec:Cohomology}.

\paragraph{Local supercharges for XXZ chains with arbitrary spin.}
The work of this article is based on a particular local supercharge. It acts on the canonical basis states $|0\rangle,|1\rangle,\dots,|\ell\rangle$ of $V$ according to
\begin{subequations}
\label{eqn:TrigSupercharge}
\begin{equation}
  \q|0\rangle = 0, \quad \text{and} \quad \q|m\rangle = \sum_{k=0}^{m-1}a_{m,k}|k,m-k-1\rangle, \quad m=1,\dots, \ell,
\end{equation}
where we abbreviated $|m_1,m_2\rangle = |m_1\rangle \otimes |m_2\rangle$. The coefficients $a_{m,k}$ are strictly positive real numbers defined for $0\leqslant k < m\leqslant \ell$. They are given by
\begin{equation}
  \label{eqn:Defamk}
  a_{m,k} = \sqrt{\frac{\{m+1\}}{\{m-k\}\{k+1\}}}, \quad
  \quad\{m\} = \frac{q^m-q^{-m}}{q-q^{-1}},
\end{equation}
where $q$ is the root of unity
\begin{equation}
q = e^{\i \pi/(\ell+2)}.
\label{eqn:RootOfUnity}
\end{equation}
\end{subequations}
It follows that the adjoint local supercharge $\q^\dagger$ acts on a basis vector of $V^2$ according to
\begin{equation}
  \label{eqn:ActionAdjointTrigSC}
  \q^\dagger|m_1,m_2\rangle =
  \begin{cases}
    a_{m_1+m_2+1,m_1}|m_1+m_2+1\rangle, &\text{if } m_1+m_2<\ell,\\
    0, &\text{if } m_1+m_2 \geqslant \ell.
  \end{cases}
\end{equation}

The local supercharge \eqref{eqn:TrigSupercharge} was introduced in \cite{hagendorf:13}. It is closely related to the supercharge of the so-called $M_\ell$ models of Fendley, Nienhuis and Schoutens \cite{fendley:03}. These models describe supersymmetric fermions on a one-dimensional lattice with an exclusion constraint that limits the length of connected fermion clusters to $\ell$. Their supercharge splits connected clusters of $m$ fermions into pairs of adjacent clusters of $k$ and $m-k-1$ fermions, for $k=0,\dots,m-1$, with an amplitude $a_{m,k}$. Locally, the $M_\ell$ models are equivalent to the spin chains considered in this article. However, the spin-chain language easily allows us to generalise the supersymmetry and add new features that are rather difficult to implement in the fermion language. One such feature is particle-hole symmetry for the fermions. In fact, the particule-hole transformation is a complicated non-local operation because of the exclusion constraint. In the spin-chain language, it translates to a simple spin-reversal symmetry. Indeed, the special values for the constants $a_{m,k}$ in \eqref{eqn:Defamk} lead to a Hamiltonian density that is invariant under a spin-reversal transformation as we shall see below. Another feature is the introduction of boundary interactions that break the particle-number conservation in the fermion model (without breaking the supersymmetry). In the spin-chain language, this translates to boundary conditions that break the conservation of the magnetisation. Below, we construct a new family of such boundary interactions for the spin-chain models by deforming the action of the supercharge on the boundary sites of the spin chain.

Let us now illustrate the nature of the spin-chain Hamiltonians resulting from \eqref{eqn:TrigSupercharge}.
For $\ell=1$, \eqref{eqn:TrigSupercharge} defines a local supercharge for the well-known spin $1/2$ XXZ chain with anisotropy parameter $\Delta=-1/2$, found by Fendley and Yang \cite{yang:04}:
\begin{equation}
  \q |0\rangle = 0, \quad \q|1\rangle = \ket{0}\otimes\ket{0}.
  \label{eqn:DefQXXZ}
\end{equation}
The corresponding Hamiltonian density and the boundary terms are readily evaluated from \eqref{eqn:HamiltonianDensity} and \eqref{eqn:BoundaryHamiltonian}. The full spin-chain Hamiltonian is obtained from \eqref{eqn:Hamiltonian}. Identifying
\begin{equation}
  |0\rangle = \begin{pmatrix} 1\\ 0 \end{pmatrix}, \quad
  |1\rangle = \begin{pmatrix} 0\\ 1 \end{pmatrix}, 
\end{equation}
it can conveniently be written in terms of the Pauli matrices
\begin{equation}
 \sigma^1
 =\begin{pmatrix}
   0 & 1 \\
   1 & 0
   \end{pmatrix},
   \quad
\sigma^2
 =\begin{pmatrix}
   0 & -\i \\
   \i & 0
   \end{pmatrix},
   \quad
\sigma^3
 =\begin{pmatrix}
   1 & 0 \\
   0 & -1
   \end{pmatrix}.
\end{equation}
Indeed, one finds the XXZ Hamiltonian with anisotropy parameter $\Delta=-1/2$ and diagonal boundary magnetic fields:
\begin{align}
    \label{eqn:XXZOpenChainND}
  H = -\frac{1}{2}\sum_{j=1}^{L-1} \left(\sigma_j^1\sigma_{j+1}^1+\sigma_j^2\sigma_{j+1}^2-\frac{1}{2}\sigma_j^3\sigma_{j+1}^3\right)-\frac{\sigma_1^3+\sigma_L^3}{4}+\frac{3L-1}{4}.
\end{align}
Up to a constant shift, this is the Hamiltonian \eqref{eqn:OpenXXZIntro} that we discussed in \cref{sec:Introduction}. Besides its lattice supersymmetry and its relation to the (open) $M_1$ model of supersymmetric fermions \cite{beccaria:05}, it has a few interesting features. First, it is isospectral to another XXZ chain that describes the so-called Temperley-Lieb stochastic process \cite{nichols:05,degier:05}. The connection to a stochastic process implies in particular that the spectrum of \eqref{eqn:XXZOpenChainND} is non-negative and contains the non-degenerate ground-state eigenvalue $E=0$. In \cref{sec:E0States}, we use the supersymmetry in order to provide new insights into the properties of the corresponding zero-energy state. Second, we note that \eqref{eqn:XXZOpenChainND} belongs to a family of Hamiltonians that have a spectral overlap with the Hamiltonians of the famous quantum-group invariant XXZ spin chains of Pasquier and Saleur's \cite{pasquier:90}. This connection has recently been used in order to proof the reality of the spectra of the Pasquier-Saleur spin chains \cite{morin:16}.

For arbitrary $\ell$, the Hamiltonian density that derives from \eqref{eqn:TrigSupercharge} was explicitly computed in \cite{hagendorf:13}. We write it as follows:
\begin{subequations}
\label{eqn:ExplicitHamDensity}
\begin{equation}
 h = \sum_{m_1,m_2=0}^\ell
  \sum_{n=-M_1}^{M_2}\beta_{m_1,m_2}^n|m_1+n,
  m_2-n\rangle\langle m_1,m_2|,
\end{equation}
where we abbreviate $M_1 = \min(m_1,\ell-m_2)$ and ${{M_2}} = \min(m_2,\ell-m_1)$. The coefficients have the symmetry property $\beta^n_{m_1,m_2} = \beta_{m_2,m_1}^{-n}$. For $n>0$, they are given by
\begin{equation}
  \beta^n_{m_1,m_2} = -\frac{1}{\{n\}}\sqrt{\frac{\{M_1+1\}\{M_2-n+1\}}{\{M_2+1\}\{M_1+n+1\}}}.
\end{equation}
Furthermore, we have
\begin{equation}
  \beta^0_{m_1,m_2} = c_{M_1+1}+c_{M_2+1}, \quad 
    c_m = \sum_{k=1}^m \frac{\{k+1\}-\{k-1\}}{2\{k\}}.
\end{equation}
\end{subequations}
Furtermore, the term that describes the boundary interactions is diagonal for any $\ell$:
\begin{equation}
  \label{eqn:hBND}
  h_{\text{\rm \tiny B}} =\frac{1}{2}\q^\dagger\q= \sum_{m=1}^\ell c_m|m\rangle\langle m|.
\end{equation}
For $\ell=2$, \eqref{eqn:ExplicitHamDensity} yields a Hamiltonian density which coincides, up to a simple unitary transformation, with the Hamiltonian density of the integrable spin-one XXZ chain (of Fateev and Zamolodchikov \cite{zamolodchikov:81}) at its supersymmetric point.
Further investigations in \cite{hagendorf:13} support the conjecture that for arbitrary $\ell$ \eqref{eqn:ExplicitHamDensity} is, up to a simple unitary transformation, the Hamiltonian density of the quantum integrable spin $\ell/2$ XXZ chain at a particular value of its anisotropy parameter.
This observation is consistent with an analysis of the Bethe-ansatz equations \cite{meidinger:14} but remains to be proven. A possible proof could be obtained from Mangazeev's explicit expressions for the $R$-matrices of fused vertex models \cite{mangazeev:14}. We leave the details of this proof to future investigations.

\section{Supercharges and boundary conditions}

\label{sec:DeformedSUSY}

In this section, we generalise the supersymmetry of the open XXZ spin chain at $\Delta=-1/2$, described by the Hamiltonian  \eqref{eqn:XXZOpenChainND}, and its higher-spin analogues at their supersymmetric point. In \cref{sec:OneParamDef} we find a family of local supercharges $\q(y)$, depending non-trivially on a complex parameter $y$, that have the same Hamiltonian density as $\q$ defined in \eqref{eqn:TrigSupercharge}. The resulting boundary terms however depend non-trivially on $y$ and generically are non-diagonal. Using the properties of $\q(y)$, we generalise in \cref{sec:BoundaryConditions} the action of the supercharge at the first and last site of the spin chain. The resulting supersymmetric spin-chain Hamiltonians have unequal boundary terms at both ends of the chain.

\subsection{A one-parameter deformation}
\label{sec:OneParamDef}

In order to construct $\q(y)$, we need to discuss two special local supercharges: the image $\bar \q$ of $\q$ defined in \eqref{eqn:TrigSupercharge} under spin reversal and a so-called local gauge supercharge.

We start our discussion with $\bar \q$ and some of its properties. The spin-reversal operator $R$ on $V^L,\,L\geqslant 1,$ is a linear operator defined by the following action on the canonical basis states $|m_1,m_2,\dots,m_L\rangle=|m_1\rangle \otimes |m_2\rangle \otimes \cdots\otimes |m_L\rangle$:
\begin{equation}
 R|m_1,\dots,m_L\rangle = |\ell-m_1,\dots,\ell-m_L\rangle.
\end{equation}
We define $\bar \q = R \q R$. This operator acts on the basis vectors of $V$ according to
\begin{equation}
  \label{eqn:BarTrigSupercharge}
  \bar \q|\ell\rangle = 0,\quad \text{and} \quad \bar \q |m\rangle = \sum_{k=m+1}^\ell a_{\ell-m,\ell-k}|k, \ell+1+m-k\rangle, \quad m=0,\dots,\ell-1.
\end{equation}
It follows from $R^2=1$ that $\bar \q$ has the coassociativity property \eqref{eqn:Coassociativity}. Furthermore, it was shown in \cite{hagendorf:13} that, for the specific choice \eqref{eqn:Defamk} of the coefficients $a_{m,k}$, the Hamiltonian densities $h$ of $\q$ and $\bar h$ of $\bar \q$ are equal:
\begin{equation}
  \label{eqn:EqualityHHBar}
    h = \bar h.
\end{equation}
For our construction of $\q(y)$, we note furthermore that $\q$ and $\bar \q$ obey a certain anticommutation relation up to boundary terms.
Indeed, for any $|\psi\rangle \in V$ we have the relation \begin{equation}
  \label{eqn:AntiCommutationQQBar}
  \left((-\q \otimes 1+1\otimes \q)\bar \q+(-\bar\q \otimes 1+ 1\otimes \bar\q) \q\right)|\psi\rangle = |\chit\rangle \otimes |\psi\rangle - |\psi\rangle \otimes |\chit\rangle
\end{equation}
where the vector $|\chit\rangle\in V^2$ is given by
\begin{equation}
  |\chit\rangle = \sum_{m=0}^\ell \chit_m |m,\ell-m\rangle, \quad \chit_m = \frac{1}{\{m+1\}}.
  \label{eqn:DefChi}
\end{equation}

Next, let us recall the concept of a local gauge supercharge $\q_\phi$ \cite{hagendorf:13}. It depends on a vector $|\phi\rangle \in V$ and acts on any vector $|\psi\rangle \in V$ according to
  \begin{equation}
  \q_\phi|\psi\rangle = |\phi\rangle \otimes |\psi\rangle + |\psi\rangle \otimes |\phi\rangle.
  \label{eqn:DefGaugeSC}
\end{equation}
The operator $\q_\phi$ does not have the coassociativity property \eqref{eqn:Coassociativity}. However, it obeys a similar relation:\begin{equation}
  (\q_\phi\otimes 1 - 1 \otimes \q_\phi)
\q_\phi|\psi\rangle = |\chit_\phi\rangle \otimes |\psi\rangle - |\psi\rangle \otimes |\chit_\phi\rangle, \quad 
|\chit_\phi\rangle = |\phi\rangle \otimes |\phi\rangle.
\label{eqn:QuasiCoassociativityGauge}
\end{equation}
 
We define the local supercharge $\q(y)$ as a linear combination of $\q$, $\bar \q$ and a local gauge supercharge. The idea is to adjust the gauge term in such a way that $\q(y)$ has the coassociativity property. Let us write
\begin{equation}
  \q(y) = x\left(\q + y^{\ell+2}\bar \q + \q_{\phi(y)}\right), \quad x= \frac{1}{\sqrt{1+|y|^{2(\ell+2)}}}
    \label{eqn:DefLocalQy}
\end{equation}
where $|\phi(y)\rangle\in V$ is the vector characterising the gauge term. Using \eqref{eqn:AntiCommutationQQBar}, \eqref{eqn:DefGaugeSC} and \eqref{eqn:QuasiCoassociativityGauge} one checks that $\q(y)$ satisfies \eqref{eqn:Coassociativity} if $|\phi(y)\rangle$ solves the quadratic equation
\begin{equation}
  \left(\q +y^{\ell+2}\bar \q\right)|\phi(y)\rangle +|\phi(y)\rangle \otimes |\phi(y)\rangle= y^{\ell+2}|\chit\rangle.
  \label{eqn:EqnForPhi}
\end{equation}
This equation is solved by\footnote{The solution is not unique. For instance, $\ell+1$ other solutions are given by $|\phi(q^{2(k+1)}y)\rangle$ with $k=0,\dots,\ell$.}
\begin{equation}
  |\phi(y)\rangle = \sum_{m=0}^\ell \phi_m(y) |m\rangle, \quad \phi_m(y) = -\frac{y^{m+1}}{\sqrt{\{m+1\}}}.
  \label{eqn:DefPhiGauge}
\end{equation}
The equations \eqref{eqn:DefLocalQy} and \eqref{eqn:DefPhiGauge} define the local supercharge $\q(y)$. It has the property that $\q(y=0)=\q$ and $\lim_{y\to \infty} \q(y) =\bar \q$ where the limit is taken along the real axis. The construction implies that $\q(y)$ has a well-defined transformation property under spin reversal. Indeed, one checks that 
\begin{equation}
  R \q(y) R =\left(\frac{y}{|y|}\right)^{\ell+2}\q(y^{-1})
  \label{eqn:ROnQy}
\end{equation}
for non-zero values of $y$. We note that this is compatible with the values taken at $y=0$ and for $y\to \infty$. Furthermore, this equation shows that the local supercharge $\q(y)$ is invariant under spin reversal up to a sign if and only if $y=\pm 1$.

\paragraph{Hamiltonian density.} We will now show that the Hamiltonian density of $\q(y)$ is independent of $y$. To this end, we use a result for local supercharges that are invariant under a parity transformation. The parity operator $P$ on $V^L,\,L\geqslant 1,$ is the linear operator defined by the following action on the canonical basis states:
\begin{equation}
 P|m_1,m_2,\dots,m_L\rangle = |m_L,m_{L-1}\dots,m_1\rangle.
\end{equation}
A local supercharge $\q$ is parity-invariant if $P\q =\q$ on $V$. The following property of parity-invariant local supercharges was shown in \cite{hagendorf:13}:
\begin{lemma}
  \label{lem:ParityProperty}
  Let $\q$ be a parity-invariant local supercharge and $\q_\phi$ any gauge local supercharge, then the Hamiltonian densities of $\q$ and $\q + \q_\phi$ are equal.
\end{lemma}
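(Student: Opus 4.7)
The plan is to substitute $\q \to \q + \q_\phi$ in the Hamiltonian density formula \eqref{eqn:HamiltonianDensity} and verify that the extra contributions vanish. Expanding by bilinearity in $\q$ and $\q^\dagger$ gives
\[
  h(\q + \q_\phi) = h(\q) + h(\q_\phi) + h_{\text{cross}},
\]
where $h_{\text{cross}}$ collects the ten mixed terms containing exactly one factor of $\q$ or $\q^\dagger$ and one factor of $\q_\phi$ or $\q_\phi^\dagger$. I would show $h_{\text{cross}} = 0$ and $h(\q_\phi) = 0$ separately, by evaluating each constituent operator on an arbitrary basis state $|m_1, m_2\rangle \in V^2$.

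The key input for the cross part is the following consequence of parity invariance. Decompose $\q_\phi = L_\phi + R_\phi$, where $L_\phi|\psi\rangle = |\phi\rangle \otimes |\psi\rangle$ and $R_\phi|\psi\rangle = |\psi\rangle \otimes |\phi\rangle$. The hypothesis $P\q = \q$ is equivalent to the symmetry $a_{m,k} = a_{m,m-k-1}$ of the coefficients defining $\q$; in operator form this becomes the identity $\q^\dagger L_\phi = \q^\dagger R_\phi$, i.e.\ $\q^\dagger|\phi, m\rangle = \q^\dagger|m, \phi\rangle$ for every $m$, and dually $L_\phi^\dagger \q = R_\phi^\dagger \q$. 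With these in hand, the two terms $\q \q_\phi^\dagger$ and $\q_\phi \q^\dagger$ pair up against the four off-diagonal compositions $(1 \otimes \q^\dagger)(\q_\phi \otimes 1)$, $(\q^\dagger \otimes 1)(1 \otimes \q_\phi)$, $(1 \otimes \q_\phi^\dagger)(\q \otimes 1)$ and $(\q_\phi^\dagger \otimes 1)(1 \otimes \q)$, while the four projector contributions $\frac{1}{2}(\q^\dagger \q_\phi \otimes 1 + \q_\phi^\dagger \q \otimes 1 + 1 \otimes \q^\dagger \q_\phi + 1 \otimes \q_\phi^\dagger \q)$ absorb the residual same-site pieces, so that $h_{\text{cross}}$ cancels term by term.

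For the pure-gauge piece $h(\q_\phi)$ no parity assumption is needed: all five summands produce linear combinations of the vectors $\langle\phi|m_i\rangle|\phi, m_j\rangle$, $\langle\phi|m_i\rangle|m_j, \phi\rangle$ and $\langle\phi|\phi\rangle|m_1, m_2\rangle$, and an elementary tally shows that the coefficients of each type add up to zero. The main obstacle is the volume of bookkeeping in the cross part. A conceptual shortcut is available via the observation that the total gauge supercharge $\Q_\phi = \sum_{j=1}^L (-1)^j (\q_\phi)_j$ telescopes to a purely boundary operator, inserting $|\phi\rangle$ at the left or right end of the chain with opposite signs; this localises the global difference $H' - H$ near the boundaries and suggests that the bulk density is unchanged, although one still has to match boundary contributions to turn the global statement into a statement about $h$ itself.
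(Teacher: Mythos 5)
Your plan is sound, and the two identities you isolate are exactly the right ingredients. Note first that the paper itself does not prove \cref{lem:ParityProperty}; it quotes the result from \cite{hagendorf:13}, so there is no in-text proof to compare against. On the merits: writing $\q_\phi = L_\phi + R_\phi$ and deducing $\q^\dagger L_\phi = \q^\dagger R_\phi$ and $L_\phi^\dagger \q = R_\phi^\dagger \q$ from $P\q=\q$ by taking adjoints (using $P^\dagger = P$, $L_\phi^\dagger P = R_\phi^\dagger$) is clean and, importantly, does not rely on the specific form \eqref{eqn:TrigSupercharge} --- your parenthetical remark that parity invariance ``is equivalent to $a_{m,k}=a_{m,m-k-1}$'' only applies to that particular $\q$, whereas the lemma (and your operator identities) hold for an arbitrary parity-invariant local supercharge, which is the level of generality needed. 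I checked the bookkeeping you defer: on a basis vector $|m_1,m_2\rangle$, the term $\q_\phi\q^\dagger$ cancels the $|\phi\rangle$-outermost pieces of $-(1\otimes\q^\dagger)(\q_\phi\otimes 1)$ and $-(\q^\dagger\otimes 1)(1\otimes\q_\phi)$, the term $\q\q_\phi^\dagger$ cancels the $\langle\phi|m_i\rangle$-pieces of $-(1\otimes\q_\phi^\dagger)(\q\otimes 1)$ and $-(\q_\phi^\dagger\otimes1)(1\otimes\q)$, and the four residual pieces are killed by the $\frac12(\q^\dagger\q_\phi+\q_\phi^\dagger\q)\otimes 1 + 1\otimes(\cdots)$ contributions precisely because parity lets each half-weighted symmetrised term collapse to a single term; the pure-gauge density $h(\q_\phi)$ does vanish identically with no parity assumption (the $\|\phi\|^2$ coefficients sum to $-1-1+0+2=0$ and each of the four $\langle\phi|m_i\rangle$-type vectors likewise). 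So the argument closes. The final paragraph about $\Q_\phi$ telescoping to a boundary operator is, as you concede, only heuristic (it shows the global Hamiltonians differ by boundary-localised terms, not that the chosen density $h$ is unchanged) and can simply be dropped; the direct computation is the proof.
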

One checks that the local supercharges $\q$ and $\bar \q$ defined in \eqref{eqn:TrigSupercharge} and \eqref{eqn:BarTrigSupercharge}, respectively, are parity-invariant.
We use this observation to prove the following statement:
\begin{proposition}
\label{prop:HamDensity}
  The Hamiltonian density $h(y)$ of the local supercharge $\q(y)$ defined by \eqref{eqn:DefLocalQy}, \eqref{eqn:DefPhiGauge} is independent of the parameter $y$.
  \begin{proof}
It follows from \cref{lem:ParityProperty} that $h(y)$ is equal to the Hamiltonian density of $x(\q + y^{\ell+2} \bar \q)$. This Hamiltonian density is obtained from  \eqref{eqn:HamiltonianDensity}. In order to evaluate it, we note that $\q$ and $\bar \q$ obey the following relations
\begin{subequations}
\label{eqn:QQBarRelations}
    \begin{align}
      \bar \q\q^\dagger = (1\otimes \q^\dagger)(\bar \q \otimes 1) +(\q^\dagger\otimes 1)(1\otimes \bar \q ),\quad \bar \q ^\dagger \q = 0,\\
      \q \bar \q^\dagger = (1\otimes  \bar \q^\dagger)(\q \otimes 1) +(\bar \q^\dagger\otimes 1)(1\otimes \q ),\quad \q ^\dagger  \bar\q = 0.
    \end{align}
    \end{subequations}%
    These relations follow from a straightforward calculation, using \eqref{eqn:TrigSupercharge} and \eqref{eqn:BarTrigSupercharge}. Combining \eqref{eqn:HamiltonianDensity} with \eqref{eqn:QQBarRelations}, we obtain
    \begin{equation}
      h(y) = x^2(h + |y|^{2(\ell+2)}\bar h),
    \end{equation}
    where $h$ and $\bar h$ denote the Hamiltonian densities of $\q$ and $\bar \q$, respectively. Furthermore, $\bar h= h$ as noted above. Using the value of $x$ given in \eqref{eqn:DefLocalQy}, it follows that $h(y)=h$, which is independent of $y$.
  \end{proof} 
\end{proposition}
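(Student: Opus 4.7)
The plan is to strip $\q(y)$ down to a manifestly parity-invariant piece via \cref{lem:ParityProperty}, and then to exploit cancellations between $\q$ and $\bar\q$ in the formula \eqref{eqn:HamiltonianDensity} for the Hamiltonian density.

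First, I would observe that both $\q$ and $\bar\q$, as defined in \eqref{eqn:TrigSupercharge} and \eqref{eqn:BarTrigSupercharge}, are manifestly parity-invariant: the images $\q|m\rangle$ and $\bar\q|m\rangle$ are symmetric under the swap of tensor factors. Hence $\q+y^{\ell+2}\bar\q$ is parity-invariant, and applying \cref{lem:ParityProperty} with the gauge piece $x\q_{\phi(y)}$ reduces the problem to showing that the Hamiltonian density of $\tilde\q(y) := x(\q+y^{\ell+2}\bar\q)$ is independent of $y$. This step is essentially free and disposes of the gauge contribution entirely.

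The core computation is then to expand \eqref{eqn:HamiltonianDensity} with $\q\mapsto\tilde\q(y)$ and to collect terms by weight: $x^2$ (pure $\q$), $x^2|y|^{2(\ell+2)}$ (pure $\bar\q$), plus cross terms proportional to $x^2 y^{\ell+2}$ and $x^2\bar y^{\ell+2}$. The pure pieces reassemble into $x^2 h$ and $x^2|y|^{2(\ell+2)}\bar h$. The key technical point, and the main obstacle, is the vanishing of the cross terms. I would establish two families of identities. First, $\bar\q^\dagger\q=0$ and $\q^\dagger\bar\q=0$; the former follows because $\q|m\rangle$ is supported on states $|m_1,m_2\rangle$ with $m_1+m_2=m-1<\ell$ while $\bar\q^\dagger$, obtained from \eqref{eqn:ActionAdjointTrigSC} by spin reversal, annihilates exactly such states (and the latter follows analogously). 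This kills the cross terms appearing inside $\tilde\q(y)^\dagger\tilde\q(y)$, which enters \eqref{eqn:HamiltonianDensity} via the boundary-like contribution. Second, I would derive intertwining identities of the form $\bar\q\q^\dagger = (1\otimes\q^\dagger)(\bar\q\otimes 1)+(\q^\dagger\otimes 1)(1\otimes\bar\q)$ and its partner with $\q\leftrightarrow\bar\q$, by direct inspection on basis vectors using the explicit forms of $\q$, $\bar\q$, $\q^\dagger$, $\bar\q^\dagger$. These identities are exactly what is needed to match the $y^{\ell+2}$ contribution to $\tilde\q(y)\tilde\q(y)^\dagger$ against the $y^{\ell+2}$ part of the first two summands of \eqref{eqn:HamiltonianDensity}, so that the two cancel term by term; the conjugate weight is handled identically.

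After all cancellations, $h(y) = x^2(h+|y|^{2(\ell+2)}\bar h)$. Invoking $\bar h = h$ from \eqref{eqn:EqualityHHBar} and the explicit value of $x$ from \eqref{eqn:DefLocalQy} collapses this to $h(y) = h$, independent of $y$, as claimed. The only non-trivial work lies in establishing the intertwining identities; given the explicit basis-level formulas, this reduces to a finite bookkeeping exercise rather than a conceptual difficulty.
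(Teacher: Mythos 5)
Your proposal is correct and follows essentially the same route as the paper's proof: reduce to $x(\q+y^{\ell+2}\bar\q)$ via \cref{lem:ParityProperty}, establish exactly the relations \eqref{eqn:QQBarRelations} to kill the cross terms in \eqref{eqn:HamiltonianDensity}, and conclude with $\bar h=h$ and the value of $x$. The extra detail you supply (the support argument for $\bar\q^\dagger\q=0$) is a valid justification of what the paper dismisses as a straightforward calculation.
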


We write
\begin{equation}
  \label{eqn:DefQy}
  \Q(y) = \sum_{j=1}^L (-1)^j \q(y)_j
\end{equation}
for the supercharge constructed from $\q(y)$.
The corresponding Hamiltonian $H(y) = \Q(y)\Q(y)^\dagger+\Q(y)^\dagger \Q(y)$ is of the form \eqref{eqn:Hamiltonian}. In the following, we will often write $H$ for $H(y=0)$.
 
\paragraph{Magnetisation.} 
We define the magnetisation operator by the following action on the basis states of $V^L$:
\begin{equation}
  M|m_1,\cdots, m_L\rangle = \left(\frac{\ell L}{2}-\sum_{i=1}^L m_i\right)|m_1,\cdots, m_L\rangle.
\end{equation}
From \eqref{eqn:DefLocalQy}, it follows that
\begin{equation}
  e^{\i \theta M}\q(y)e^{-\i\theta M} = e^{\i \theta (\ell+2)/2}\q(e^{-\i \theta}y).
\end{equation}
In particular, for $y =0$ we have $e^{\i \theta M}\q e^{-\i\theta M} = e^{\i \theta (\ell+2)/2}\q$. Using \eqref{eqn:HamiltonianDensity}, we find that the Hamiltonian density $h$ of $\q$ conserves the magnetisation, $[h,M]=0$. Since $h(y)=h$ for any $y$ according to \cref{prop:HamDensity}, we conclude that the bulk part of the Hamiltonian $H(y)$ conserves the magnetisation.

For $y=0$, the boundary interactions are given by \eqref{eqn:hBND}, which obviously has the property $[h_{\text{\rm \tiny B}},M]=0$. This implies that for $y=0$, the full Hamiltonian conserves the magnetisation: $[H,M]=0$. Conversely, for generic values of $y$ this conservation law is broken by the boundary terms, which are given by
\begin{equation}
  h_{\text{\rm \tiny B}}(y) = \frac{1}{2}\q(y)^\dagger \q(y).
  \label{eqn:BoundaryTerm}
\end{equation}
We have $e^{\i \theta M}h_{\text{\rm \tiny B}}(y)e^{-\i\theta M} = h_{\text{\rm \tiny B}}(e^{-\i \theta}y)$.
It is possible albeit tedious to explicitly compute the matrix elements of this operator with respect to the canonical basis for arbitrary $\ell$. Their evaluation shows that the dependence on $y$ is non-trivial for both diagonal and off-diagonal matrix elements.

\paragraph{Parity and spin reversal.} The Hamiltonian density $h(y)=h$ is parity-invariant. Indeed, we have $PhP=h$ on $V^2$ as a consequence of the parity-invariance of the local supercharge $\q$. Since the boundary interactions are the same at both ends of the chain, we conclude that for any value of $y$ the Hamiltonian of the spin chain is parity-invariant:
\begin{equation}
  PH(y)P=H(y).
\end{equation}
The transformation property \eqref{eqn:ROnQy} of the local supercharge $\q(y)$ under spin reversal imply that the Hamiltonian density $h(y)=h$ is spin-reversal invariant: We have $RhR=h$ on $V^2$. Because of the boundary terms, this spin-reversal invariance does however not extend to the Hamiltonian of the spin chain for generic values of $y$. We have
\begin{equation}
  RH(y)R = H(y^{-1})
\end{equation}
and therefore spin-reversal invariance if and only if $y = \pm 1$.
 
\paragraph{Example.} As an example, we discuss the case $\ell=1$. The action of $\q(y)$ on the basis vectors $|0\rangle,|1\rangle$ is given by\footnote{
Here and in the following, we omit the commas whenever the sequences that label the basis vectors take specific values. For example, we write $\ket{01}=\ket{0,1}=\ket{0}\otimes \ket{1}$.}
\begin{align}
  \q(y)|0\rangle &= x\left(-2y|00\rangle+y^3|11\rangle -y^2(|01\rangle+|10\rangle)\right),\\
  \q(y)|1\rangle &= x\left(|00\rangle-2y^2|11\rangle -y(|01\rangle+|10\rangle)\right).
\end{align}
One checks the independence of the Hamiltonian density on $y$ by an explicit calculation. In order to write the boundary terms $h_{\text{\rm \tiny B}}(y)$ in a convenient way, we set $y=\rho e^{\i\theta}, \, \rho = |y|,$ and express $h_{\text{\rm \tiny B}}(y)$ in terms of the identity matrix and the Pauli matrices. We find
\begin{subequations}
\label{eqn:BoundaryHamXXZ}
  \begin{equation}
  h_{\text{\rm \tiny B}}(y) =  \left(\frac{1+5\rho^2+\rho^4}{4(1-\rho^2+\rho^4)}\right)\bm 1+\sum_{j=1}^3 \lambda_j\sigma^j,
\end{equation}
with 
\begin{equation}
   \lambda_1 = -\frac{\rho \cos \theta}{1+\rho^2},\quad  \lambda_2 = -\frac{\rho \sin \theta}{1+\rho^2},\quad \lambda_3 = -\frac{1}{4}\left(\frac{1-\rho^2}{1+\rho^2}\right).
\end{equation}
\end{subequations}
We conclude that the open XXZ chain at $\Delta=-1/2$ is supersymmetric for the family of boundary interactions \eqref{eqn:BoundaryHamXXZ}, parametrised by $\rho \geqslant 0$ and $\theta$. For $\rho \neq 0,\infty$ the off-diagonal terms of $h_{\text{\rm \tiny B}}$ are non-zero and therefore generalise the diagonal boundary interactions of \eqref{eqn:XXZOpenChainND} found by Fendley and Yang \cite{yang:04}.

\subsection{Boundary terms}
\label{sec:BoundaryConditions}

In this section, we modify the action of $\Q(y)$ defined in \eqref{eqn:DefQy} on the first and last site of the spin chain. This allows us to show that the lattice supersymmetry can be present for  unequal boundary terms at both ends of the spin chain. Both these boundary terms depend on the parameter $y$. Furthermore, each boundary term is individually characterised by an integer label $j=0,\dots,\ell+1$.

The main ingredient of our construction are the vectors
\begin{equation}
  |\xi_k(y)\rangle = x(|\phi(y)\rangle-|\phi(q^{2(k+1)}y)\rangle), \quad k=0,\dots, \ell+1,
  \label{eqn:DefXi}
\end{equation}
where $x$ is defined in \eqref{eqn:DefLocalQy} and $|\phi(y)\rangle$ in \eqref{eqn:DefPhiGauge}. Because of $q^{2(\ell+2)}=1$, we trivially have $|\xi_{\ell+1}(y)\rangle=0$. The action of $\q(y)$ on these vectors is very simple. Indeed, using \eqref{eqn:EqnForPhi} it is not difficult to show that
\begin{equation}
  \q(y)|\xi_k(y)\rangle = |\xi_k(y)\rangle\otimes |\xi_k(y)\rangle,\quad \, k=0,\dots, \ell+1.
  \label{eqn:ActionQXi}
\end{equation}

For any pair $0\leqslant j,k \leqslant \ell+1$, we consider an operator $\Q_{j,k}(y)$ that acts on any $|\psi\rangle \in V^L$ according to
\begin{equation}
  \label{eqn:DefQjk}
  \Q_{j,k}(y)|\psi\rangle = |\xi_j(y)\rangle \otimes |\psi\rangle + (-1)^{L-1} |\psi\rangle \otimes |\xi_k(y)\rangle + \Q(y)|\psi\rangle .
\end{equation}
We note that $\Q_{\ell+1,\ell+1}(y)=\Q(y)$. Using $\Q(y)^2=0$ and \eqref{eqn:ActionQXi}, one checks that
\begin{equation}
  \Q_{j,k}(y)^2=0.
\end{equation}

The corresponding Hamiltonian $H_{j,k}(y) = \Q_{j,k}(y)\Q_{j,k}(y)^\dagger + \Q_{j,k}(y)^\dagger \Q_{j,k}(y)$ is readily evaluated. It is given by a sum of nearest-neighbour interactions and boundary terms that depend on $j$ and $k$:
\begin{equation}
  H_{j,k}(y) = \sum_{i=1}^{L-1} h_{i,i+1} + (h_{\text{\rm \tiny B}}^{(j)}(y))_1+ (h_{\text{\rm \tiny B}}^{(k)}(y))_L.
\end{equation}
Here $h$ is the Hamiltonian density of $\q(y)$ and the boundary terms are given by
\begin{equation}
  h_{\text{\rm \tiny B}}^{(k)}(y) = h_{\text{\rm \tiny B}}(q^{2(k+1)}y). 
\end{equation}

We conclude that all boundary conditions that result from the deformation \eqref{eqn:DefLocalQy} of the local supercharge and a modification of the action of the supercharge on the first and last site of the spin chain are parameterised by a complex number $y$ and two integers $0\leqslant j,k\leqslant \ell+1$. However, not all choices of these parameters lead to unequal spectra.
To see this, we note that the spectrum of $H_{j,k}(y)$ is the same as the spectrum of $e^{\i \theta M} H_{j,k}(y)e^{-\i \theta M} = H_{j,k}(e^{-\i \theta}y)$ (for any real value of $\theta$) and $(H_{j,k}(y))^\ast = H_{J,K}(y^\ast)$ where $J=\ell-j \,(\text{mod}\, \ell+2),\,K=\ell-k \,(\text{mod}\, \ell+2)$. An appropriate choice for $\theta$ allows us to conclude that it is sufficient to 
restrict the parameters to real values for $y$, $j=\lfloor (\ell+1)/2\rfloor , \dots,\ell+1$ and $k=\ell+1$. For example, if $\ell=1$ there are two distinct cases $j=1,k=2$ and $j=k=2$ for any real value of $y$.

\section{Zero-energy states}

\label{sec:E0States}
In this section, we analyse whether the family of Hamiltonians $H_{j,k}(y)$ possesses so-called supersymmetry singlets or zero-energy states. If they exist then the zero-energy states are the ground states of the Hamiltonian.

In \cref{sec:Cohomology}, we recall a few basic facts about the spectrum of supersymmetric Hamiltonians. Moreover, we explain the relation between zero-energy states and the so-called cohomology of the supercharge. As we shall see, the structure of the cohomology depends on whether the parameter $y$ is non-zero or zero. We compute the cohomology for $y\neq 0$ in \cref{sec:GenericY} and for $y=0$ in \cref{sec:Y0}.

\subsection{Spectrum, zero-energy states and cohomology}

\label{sec:Cohomology}

Let us recall the characteristics of the spectrum and the eigenstates of a supersymmetric Hamiltonian $H=\Q\Q^\dagger + \Q^\dagger \Q,\,\Q^2=0$ for a generic supercharge $\Q$ \cite{witten:82}. Clearly, $H$ is a Hermitian operator and therefore diagonalisable. Furthermore, its spectrum is non-negative. Indeed, the Schr\"odinger equation $H|\psi\rangle = E |\psi\rangle$ implies
\begin{equation}
  ||\Q|\psi\rangle||^2+||\Q^\dagger|\psi\rangle||^2 = E ||\psi||^2.
  \label{eqn:SchroedingerProjected}
\end{equation}
Since the left-hand side is non-negative and the norm of an eigenvector $|\psi\rangle$ non-vanishing, we must have $E\geqslant 0$. We call the non-zero solutions of the Schr\"odinger equation with $E>0$ and $E=0$ positive-energy states and zero-energy states, respectively. They differ in their behaviour under the action of the supercharge and its adjoint.

\paragraph{Positive-energy states.} The eigenstates with strictly positive energy $E>0$ organise in \textit{doublets}. They
are given by a pair of non-zero vectors $|\psi\rangle, \,\Q|\psi\rangle$ with $\Q^\dagger|\psi\rangle = 0$.
The two states in the doublet are called superpartners. Since the Hamiltonian commutes with the supercharge, the superpartners have the same eigenvalue $E$.

In our case, the supercharge increases the length of the chain. Hence, the supersymmetry leads to spectral degeneracies for chains with lengths differing by one. \cref{fig:SpectraXXZ} illustrates this spectral degeneracy for the Hamiltonians $H_{j,k}(y)$ with $\ell=1$ and unequal boundary conditions at both ends of the chain.
\begin{figure}[h]
  \centering
  \includegraphics[width=.475\textwidth]{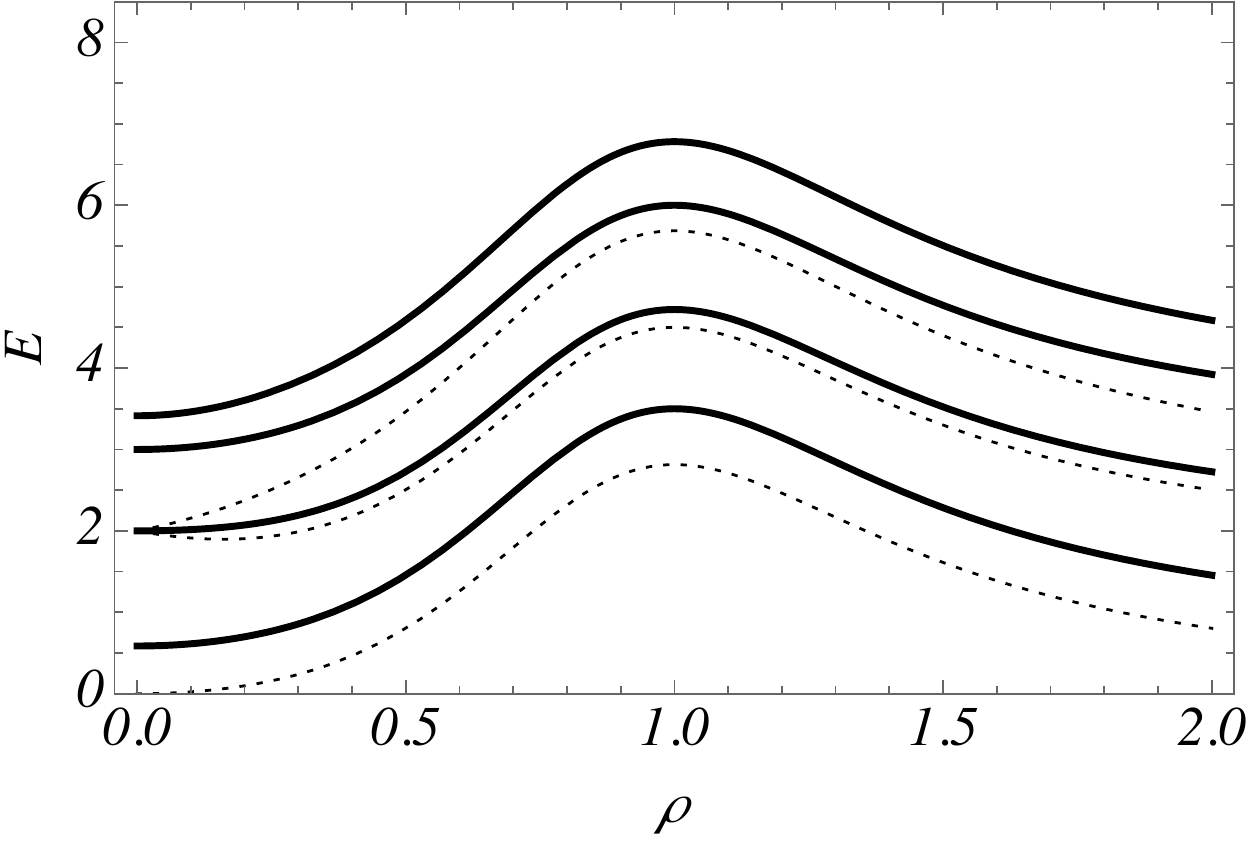}
  \hspace{3mm}
  \includegraphics[width=.475\textwidth]{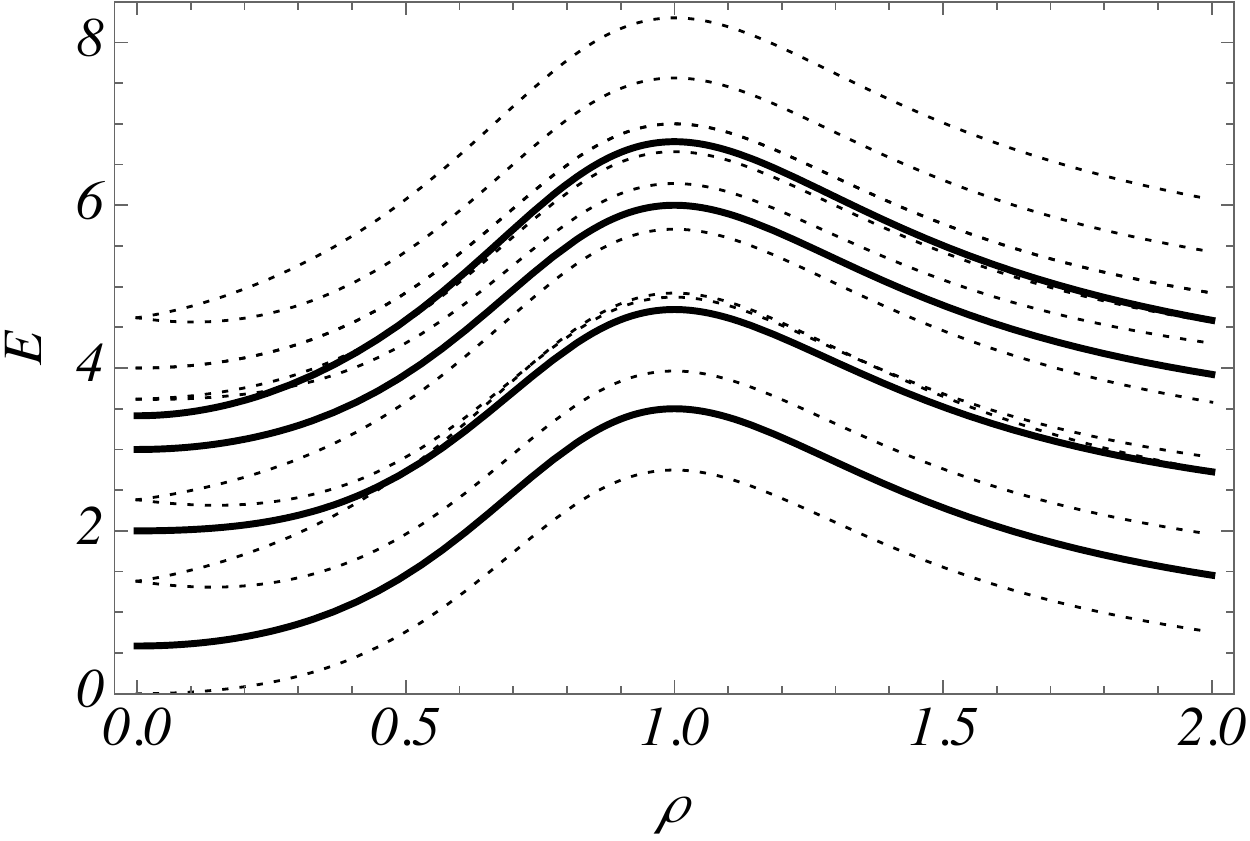}
  \caption{The spectrum of the Hamiltonian $H_{j,k}(y)$ for $\ell=1$ with boundary conditions labelled by $j=1,k=2$ for $L=3$ (left panel) and $L=4$ sites (right panel) as a function of $\rho=|y|$. The solid lines correspond to exact common eigenvalues in the two spectra.} 
  \label{fig:SpectraXXZ}
\end{figure}

\paragraph{Zero-energy states.} In the following, we focus on the solutions of the Schr\"odinger equation with $E=0$. Since the spectrum of the Hamiltonian is non-negative, any non-zero solution of $H|\psi\rangle = 0$ is automatically a ground state of the system. The existence of these zero-energy states is, according to  \eqref{eqn:SchroedingerProjected}, equivalent to the existence of non-zero solutions of the system of equations
\begin{equation}
  \label{eqn:ZeroEnergyStateEqns}
  \Q|\psi\rangle = 0, \quad \Q^\dagger |\psi\rangle = 0.
\end{equation}
These equations imply that the zero-energy states are \textit{singlets} (as stated in the introduction) in the sense that no other eigenstates of the Hamiltonian can be obtained by acting on them with the supercharge or its adjoint.

The first equation of \eqref{eqn:ZeroEnergyStateEqns} requires that a zero-energy state be in the kernel of the supercharge. We call the elements of $\ker \Q$ \textit{cocycles}. Since $\Q^2=0$, the kernel contains all states that are in the image of $\Q$. We call the elements of $\text{im}\, \Q$ \textit{coboundaries}. The second equation of \eqref{eqn:ZeroEnergyStateEqns} leads to the following property of zero-energy states:

\begin{lemma}
  \label{lem:E0NotCoboundary}
  A zero-energy state is not a coboundary.
  \begin{proof}
    Let $|\psi\rangle$ be a zero-energy state and assume that $|\psi\rangle= \Q|\phi\rangle$ for some vector $|\phi\rangle$. Then its square norm is $\langle \psi|\psi\rangle = \langle \psi|(\Q|\phi\rangle) = \left(\langle \phi|(\Q^\dagger |\psi\rangle)\right)^\ast = 0$ where we used \eqref{eqn:ZeroEnergyStateEqns}. Hence, $|\psi\rangle=0$, which is a contradiction.
  \end{proof}
\end{lemma}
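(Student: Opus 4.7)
The plan is to argue by contradiction directly from the two defining equations of a zero-energy state, namely $\Q|\psi\rangle = 0$ and $\Q^\dagger |\psi\rangle = 0$, and the standard property of the Hermitian adjoint.

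Suppose $|\psi\rangle$ is a zero-energy state that is also a coboundary, so that $|\psi\rangle = \Q|\phi\rangle$ for some vector $|\phi\rangle$ (in the appropriate Hilbert space $V^{L-1}$, since $\Q$ is length-increasing). The first step is to evaluate the square norm of $|\psi\rangle$ by substituting this expression into one of the slots. By the definition of the adjoint recalled in the excerpt, namely $\langle \psi|(\Q |\phi\rangle) = \bigl(\langle \phi|(\Q^\dagger |\psi\rangle)\bigr)^\ast$, this gives
\begin{equation*}
\langle \psi|\psi\rangle = \langle \psi | \Q|\phi\rangle = \bigl(\langle \phi | \Q^\dagger |\psi\rangle\bigr)^\ast.
\end{equation*}

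The second step is to use the second zero-energy equation $\Q^\dagger|\psi\rangle = 0$ to conclude that the right-hand side vanishes, so that $\|\psi\|^2 = 0$. By positive-definiteness of the scalar product this forces $|\psi\rangle = 0$, contradicting the assumption that $|\psi\rangle$ is a (nonzero) eigenstate.

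There is essentially no obstacle here: once one has both equations in \eqref{eqn:ZeroEnergyStateEqns} and the defining relation of $\Q^\dagger$, the argument is a one-line calculation. The only point worth being careful about is simply to keep track of which Hilbert space each vector lives in, since $\Q$ changes the length of the chain, but this does not affect the logic.
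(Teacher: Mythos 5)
Your proof is correct and follows exactly the same route as the paper's: substitute $|\psi\rangle=\Q|\phi\rangle$ into the square norm, move $\Q$ to the adjoint side, and use $\Q^\dagger|\psi\rangle=0$ to conclude $\|\psi\|^2=0$, contradicting that an eigenstate is nonzero. Nothing to add.
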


\paragraph{Cohomology.} \cref{lem:E0NotCoboundary} suggests that we consider the kernel of the supercharge modulo its image in order to analyse the existence and the properties of the zero-energy states. We define therefore the \textit{cohomology} of the supercharge
\begin{equation}
  \mathcal H^\bullet(\Q) = \bigoplus_{L=1}^\infty \mathcal H^L(\Q),
\end{equation}
where $\mathcal H^1(\Q) = \text{ker}\{ \Q:V^1\to V^{2}\}$ and
\begin{equation}
   \mathcal H^L(\Q) = \frac{\text{ker}\{ \Q:V^L\to V^{L+1}\}}{\text{im}\{ \Q:V^{L-1}\to V^{L}\}}, \quad \text{for}\quad L\geqslant 2.
\end{equation}
The elements of $\mathcal H^L(\Q)$ are equivalence classes or cohomology classes. Any cohomology class of $\mathcal H^L(\Q)$ can be represented by a cocycle $|\psi\rangle\in V^L$, which is called \textit{its} \textit{representative}. Any two cocycles of $V^L$ differing by a coboundary represent the same element of $\mathcal H^L(\Q)$. We denote the equivalence class of a cocycle $|\psi\rangle \in V^L$ by $[|\psi\rangle]$. Hence $[|\psi\rangle + \Q|\phi\rangle] = [|\psi\rangle]$ for all $|\phi\rangle \in V^{L-1}$. If $\mathcal H^L(\Q)=0$ for each $L\geqslant 1$ then we call the cohomology trivial. This is the case if and only if all cocycles are coboundaries. 

It can be shown \cite{witten:82} that $\mathcal H^L(\Q)$ is isomorphic to the subspace of $V^L$ that is spanned by the zero-energy states of the Hamiltonian $H$ for a chain of length $L$. Furthermore, if $|\phi\rangle \in V^L$ is a representative of a non-zero element of $\mathcal H^L(\Q)$, then there is a state $|\phi'\rangle \in V^{L-1}$ such that
\begin{equation}
  |\psi\rangle = |\phi\rangle + \Q|\phi'\rangle
  \label{eqn:RepresentativeE0State}
\end{equation}
is a zero-energy state \cite{witten:82}. In the following, we use this connection with cohomology classes and their representatives to investigate some properties of the zero-energy states of our models.

\subsection{The case \texorpdfstring{$y\neq 0$}{YNot0}}

\label{sec:GenericY}
In this and the following subsection, we explicitly compute $\mathcal H^L(\Q_{j,k}(y))$. This computation allows us to characterise the space of zero-energy states of the Hamiltonian $H_{j,k}(y)$ as a function of the parameter $y$, the integer labels $j,k$ and the system size $L$.

Here, we consider the case $y\neq 0$. We prove the following theorem:
\begin{theorem}
  \label{thm:CohomGenericY}
  For $y\neq 0$ and each $j,k=0,\dots,\ell+1$, the cohomology $\mathcal{H}^\bullet (\Q_{j,k}(y))$ is trivial.
\end{theorem}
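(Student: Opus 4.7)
The plan is to construct an explicit contracting homotopy for $\Q_{j,k}(y)$ when $y\neq 0$; once such a homotopy exists, every cocycle is exhibited as a coboundary and the cohomology is trivial.

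The first step is to simplify the action of $\Q_{j,k}(y)$ by unfolding its gauge content. Using $\q(y) = x(\q + y^{\ell+2}\bar{\q} + \q_{\phi(y)})$ and noting that the gauge contribution $\sum_{i=1}^L(-1)^i(\q_{\phi(y)})_i|\psi\rangle$ telescopes---the ``right-of-site-$i$'' insertion of $|\phi(y)\rangle$ cancels against the ``left-of-site-$(i+1)$'' insertion thanks to the alternating signs---only boundary contributions survive:
\[
  \sum_{i=1}^L(-1)^i(\q_{\phi(y)})_i|\psi\rangle = -|\phi(y)\rangle\otimes|\psi\rangle + (-1)^L|\psi\rangle\otimes|\phi(y)\rangle.
\]
Combining this with the explicit insertions of $|\xi_j(y)\rangle$ and $|\xi_k(y)\rangle$ and using $|\xi_k(y)\rangle - x|\phi(y)\rangle = -x|\phi(q^{2(k+1)}y)\rangle$, one reaches the clean form
\[
  \Q_{j,k}(y)|\psi\rangle = x\bigl(-|\phi(\tilde y_j)\rangle\otimes|\psi\rangle - (-1)^{L-1}|\psi\rangle\otimes|\phi(\tilde y_k)\rangle + (\Q_0 + y^{\ell+2}\bar{\Q}_0)|\psi\rangle\bigr),
\]
with $\tilde y_j = q^{2(j+1)}y$ and $\Q_0,\bar{\Q}_0$ the alternating sums of the pure operators $\q$ and $\bar{\q}$. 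Crucially, for $y\neq 0$ the effective boundary vectors $|\phi(\tilde y_j)\rangle$ and $|\phi(\tilde y_k)\rangle$ are non-vanishing for every admissible $j,k$ (including $j=\ell+1$, where $\tilde y_j=y$).

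Next I would construct a homotopy $K_L:V^L\to V^{L-1}$ as a combination of contractions against suitable covectors in $V^*$, acting on the first and last tensor slots. The simplest ansatz, $K_L = \langle\eta(y)|\otimes\mathrm{id}_{V^{L-1}}$, is fixed by requiring $\langle\eta(y)|\phi(\tilde y_j)\rangle = -x^{-1}$, which ensures that the left-boundary term of $\Q_{j,k}(y)K_L + K_{L+1}\Q_{j,k}(y)$ returns $\mathrm{id}_{V^L}$. The remaining terms to neutralise are the cross-terms generated by $\q_1$ and $\bar{\q}_1$ feeding into the contraction, the right-boundary insertion $|\psi\rangle\otimes|\phi(\tilde y_k)\rangle$ propagating through the homotopy, and the bulk action of $\Q_0+y^{\ell+2}\bar{\Q}_0$ on the first slot. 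The freedom in $\langle\eta(y)|\in V^*$ (which is $(\ell+1)$-dimensional) and the possibility of adding a complementary right-contraction term, with a sign dictated by the parity of $L$, should suffice to engineer
\[
  \Q_{j,k}(y)K_L + K_{L+1}\Q_{j,k}(y) = N_L\,\mathrm{id}_{V^L},
\]
with $N_L\neq 0$, from which $|\psi\rangle = N_L^{-1}\Q_{j,k}(y)K_L|\psi\rangle$ whenever $\Q_{j,k}(y)|\psi\rangle=0$.

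The main obstacle will be the interplay between the left and right boundary terms: a one-sided contraction cannot cancel the right-boundary insertion $|\psi\rangle\otimes|\phi(\tilde y_k)\rangle$, so $K_L$ must combine contractions from both ends with a parity-dependent sign, and the bulk $\Q_0+y^{\ell+2}\bar{\Q}_0$ must be handled using the intertwining relations \eqref{eqn:QQBarRelations}. An alternative that may avoid these complications is an induction on $L$: decompose a cocycle $|\psi\rangle\in V^L$ according to its first tensor factor, use the explicit simplified form of $\Q_{j,k}(y)$ to solve recursively for the preimage $|\phi\rangle\in V^{L-1}$, and verify the base case $L=1$ by direct inspection using the non-vanishing of $|\phi(\tilde y_j)\rangle$ for $y\neq 0$. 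In either approach, a separate but straightforward argument is required to handle the degenerate cases where one of the $|\xi|$ vectors vanishes, namely $j=\ell+1$ or $k=\ell+1$.
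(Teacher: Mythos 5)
Your overall strategy --- exhibit a contracting homotopy $s$ with $s\,\Q_{j,k}(y)+\Q_{j,k}(y)\,s=1$ and conclude via \cref{lem:ContractingHomotopy} that every cocycle is a coboundary --- is exactly the paper's, and your telescoping of the gauge part of $\Q(y)$ into the two boundary insertions $-x|\phi(\tilde y_j)\rangle\otimes|\psi\rangle$ and $-(-1)^{L-1}x|\psi\rangle\otimes|\phi(\tilde y_k)\rangle$ is correct and even clarifying. However, the construction of the homotopy itself is never carried out: the decisive steps are deferred to ``the freedom in $\langle\eta(y)|$ \dots should suffice''. The single normalisation $\langle\eta(y)|\phi(\tilde y_j)\rangle=-x^{-1}$ only fixes the left-boundary contribution; it gives no control over how the contraction interacts with the bulk operator $\Q_0+y^{\ell+2}\bar\Q_0$ acting on the first slot, and the relations \eqref{eqn:QQBarRelations} (which concern products like $\q\bar\q^\dagger$) are not the tool that closes that computation. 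As written, the argument has a genuine gap.

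The missing idea is to drop the split into ``bulk plus boundary insertions'' and instead exploit the vectors $|\xi_0(y)\rangle,\dots,|\xi_\ell(y)\rangle$ of \eqref{eqn:DefXi}: for $y\neq 0$ they form a basis of $V$ (\cref{lem:BasisVectors}), and they satisfy the group-like property $\q(y)|\xi_m(y)\rangle=|\xi_m(y)\rangle\otimes|\xi_m(y)\rangle$. Writing $|\psi\rangle=\sum_{m}|\xi_m(y)\rangle\otimes|\psi_m\rangle$ and setting $s_j|\psi\rangle=|\psi_j\rangle$ --- i.e.\ contracting the first slot against the \emph{dual basis} covector of $|\xi_j(y)\rangle$, which is pinned down by $\ell+1$ conditions $\langle\eta_j|\xi_m(y)\rangle=\delta_{jm}$, not by your single condition --- one finds $(s_j\Q(y)+\Q(y)s_j)|\psi\rangle=-|\xi_j(y)\rangle\otimes|\psi_j\rangle$ in one line, and the boundary insertions in \eqref{eqn:DefQjk} cancel this exactly, giving $s_j\Q_{j,k}(y)+\Q_{j,k}(y)s_j=1$. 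In particular, the ``main obstacle'' you identify is illusory: the two copies of $|\psi_j\rangle\otimes|\xi_k(y)\rangle$ produced by $s_j\Q_{j,k}(y)$ and $\Q_{j,k}(y)s_j$ carry the signs $(-1)^{L-1}$ and $(-1)^{L-2}$ (the length drops by one under $s_j$) and cancel on their own, so no right-contraction is needed. The degenerate case $j=\ell+1$, where $|\xi_{\ell+1}(y)\rangle=0$, is handled by $s_{\ell+1}=-\sum_{j=0}^{\ell}s_j$, and $L=1$ by checking directly that $\ker\{\Q_{j,k}(y):V\to V^2\}=0$; you flag both but prove neither.
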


This theorem implies that for $y\neq 0$ and any length of the chain $L$ the Hamiltonian $H_{j,k}(y)$ does not possess zero-energy states. Thus its spectrum is strictly positive.

The proof is based on two lemmas. The first lemma deals with a mapping $s$ that is akin to a so-called \textit{contracting homotopy} \cite{loday:92}.

\begin{lemma}
\label{lem:ContractingHomotopy}
  Let $\Q$ be an arbitrary supercharge. Suppose that for each $L\geqslant 2$ there is a mapping $s:V^L\to V^{L-1}$ such that
  \begin{equation}
s\Q+\Q s = 1.
\label{eqn:ContractingHomotopy}
\end{equation}
  Then for each $L\geqslant 2$, we have $\mathcal H^L(\Q) = 0.$
  \begin{proof}
    We show that any cocycle $|\psi\rangle \in V^L$ is a coboundary.  Indeed, applying \eqref{eqn:ContractingHomotopy} to $|\psi\rangle$ we obtain
    \begin{equation}
      |\psi\rangle = (s\Q + \Q s)|\psi\rangle = \Q(s|\psi\rangle).
    \end{equation}
    Hence, $\mathcal H^L(\Q) =0$.
  \end{proof}
\end{lemma}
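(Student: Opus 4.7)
The plan is to prove the statement by showing directly that, under the hypothesis $s\Q + \Q s = 1$, every cocycle is a coboundary; the quotient defining $\mathcal H^L(\Q)$ then collapses to zero. This is the standard contracting-homotopy argument from homological algebra, and it amounts to a one-line manipulation of the hypothesised identity. No nontrivial input about the particular form of $\Q$ is needed beyond the assumption that $\Q$ is a supercharge (so that $\Q^2 = 0$ and the quotient $\mathcal H^L(\Q)$ is well-defined).

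Concretely, I would fix $L \geqslant 2$ and pick an arbitrary representative $|\psi\rangle \in V^L$ of a cohomology class in $\mathcal H^L(\Q)$, i.e.\ a cocycle satisfying $\Q|\psi\rangle = 0$. Applying the identity $s\Q + \Q s = 1$ to $|\psi\rangle$ yields
\begin{equation}
|\psi\rangle = s\bigl(\Q|\psi\rangle\bigr) + \Q\bigl(s|\psi\rangle\bigr) = \Q\bigl(s|\psi\rangle\bigr),
\end{equation}
where the first term vanishes because $|\psi\rangle$ is a cocycle. Since by hypothesis $s|\psi\rangle$ lies in $V^{L-1}$, this displays $|\psi\rangle$ as the image under $\Q$ of a vector in $V^{L-1}$, that is, as a coboundary. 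Hence $\ker\{\Q:V^L\to V^{L+1}\} \subseteq \operatorname{im}\{\Q:V^{L-1}\to V^L\}$; the reverse inclusion is automatic from $\Q^2 = 0$, so the numerator and denominator in the definition of $\mathcal H^L(\Q)$ coincide, giving $\mathcal H^L(\Q) = 0$.

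There is essentially no obstacle to overcome: the only subtlety is bookkeeping of degrees, namely that the hypothesis supplies a map $s:V^L\to V^{L-1}$ for every $L\geqslant 2$, so that $s|\psi\rangle$ sits in the correct space $V^{L-1}$ and the resulting coboundary is a legitimate element of $\operatorname{im}\{\Q:V^{L-1}\to V^L\}$. This is also why the statement is restricted to $L\geqslant 2$: for $L=1$ there is no image term in the definition of $\mathcal H^1(\Q)$ and the homotopy argument would have nothing to quotient against.
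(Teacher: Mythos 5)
Your proposal is correct and is essentially identical to the paper's own proof: you apply the identity $s\Q + \Q s = 1$ to a cocycle, observe that the $s\Q$ term vanishes, and conclude that the cocycle $|\psi\rangle = \Q(s|\psi\rangle)$ is a coboundary. The extra remarks on degree bookkeeping and on the trivial reverse inclusion are fine but add nothing beyond the paper's argument.
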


Our aim is to construct such a mapping $s$ for the supercharge $\Q_{j,k}(y)$. To this end, we use the vectors $|\xi_0(y)\rangle,\dots,|\xi_\ell(y)\rangle$, defined in \eqref{eqn:DefXi}. The second lemma needed for our proof of \cref{thm:CohomGenericY} establishes that for non-vanishing $y$ these vectors span the Hilbert space $V$ of a single spin:

\begin{lemma}
  \label{lem:BasisVectors}
   For $y\neq 0$ the vectors $|\xi_0(y)\rangle,\dots,|\xi_\ell(y)\rangle$ constitute a basis of $V$.
   \begin{proof}
     The components of $|\xi_n(y)\rangle$ with respect to the canonical basis are given by
     \begin{equation}
       \Xi_{mn} = \langle m |\xi_n(y)\rangle = x \frac{y^{m+1}}{\sqrt{\{m+1\}}}(1-q^{2(m+1)(n+1)}).
     \end{equation}
 To prove the lemma, it is sufficient to show that the matrix $\Xi = (\Xi_{mn})_{m,n=0}^\ell$ is invertible. This is indeed the case. One checks that the entries of the inverse matrix are given by 
      \begin{equation}
    \left(\Xi^{-1}\right)_{mn} = - \frac{\sqrt{\{n+1\}}}{(\ell+2)xy^{n+1}}q^{-2(m+1)(n+1)}.
  \end{equation}
This ends the proof.
   \end{proof}
\end{lemma}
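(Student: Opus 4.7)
The plan is to verify directly that the explicit formula proposed for $\Xi^{-1}$ is indeed a right inverse of the matrix $\Xi$ whose entries are the components of $|\xi_n(y)\rangle$ in the canonical basis. Since $\Xi$ is a square $(\ell+1)\times(\ell+1)$ matrix, producing a right inverse implies invertibility, and hence the linear independence of $|\xi_0(y)\rangle,\dots,|\xi_\ell(y)\rangle$ in the $(\ell+1)$-dimensional space $V$.

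First I would unpack the definitions \eqref{eqn:DefXi} and \eqref{eqn:DefPhiGauge} to recover the stated formula for $\Xi_{mn}$. A useful preliminary observation is that $\Xi$ factors as the product of a diagonal matrix with entries $xy^{m+1}/\sqrt{\{m+1\}}$ and a ``Fourier-type'' matrix $A_{mn} = 1 - q^{2(m+1)(n+1)}$. The diagonal factor is non-singular for $y\neq 0$ since $\{r\}$ can vanish only when $(\ell+2)\mid r$, a condition excluded in the range $r = m+1 \in \{1,\dots,\ell+1\}$ because $q$ is a primitive $2(\ell+2)$-th root of unity. The task thus reduces to showing that $A$ is invertible with the prescribed inverse.

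The core calculation is the matrix product $\sum_{k=0}^{\ell}\Xi_{mk}(\Xi^{-1})_{kn}$. Setting $\omega = q^2$, a primitive $(\ell+2)$-th root of unity, and expanding $1 - q^{2(m+1)(k+1)}$ yields two geometric sums in $\omega^{k+1}$, with exponents $a = -(n+1)$ and $a = m-n$ respectively. The key input is the standard identity $\sum_{k=0}^{\ell+1}\omega^{(k+1)a}=0$ for $a\not\equiv 0 \pmod{\ell+2}$, which allows one to replace the truncated sum over $k = 0, \dots, \ell$ by minus the missing term $\omega^{(\ell+2)a} = 1$, i.e., by $-1$. The non-vanishing condition holds for $a = -(n+1)$ always, and for $a = m - n$ whenever $m\neq n$, since both quantities lie strictly inside a single fundamental period of $\omega$ in the relevant range. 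For $m\neq n$ the two contributions cancel; in the diagonal case $m = n$, the second sum degenerates to $\ell+1$ instead of $-1$, producing a factor $\ell+2$ that cancels the normalisation in $(\Xi^{-1})_{kn}$ and yields $1$.

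No genuinely hard step is involved; the argument is a direct verification. The only aspect requiring care is the bookkeeping of residues modulo $\ell+2$, where the fact that $m+1$, $n+1$ and (for $m\neq n$) $m-n$ all lie in a single fundamental period guarantees that the geometric sums are in the ``generic'' regime throughout, so the only subtlety is isolating the diagonal term.
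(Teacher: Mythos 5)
Your proposal is correct and follows essentially the same route as the paper, which simply asserts the explicit inverse matrix and leaves the verification to the reader; your geometric-sum computation over the $(\ell+2)$-th roots of unity is precisely that verification, carried out correctly (including the observation that $\{m+1\}\neq 0$ for $m=0,\dots,\ell$ so the diagonal prefactor is non-singular).
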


\begin{proof}[Proof of \cref{thm:CohomGenericY}] We now prove that if $y\neq 0$ then $\mathcal H^L(\Q_{j,k}(y))=0$ 
for each $L\geqslant 1$ and each $j,k=0,\dots,\ell+1$. For $L=1$, the proof is trivial: one readily checks that $\ker\{\Q_{j,k}(y):V\to V^2\}=0$, using \cref{lem:BasisVectors}.
Hence, we focus on $L\geqslant 2$. The proof is based on the construction of
a mapping $s_j$ that obeys \eqref{eqn:ContractingHomotopy} for each $j=0,\dots,\ell+1$. We separately consider the cases $0\leqslant j\leqslant \ell$ and $j=\ell+1$.

Let us first consider $0\leqslant j\leqslant \ell$. It follows from \cref{lem:BasisVectors} and standard properties of the tensor product that for $y\neq 0$, every vector $|\psi\rangle\in V^L$ can be written as
\begin{equation}
  |\psi\rangle = \sum_{m=0}^\ell |\xi_m(y)\rangle \otimes |\psi_m\rangle
\end{equation}
with unique vectors $|\psi_0\rangle,\dots,|\psi_\ell\rangle \in V^{L-1}$. We define the mapping $s_j$ by
\begin{equation}
  s_j |\psi\rangle = |\psi_j\rangle.
  \label{eqn:DefSj}
\end{equation}
Using action \eqref{eqn:ActionQXi} of the local supercharge on the special basis vectors, it is easy to see that $(s_j \Q(y) + \Q(y) s_j)|\psi\rangle = -|\xi_j(y)\rangle\otimes|\psi_j\rangle = -|\xi_j(y)\rangle\otimes s_j|\psi\rangle$. We combine this identity with the definition of the supercharge \eqref{eqn:DefQjk} and find that
\begin{equation}
  \label{eqn:CommQs}
  s_j \Q_{j,k}(y) + \Q_{j,k}(y) s_j= 1
\end{equation}
for each $j=0,\dots,\ell$ and $k=0,\dots,\ell+1$.

Second, for $j=\ell+1$, we define
\begin{equation}
  s_{\ell+1}=-\sum_{j=0}^\ell s_j.
\end{equation}
Using the definition of $s_j$, it is easy to see that \eqref{eqn:CommQs} holds for $j=\ell+1$ and $k=0,\dots,\ell+1$, too.

In both cases, it follows from \cref{lem:ContractingHomotopy} $\mathcal H^L(\Q_{j,k}(y))=0$ for any $L\geqslant 2$ and each $j,k=0,\dots,\ell+1$. This ends the proof of the theorem.
\end{proof}
We notice that the proof only relies on the existence of a basis $|\xi_0(y)\rangle,\dots,|\xi_\ell(y)\rangle$ of $V$ with the property $\q(y)|\xi_k(y)\rangle=|\xi_k(y)\rangle\otimes |\xi_k(y)\rangle$ for each $k=0,\dots,\ell$. This is the case for a variety of other physically-relevant spin chains. An example is the quantum spin $1/2$ XYZ chain along a special line of couplings \cite{hagendorf:13}.

\subsection{The case \texorpdfstring{$y=0$}{YIs0}}
\label{sec:Y0}

The proof of \cref{thm:CohomGenericY} is not generalisable to the cases $y=0$ and $y \to \infty$. They have to be separately treated. In this subsection, we consider the case $y = 0$. For this value of $y$, the local supercharge reduces to $\q(y=0) = \q$, defined in \eqref{eqn:TrigSupercharge}. The supercharge $\Q_{j,k}(y=0)$ is independent of the indices $j,k$ and we simply denote it by $\Q$. Its cohomology is non-trivial. The results presented here below can easily be modified in order to cover the case $y\to \infty$. In this case, the local supercharge reduces to $\q(y\to \infty)=\bar \q$, which is the image of $\q$ under spin reversal. In fact, the representatives of $\mathcal H^L(\bar \Q)$ are simply obtained by applying the spin-reversal operator to the representatives of $\mathcal H^L(\Q)$.

The main result of this subsection is the following theorem:
\begin{theorem}
  \label{thm:OCCohom}
  $\mathcal H^L(\mathfrak Q)$ is spanned by the cohomology class of the state\footnote{We abbreviate the tensor product $|\chit\rangle \otimes \cdots \otimes |\chit\rangle$ by $|\chit \cdots \chit\rangle$ in order to simplify the notation.}
    \begin{subequations}
  \label{eqn:RepE0States}
  \begin{equation}
    |\underset{n\,\text{times}}{\underbrace{\chit\cdots\chit }}\rangle\quad \text{if}\quad L=2n,
  \end{equation}
and
   \begin{equation}
    |0\rangle\otimes |\hspace{-.12cm}\underset{n-1\,\text{times}}{\underbrace{\chit\cdots\chit }}\hspace{-.12cm}\rangle  \quad \text{if}\quad L=2n-1,
  \end{equation}
  \end{subequations}
  where $|\chit\rangle\in V^2$ is the state defined in \eqref{eqn:DefChi} and $n$ a positive integer.
\end{theorem}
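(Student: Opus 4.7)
The plan is to prove the theorem in three steps: verify that the stated states are cocycles, bound $\dim\mathcal H^L(\Q)$ by one, and show that the cocycles are not coboundaries.

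Step 1 reduces to the local identity $(1\otimes\q-\q\otimes 1)|\chit\rangle=0$ in $V^3$. Substituting $\chit_m=1/\{m+1\}$ from \eqref{eqn:DefChi} and the coefficients $a_{m,k}$ of \eqref{eqn:Defamk}, and invoking the root-of-unity relation $\{\ell+2-m\}=\{m\}$ implied by \eqref{eqn:RootOfUnity}, reduces the identity to a short check of equal coefficients in the canonical basis of $V^3$. Once it is established, expanding $\Q=\sum_{j=1}^L(-1)^j\q_j$ and grouping the terms $-\q_{2k-1}+\q_{2k}$ acting on each individual $|\chit\rangle$ factor immediately yields $\Q|\chit\cdots\chit\rangle=0$ for $L=2n$. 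For odd $L=2n-1$, the leading term $-\q_1$ vanishes on $|0\rangle$ because $\q|0\rangle=0$, and the remaining terms pair up across each of the subsequent $|\chit\rangle$ factors.

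The main obstacle is Step 2. I would attempt an induction on $L$ via the insertion map $\mu:V^{L-2}\to V^L$, $\mu|\psi\rangle=|\psi\rangle\otimes|\chit\rangle$, which is a chain map as a consequence of Step 1. The induced map $\mu_*:\mathcal H^{L-2}(\Q)\to\mathcal H^L(\Q)$ sends the claimed representative for $L-2$ to that for $L$, and combined with the base cases $\dim\mathcal H^1=\dim\mathcal H^2=1$ (checked directly using $\q|0\rangle=0$) the bound $\dim\mathcal H^L\leqslant 1$ would follow once $\mu_*$ is shown to be surjective. Establishing this surjectivity requires an algorithmic reduction of any cocycle in $V^L$, modulo coboundaries, to the form $|\psi\rangle\otimes|\chit\rangle$, which is the hard combinatorial core of the argument. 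A cleaner alternative route is to observe that the dual algebra $V^*$, with product induced by $\q$, is isomorphic as a non-unital algebra to the augmentation ideal $xR$ of $R=\mathbb C[x]/(x^{\ell+2})$ via $e^n\leftrightarrow x^{n+1}/\sqrt{\{n+1\}}$, where $e^n$ denotes the dual basis of $V^*$. Under this identification the complex $(V^\bullet,\Q)$ is dual to the reduced bar complex of $R$, whose homology is $\mathrm{Tor}^R_\bullet(\mathbb C,\mathbb C)$; the two-periodic minimal free resolution of the trivial $R$-module then gives $\dim\mathrm{Tor}^R_L(\mathbb C,\mathbb C)=1$ for every $L\geqslant 1$, yielding the required bound.

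For Step 3, I would exhibit a linear functional on $V^L$ that vanishes on $\mathrm{im}\,\Q$ and is non-zero on the proposed representative. A natural candidate is $\langle\omega_L|$ for a harmonic state $|\omega_L\rangle\in\ker\Q^\dagger$: by \cref{lem:E0NotCoboundary}, such a state is not a coboundary, and once $\dim\mathcal H^L(\Q)\leqslant 1$ is known from Step 2 it must span $\mathcal H^L(\Q)$, so the overlap $\langle\omega_L|\chit\cdots\chit\rangle$ is non-zero exactly when the proposed representative is non-trivial in cohomology. Existence of $|\omega_L\rangle$ in the appropriate magnetisation sector follows from a Perron--Frobenius argument applied to $H=\Q\Q^\dagger+\Q^\dagger\Q$ in a basis in which the off-diagonal entries of $-H$ (up to a constant shift) are non-negative, which simultaneously forces the overlap to be strictly positive by positivity of the Perron eigenvector.
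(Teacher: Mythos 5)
Your three-step architecture is sound, and the interesting part is Step 2. Your \emph{primary} route there --- induction through the insertion map $|\psi\rangle\mapsto|\psi\rangle\otimes|\chit\rangle$ --- is exactly the paper's proof, and the surjectivity you set aside as ``the hard combinatorial core'' is precisely the content of \cref{lem:technical,lem:surjectivity} (with injectivity handled in \cref{lem:injectivity}); deferring that step means deferring essentially all of the difficulty. Your \emph{alternative} route, however, is a genuinely different and valid argument. The rescaling $e^n\leftrightarrow x^{n+1}/\sqrt{\{n+1\}}$ does turn the product on $V^*$ dual to $\q$ into multiplication inside the augmentation ideal of $R=\mathbb C[x]/(x^{\ell+2})$ --- the normalisation \eqref{eqn:Defamk} is exactly what makes the structure constants collapse to $1$ --- so $(V^{\otimes\bullet},\Q)$ is dual to the reduced bar complex of $R$ with trivial coefficients, and the two-periodic minimal free resolution of $\mathbb C$ over $R$ gives $\dim\mathcal H^L(\Q)=\dim\mathrm{Tor}^R_L(\mathbb C,\mathbb C)=1$ for every $L\geqslant 1$. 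This replaces the paper's explicit manipulation of representatives by a standard homological computation and even yields the exact dimension rather than an upper bound; what it does not provide is a preferred representative, which is why your Steps 1 and 3 remain necessary. If you commit to this route, the duality between the bar differential and $\Q=\sum_j(-1)^j\q_j$ and the comparison between the bar and minimal resolutions should be written out explicitly.

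The genuine gap is Step 3. Perron--Frobenius applied to $H$ in a fixed magnetisation sector produces a ground state of that sector with positive components, but nothing forces its eigenvalue to be $0$: to know that the Perron vector is harmonic --- hence annihilated by $\Q^\dagger$, hence a functional vanishing on $\mathrm{im}\,\Q$ --- you would already need to know that a zero-energy state exists \emph{in that sector}, which is essentially the statement being proved. Conversely, the harmonic state whose existence follows from $\dim\mathcal H^L(\Q)=1$ is not a priori located in the magnetisation sector of $|\chit\cdots\chit\rangle$, so positivity of that sector's Perron vector says nothing about the relevant overlap. The repair is elementary and avoids Perron--Frobenius entirely: by \eqref{eqn:ActionAdjointTrigSC} the basis vector $\ket{0\ell\cdots 0\ell}$ (respectively $\ket{0\ell\cdots 0\ell 0}$ for odd $L$) is annihilated by $\Q^\dagger$, since every adjacent pair of labels sums to $\ell$; hence the corresponding bra vanishes on $\mathrm{im}\,\Q$, while its overlap with the proposed representative equals $\chit_0^{\,n}=1$ (respectively $\chit_\ell^{\,n-1}=\{\ell+1\}^{-(n-1)}$), which is non-zero. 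This is exactly the witness the paper uses in \cref{prop:HomRepr}. With that substitution, and with Step 2 carried by the Tor computation rather than by the deferred surjectivity, your outline assembles into a complete proof of \cref{thm:OCCohom} by a route genuinely different from the paper's.
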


This theorem implies that for $y=0$ the spin-chain Hamiltonian possesses a zero-energy state for each length $L$. The state is unique up to normalisation. This result is expected from the above-mentioned mapping between the spin-chain models at $y=0$ and the $M_\ell$ models of supersymmetric fermions on open intervals. For these fermion models, the dimension of the space of zero-energy states was indeed shown to be one-dimensional in previous works \cite{huijse:10_1,huijse:15}. To our knowledge, the structure of the corresponding cohomology has however so far remained undetermined and unexploited. \cref{thm:OCCohom} provides this structure for the spin-chain models. We use it in the forthcoming sections in order to compute certain non-trivial scalar products that involve the zero-energy states.

The proof of \cref{thm:OCCohom} is based on several lemmas. They establish the existence of an explicit bijection between
$\mathcal H^{L}(\Q)$ and $\mathcal H^{L+2}(\Q)$ for each $L\geqslant 1$.
Hence, we may construct $\mathcal H^{L}(\Q)$ from $\mathcal H^{1}(\Q)$ and $\mathcal H^{2}(\Q)$. We explicitly compute them 
in the following lemma:

\begin{lemma}
\label{lem:lowdim}
$\mathcal H^1(\Q)$ and $\mathcal H^2(\Q)$ are spanned by the cohomology classes of the states $|0\rangle$ and $|\chit\rangle$, respectively.
\begin{proof}
For $L=1$, recall that $\mathcal H^1(\Q) =\ker\{\Q:V^1\to V^2\}$.
  According to \eqref{eqn:TrigSupercharge}, the only solution to $\q|\psi\rangle = 0$ is $|\psi\rangle = |0\rangle$, up to a factor. 

For $L=2$, we consider a cocycle $|\psi\rangle \in V^2$. We write $\psi_{m,n} = \langle m,n|\psi\rangle$ for its components with respect to the canonical basis of the Hilbert space. From $\Q|\psi\rangle = 0$, it follows that
\begin{equation}
  \psi_{m,\ell}=\psi_{\ell,m} =0 \quad \text{for } m = 1,\dots,\ell
  \label{eqn:APsi1}
\end{equation}
and
\begin{equation}
  a_{m,0}\psi_{m,n} = a_{n+1,0} \psi_{m-1,n+1}\quad \text{for } m = 1,\dots,\ell\text{ and } n=0,\dots,\ell-1.
  \label{eqn:APsi2}
\end{equation}
We distinguish three cases. \textit{(i)} For $m+n > \ell$, we find from \eqref{eqn:APsi2} the relation
\begin{equation}
  \psi_{m,n} = \psi_{m+n-\ell,\ell}\prod_{j=n+1}^\ell \frac{a_{j,0}}{a_{m+n+1-j,0}}.
  \label{eqn:PsiRec}
\end{equation}
From \eqref{eqn:APsi1}, it follows that $\psi_{m,n}=0$. \textit{(ii)} For $m+n=\ell$, the relation \eqref{eqn:PsiRec} still holds. The explicit form of the coefficients $a_{m,k}$, defined in \eqref{eqn:TrigSupercharge}, leads to
\begin{equation}
  \psi_{m,\ell-m} = \psi_{0,\ell} \chit_m
\end{equation}
where $\chit_m,\, m=0,\dots,\ell,$ are the components of the vector $|\chit\rangle$ defined in \eqref{eqn:DefChi}.
\textit{(iii)} For $m+n< \ell$, we obtain from \eqref{eqn:APsi2} the relation
\begin{equation}
\psi_{m,n} = \psi_{0,m+n}\prod_{j=1}^m \frac{a_{n+j,0}}{a_{j,0}} = \psi_{0,m+n} \frac{a_{m+n+1,m}}{a_{m+n+1,0}}.
  \label{eqn:PsiRec2}
\end{equation}

Combining \textit{(i)}, \textit{(ii)} and \textit{(iii)}, we find after some algebra
\begin{equation}
  |\psi\rangle = \psi_{0,\ell}|\chit\rangle + \Q\left(\sum_{p=0}^{\ell-1} \frac{\psi_{0,p}}{a_{p+1,0}}|p+1\rangle\right).
\end{equation}
Hence $[|\psi\rangle]=\psi_{0,\ell}[|\chit\rangle]$ with an arbitrary coefficient $\psi_{0,\ell}$. The cohomology class of $|\chit\rangle$ cannot be zero since this state is a linear combination of basis states of $V^2$ which are clearly not in the image of $\q$.
  \end{proof}
\end{lemma}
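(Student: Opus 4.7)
The plan is to compute $\mathcal H^1(\Q)$ and $\mathcal H^2(\Q)$ by direct analysis of the defining equations, relying solely on the explicit action \eqref{eqn:TrigSupercharge} of the local supercharge. For $L=1$, one has $\mathcal H^1(\Q) = \ker\{\Q:V\to V^2\}$ by definition, and for each $m=1,\dots,\ell$ the vector $\q|m\rangle$ lies in the ``level-$(m{-}1)$'' subspace of $V^2$ spanned by the basis vectors $|k_1,k_2\rangle$ with $k_1+k_2=m-1$, with strictly positive coefficients $a_{m,k}$. Since these level subspaces are mutually orthogonal for different values of $m$, the image $\q|\psi\rangle$ of $|\psi\rangle = \sum_m \alpha_m |m\rangle$ vanishes if and only if $\alpha_m=0$ for all $m\geqslant 1$, leaving $|0\rangle$ as the only generator.

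For $L=2$ I would parameterise a generic cocycle as $|\psi\rangle = \sum_{m,n=0}^\ell \psi_{m,n}|m,n\rangle$ and expand the condition $\Q|\psi\rangle = 0$ with $\Q = -\q_1+\q_2$ in the canonical basis of $V^3$. Matching the coefficient of each basis vector $|i,j,l\rangle$ produces two types of linear relations: boundary constraints $\psi_{m,\ell}=\psi_{\ell,m}=0$ for $m=1,\dots,\ell$ (arising when $\q_1$ or $\q_2$ would have to raise an index past $\ell$), together with a two-term recursion of the form $a_{m,0}\psi_{m,n} = a_{n+1,0}\psi_{m-1,n+1}$ for $1\leqslant m\leqslant \ell$ and $0\leqslant n\leqslant \ell-1$.

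I would then solve this recursion along the anti-diagonals $m+n=\text{const}$, separating three regimes. When $m+n>\ell$ the iteration terminates at $\psi_{m+n-\ell,\ell}=0$ by the boundary constraint, so these components vanish. When $m+n=\ell$ the iteration expresses $\psi_{m,\ell-m}$ as a $\psi_{0,\ell}$-multiple of an explicit product of ratios of the $a_{j,0}$, which after simplification via \eqref{eqn:Defamk} should collapse to $\chit_m = 1/\{m+1\}$, producing the contribution $\psi_{0,\ell}|\chit\rangle$. When $m+n<\ell$ the recursion expresses $\psi_{m,n}$ in terms of $\psi_{0,m+n}$, and this lower-triangular piece must be reassembled as a coboundary $\Q|\phi\rangle$ with $|\phi\rangle = \sum_{p=0}^{\ell-1}\beta_p|p+1\rangle$ for suitable scalars $\beta_p$ built from $\psi_{0,p}$ and $a_{p+1,0}$.

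Combining the three regimes will give the identification $[|\psi\rangle] = \psi_{0,\ell}[|\chit\rangle]$. To finish, I would check that $|\chit\rangle$ is not itself a coboundary: every element of $\text{im}\{\q:V\to V^2\}$ is supported on level subspaces with $k_1+k_2<\ell$, whereas $|\chit\rangle$ is entirely supported on the level-$\ell$ subspace, so their cohomology classes are distinct. The main technical obstacle will be the algebraic step in the subcritical regime, namely verifying that the closed form of the iterated $a_{j,0}$ ratios matches component-by-component with $\q|p+1\rangle$; this is a compact but non-trivial computation involving the $q$-integers $\{m\}$ at the root of unity \eqref{eqn:RootOfUnity}.
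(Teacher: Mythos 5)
Your proposal follows essentially the same route as the paper's proof: the same kernel computation for $L=1$, the same boundary constraints and two-term recursion for $L=2$, the same three-regime analysis along anti-diagonals, and the same observation that $|\chit\rangle$ cannot be a coboundary because the image of $\q$ lies in levels $k_1+k_2<\ell$. The argument is correct; the remaining algebra you flag (collapsing the product of $a_{j,0}$ ratios to $\chit_m$ and reassembling the subcritical piece as $\Q|\phi\rangle$) is exactly the step the paper also compresses into ``after some algebra''.
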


Our next aim is to study $\mathcal H^L(\mathfrak Q)$ for $L\geqslant 3$. In the following technical lemma, we determine a convenient choice of their representatives.

\begin{lemma}
  \label{lem:technical}
  For each $L\geqslant3$ any element in $\mathcal H^L(\mathfrak Q)$ can be represented by a cocycle $|\psi\rangle \in V^L$ with $|\psi\rangle = \sum_{m=0}^\ell |\psi_m\rangle \otimes |m\rangle$ such that $|\psi_0\rangle = |\psi'_{\ell,0}\rangle \otimes |\ell\rangle$ for some vector $|\psi'_{\ell,0}\rangle\in V^{L-2}$.
  \begin{proof}
 Let us a consider the cocycle $|\psi'\rangle$ representing an element of $\mathcal H^L(\Q)$. Then for any $|\phi\rangle\in V^{L-1}$ the vector $|\psi\rangle = |\psi'\rangle+\Q|\phi\rangle$ is also a cocycle, representing the same element of $\mathcal H^L(\Q)$. We write the vector $|\psi\rangle$ (and likewise $|\psi'\rangle$, $|\phi\rangle$) as a superposition
  \begin{equation}
        |\psi\rangle = \sum_{m=0}^\ell |\psi_m\rangle \otimes |m\rangle, \quad |\psi_m\rangle \in V^{L-1}.
        \label{eqn:decomposition}
  \end{equation}
  This leads to
 \begin{equation}
    |\psi_m\rangle = |\psi'_m\rangle+\Q|\phi_m\rangle+(-1)^L \sum_{k=m+1}^\ell a_{k,m}|\phi_k\rangle \otimes |k-m-1\rangle.
  \end{equation}
 In order to prove the lemma, we consider the case $m=0$:
 \begin{equation}
   \label{eqn:psi0}
    |\psi_0\rangle = |\psi'_0\rangle+\Q|\phi_0\rangle+(-1)^L \sum_{k=1}^\ell a_{k,0}|\phi_k\rangle \otimes |k-1\rangle.
  \end{equation}
  Again, we decompose the vector $|\psi'_0\rangle$ with respect to the last site, $|\psi'_0\rangle = \sum_{m=0}^\ell|\psi'_{m,0}\rangle \otimes |m\rangle$, and substitute this decomposition into \eqref{eqn:psi0}. The choices
  \begin{equation}
    |\phi_0\rangle = 0, \quad \text{and} \quad |\phi_k\rangle = (-1)^{L+1}a_{k,0}^{-1} |\psi'_{k-1,0}\rangle, \quad \text{for }k=1,\dots,\ell
  \end{equation}
  lead to $|\psi_0\rangle = |\psi_{\ell,0}'\rangle\otimes |\ell\rangle$. This ends the proof.
\end{proof}
\end{lemma}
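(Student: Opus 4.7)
The plan is to show that, starting from any cocycle $|\psi'\rangle \in V^L$ representing the given cohomology class, we can add a carefully chosen coboundary $\Q|\phi\rangle$ (with $|\phi\rangle \in V^{L-1}$) to arrive at a representative $|\psi\rangle$ whose decomposition with respect to the last site has the required property for its $|0\rangle$-component. The main technical work is to compute the effect of adding $\Q|\phi\rangle$ on the $m=0$ component of the decomposition \eqref{eqn:decomposition}, and then to exhibit a choice of $|\phi\rangle$ that forces this component into the desired shape.

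First, I would decompose $|\phi\rangle = \sum_{k=0}^\ell |\phi_k\rangle \otimes |k\rangle$ with respect to its last site (with $|\phi_k\rangle \in V^{L-2}$), and compute $\Q|\phi\rangle$ using $\Q = \sum_{j=1}^{L-1}(-1)^j\q_j$. The terms with $j \leq L-2$ simply contribute $\sum_k (\Q|\phi_k\rangle) \otimes |k\rangle$, with $\Q$ now acting on $V^{L-2}$; the term $j = L-1$ lets $\q$ act on the last site via $\q|k\rangle = \sum_{i=0}^{k-1} a_{k,i}|i, k-i-1\rangle$, and re-indexing by the last site of the resulting vector in $V^L$ yields precisely the formula
\begin{equation*}
(\Q|\phi\rangle)_m = \Q|\phi_m\rangle + (-1)^L \sum_{k=m+1}^\ell a_{k,m}|\phi_k\rangle\otimes |k-m-1\rangle,
\end{equation*}
where I have used $a_{k,k-m-1}=a_{k,m}$, which follows directly from the definition \eqref{eqn:Defamk} (both sides equal $\sqrt{\{k+1\}/(\{k-m\}\{m+1\})}$).

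Now specialise to $m = 0$. Then $|\psi_0\rangle = |\psi'_0\rangle + \Q|\phi_0\rangle + (-1)^L \sum_{k=1}^\ell a_{k,0}|\phi_k\rangle\otimes |k-1\rangle$. The crucial observation is that the last site of every vector appearing in the last sum takes values only in $\{0,\ldots,\ell-1\}$, never $\ell$. Decomposing $|\psi'_0\rangle = \sum_{m=0}^\ell |\psi'_{m,0}\rangle\otimes |m\rangle$, this means I can hope to cancel the $m=0,\ldots,\ell-1$ components of $|\psi'_0\rangle$ while being unable to touch the $m=\ell$ component, which is exactly what the lemma requires.

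The explicit choice $|\phi_0\rangle = 0$ and $|\phi_k\rangle = (-1)^{L+1}a_{k,0}^{-1}|\psi'_{k-1,0}\rangle$ for $k=1,\ldots,\ell$ is then dictated by matching coefficients: the contribution to $|\psi_0\rangle$ from the sum becomes $-\sum_{m=0}^{\ell-1}|\psi'_{m,0}\rangle\otimes |m\rangle$, cancelling the corresponding terms of $|\psi'_0\rangle$ and leaving only $|\psi'_{\ell,0}\rangle\otimes |\ell\rangle$. I would then note that this choice is legitimate because all coefficients $a_{k,0}$ are strictly positive (by \eqref{eqn:Defamk}), so the $|\phi_k\rangle$ are well-defined vectors in $V^{L-2}$; this requires $L \geq 3$ so that $V^{L-2}$ is a nontrivial spin-chain Hilbert space. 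The main (and essentially the only) obstacle is the bookkeeping of signs and the reindexing of the coefficients $a_{k,\cdot}$ arising from the action of $\q$ on the last site; once this is in place, the desired $|\phi\rangle$ can be read off directly.
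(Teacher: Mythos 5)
Your proposal is correct and follows essentially the same route as the paper: add a coboundary $\Q|\phi\rangle$, decompose everything with respect to the last site, and read off the choice $|\phi_0\rangle=0$, $|\phi_k\rangle\propto a_{k,0}^{-1}|\psi'_{k-1,0}\rangle$ that cancels all but the $|\ell\rangle$-component of $|\psi_0\rangle$; your explicit justification of the component formula via $a_{k,k-m-1}=a_{k,m}$ is a detail the paper leaves implicit. The only caveat is the overall sign of the $j=L-1$ contribution (it is $(-1)^{L-1}$ rather than $(-1)^{L}$, an ambiguity you share with the paper's own write-up), but this is harmless since it is absorbed into the sign of the chosen $|\phi_k\rangle$ and does not affect the conclusion.
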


For the next two lemmas, we introduce the operator $S$ which acts on any vector $|\psi\rangle \in V^L,\,L\geqslant 1,$ according to
\begin{equation}
  S|\psi\rangle = |\psi\rangle \otimes |\chit\rangle.
\end{equation}
One checks that it commutes with the supercharge $\Q$,
\begin{equation}
  S \Q = \Q S
\end{equation}
because $|\chit\rangle$ is annihilated by the supercharge. It follows that $S$ can be extended to a mapping $S^\sharp$ defined on the cohomology \cite{loday:92,masson:08}: Its action on the cohomology classes is given by $S^\sharp[|\psi\rangle] = [S|\psi\rangle]$.

\begin{lemma}
  \label{lem:surjectivity}
  For each $L\geqslant 1$ the mapping $S^\sharp:\mathcal H^{L}(\Q)\to \mathcal H^{L+2}(\Q)$ is surjective.
  \begin{proof}
  Let $|\psi\rangle$ be a cocycle representing an element of $\mathcal H^{L+2}(\Q)$.   We decompose it as in \eqref{eqn:decomposition} with respect to the last site. The equation $\Q|\psi\rangle = 0$ leads to
  \begin{equation}
  \Q|\psi_m\rangle = (-1)^{L+1}\sum_{k=m+1}^\ell a_{k,m}|\psi_k\rangle \otimes |k-(m+1)\rangle.
  \end{equation}
  Let us consider the case $m=0$. From \cref{lem:technical} we know that without loss of generality $|\psi\rangle$ can be chosen in such a way that $|\psi_0\rangle = |\psi_{\ell,0}\rangle\otimes |\ell\rangle$ for some state $|\psi_{\ell,0}\rangle\in V^{L}$. This choice leads to
  \begin{equation}
    \Q|\psi_{\ell,0}\rangle \otimes |\ell\rangle +(-1)^{L+1} |\psi_{\ell,0}\rangle \otimes \sum_{k=0}^{\ell-1}a_{\ell, k}|k,\ell-k-1\rangle=(-1)^{L+1}\sum_{k=1}^\ell a_{k,0}|\psi_k\rangle \otimes |k-1\rangle. 
  \end{equation}
  Comparing both sides, we obtain 
  \begin{equation}
    \Q|\psi_{\ell,0}\rangle =0,\quad\text{and}\quad  |\psi_k\rangle = \left(\frac{a_{\ell,\ell-k}}{a_{k,0}}\right)|\psi_{\ell,0}\rangle \otimes |\ell-k\rangle, \quad \text{for } k=1,\dots,\ell.
  \end{equation}
  According to \eqref{eqn:TrigSupercharge} and \eqref{eqn:DefChi}, $a_{\ell,\ell-k}/a_{k,0}=\chit_k$. Hence, we find
  \begin{equation}
    |\psi\rangle = |\psi_{\ell,0}\rangle\otimes |\ell, 0\rangle + |\psi_{\ell,0}\rangle \otimes \sum_{k=1}^\ell \chit_k|\ell-k,k\rangle = S|\psi_{\ell,0}\rangle
  \end{equation}
  with a cocycle $|\psi_{\ell,0}\rangle \in V^{L}$. For the corresponding cohomology classes we find thus $[|\psi\rangle]=S^\sharp[|\psi_{\ell,0}\rangle]$.
\end{proof}
\end{lemma}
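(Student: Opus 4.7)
The plan is, given a cocycle $|\psi\rangle \in V^{L+2}$ representing a class in $\mathcal H^{L+2}(\Q)$, to construct a cocycle $|\tilde\psi\rangle \in V^L$ such that $|\psi\rangle = S|\tilde\psi\rangle = |\tilde\psi\rangle \otimes |\chit\rangle$, from which $[|\psi\rangle] = S^\sharp[|\tilde\psi\rangle]$ follows. The strategy is that the cocycle condition $\Q|\psi\rangle = 0$, once a convenient representative has been chosen, rigidly forces $|\psi\rangle$ into this factorised form.

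The first step is to invoke \cref{lem:technical} in order to pass to a cohomologous cocycle (still called $|\psi\rangle$) whose decomposition with respect to the last site, $|\psi\rangle = \sum_{m=0}^\ell |\psi_m\rangle \otimes |m\rangle$, satisfies $|\psi_0\rangle = |\tilde\psi\rangle \otimes |\ell\rangle$ for some $|\tilde\psi\rangle \in V^L$. This anchors the $m=0$ slice and serves as the seed from which the remaining slices $|\psi_k\rangle$ will be reconstructed.

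Next, I would expand $\Q|\psi\rangle = 0$ by splitting the supercharge on $V^{L+2}$ into its action on the first $L+1$ tensor factors and the contribution of $\q$ on the last site via \eqref{eqn:TrigSupercharge}. Projecting the resulting identity in $V^{L+3}$ onto each basis state in the final slot produces a family of relations expressing $\Q|\psi_n\rangle$ as a linear combination of the $|\psi_k\rangle \otimes |k-n-1\rangle$ for $k > n$, with coefficients involving $a_{k,n}$. Specialising to $n=0$ and substituting $|\psi_0\rangle = |\tilde\psi\rangle \otimes |\ell\rangle$ reduces the left-hand side to $(\Q|\tilde\psi\rangle)\otimes|\ell\rangle + (-1)^L |\tilde\psi\rangle \otimes \q|\ell\rangle$, and comparing tensor coefficients in the last two slots on both sides should simultaneously yield $\Q|\tilde\psi\rangle = 0$ (so that $|\tilde\psi\rangle$ is itself a cocycle) together with $|\psi_k\rangle = (a_{\ell,\ell-k}/a_{k,0})\,|\tilde\psi\rangle \otimes |\ell-k\rangle$ for $k=1,\dots,\ell$.

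The final step is to identify the coefficient $a_{\ell,\ell-k}/a_{k,0}$ with $\chit_k = 1/\{k+1\}$. Using \eqref{eqn:Defamk} together with the root-of-unity identity $\{\ell+2-m\} = \{m\}$ implied by \eqref{eqn:RootOfUnity}, this reduces to a short algebraic check. Reassembling the slices then gives $|\psi\rangle = |\tilde\psi\rangle \otimes |\chit\rangle = S|\tilde\psi\rangle$, and the main claim follows at the level of cohomology classes. The hard part, I expect, is the bookkeeping in the penultimate step: properly separating $\Q$ into its bulk and last-site pieces, handling the alternating signs from \eqref{eqn:SCSupercharge}, and extracting a complete system of linear relations from a single vector identity in $V^{L+3}$ so as to obtain the cocycle condition on $|\tilde\psi\rangle$ and the explicit formulas for $|\psi_k\rangle$ in one stroke.
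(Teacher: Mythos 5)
Your proposal is correct and follows essentially the same route as the paper: pass to the representative provided by \cref{lem:technical}, extract from $\Q|\psi\rangle=0$ the slice relations at $m=0$, deduce $\Q|\tilde\psi\rangle=0$ together with $|\psi_k\rangle=(a_{\ell,\ell-k}/a_{k,0})\,|\tilde\psi\rangle\otimes|\ell-k\rangle$, and identify $a_{\ell,\ell-k}/a_{k,0}=\chit_k$ to conclude $|\psi\rangle=S|\tilde\psi\rangle$. The only cosmetic discrepancy is the parity of the sign on the last-site term (the paper has $(-1)^{L+1}$ since $|\psi_0\rangle\in V^{L+1}$), which cancels between the two sides and does not affect the argument.
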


\begin{lemma}
  \label{lem:injectivity}
  For each $L\geqslant 1$ the mapping $S^\sharp:\mathcal H^{L}(\Q)\to \mathcal H^{L+2}(\Q)$ is injective.
  \begin{proof}
  Consider an element of $\ker\{ S^\sharp:\mathcal H^{L}(\Q)\to\mathcal H^{L+2}(\Q)\}$. It can be represented by a cocycle $|\psi\rangle \in V^L$ such that $S|\psi\rangle = \Q|\phi\rangle$ for some vector $|\phi\rangle\in V^{L+1}$. As before, it is useful to decompose the state with respect to the last site: $|\phi\rangle=\sum_{m=0}^\ell |\phi_m\rangle \otimes |m\rangle$. We find
  \begin{equation}
    S|\psi\rangle = |\psi\rangle \otimes |\chit\rangle = \sum_{m=0}^\ell \Q|\phi_m\rangle \otimes |m\rangle + (-1)^{L+1} \sum_{m=0}^\ell\sum_{k=0}^{m-1}a_{m,k}|\phi_m\rangle \otimes |k,m-k-1\rangle.
  \end{equation}
  We select on both sides the terms corresponding to $|0\rangle$ on the last site, and find
  \begin{equation}
    \chit_\ell|\psi\rangle \otimes |\ell\rangle = \Q|\phi_0\rangle  + (-1)^{L+1}\sum_{m=0}^\ell a_{m,m-1} |\phi_m\rangle \otimes |m-1\rangle.
    \label{eqn:intermed}
  \end{equation}
  Notice that the sum on the right-hand side does not contain any term proportional to $|\ell\rangle$ on the last site. $\Q|\phi_0\rangle$ may however contain such a term. To see this, we decompose $|\phi_0\rangle = \sum_{m=0}^\ell|\phi_{m,0}\rangle\otimes |m\rangle$. We apply the supercharge to this decomposition, insert it into \eqref{eqn:intermed} and obtain
  \begin{equation}
    \chit_\ell|\psi\rangle \otimes |\ell\rangle= \Q|\phi_{\ell,0}\rangle \otimes |\ell\rangle + \sum_{k=0}^{\ell-1}|\tilde \phi_k\rangle \otimes |k\rangle.
  \end{equation}
  The states $|\tilde \phi_k\rangle$ can in principle be computed but we won't need them. This equality implies $|\psi\rangle = \Q|\phi_{\ell,0}\rangle$. Hence, $|\psi\rangle$ is a coboundary. We conclude that $\ker \{S^\sharp :\mathcal H^{L}(\Q)\to\mathcal H^{L+2}(\Q)\}=0$. This proves the claim.
\end{proof}
\end{lemma}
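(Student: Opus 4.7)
My plan is to show directly that if a cocycle $|\psi\rangle \in V^L$ satisfies $S|\psi\rangle = \Q|\phi\rangle$ for some $|\phi\rangle \in V^{L+1}$, then $|\psi\rangle$ itself is a coboundary, so that $\ker S^\sharp = 0$. The idea is to exploit the explicit tensor-factor structure of $S|\psi\rangle = |\psi\rangle \otimes |\chit\rangle$ together with the triangular action $\q|m\rangle = \sum_{k<m} a_{m,k}|k,m-k-1\rangle$: a carefully chosen sequence of projections onto specific basis states at the final sites should extract an explicit preimage of $|\psi\rangle$ under $\Q$ out of the components of $|\phi\rangle$.

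Concretely, I would first decompose $|\phi\rangle = \sum_m |\phi_m\rangle \otimes |m\rangle$ with $|\phi_m\rangle \in V^L$ and use the tensor-product decomposition of the supercharge on $V^{L+1}$ to write
\begin{equation*}
\Q|\phi\rangle = \sum_{m=0}^\ell (\Q|\phi_m\rangle) \otimes |m\rangle + (-1)^{L+1} \sum_{m=0}^\ell |\phi_m\rangle \otimes \q|m\rangle.
\end{equation*}
Comparing this with $|\psi\rangle \otimes |\chit\rangle = \sum_m \chit_m |\psi\rangle \otimes |m\rangle \otimes |\ell-m\rangle$ and projecting onto $|0\rangle$ at the last tensor factor (position $L+2$) collapses the left-hand side to $\chit_\ell|\psi\rangle \otimes |\ell\rangle$ and the right-hand side to $\Q|\phi_0\rangle$ plus a sum of corrections of the form $|\phi_m\rangle \otimes |m-1\rangle$ with $m = 1, \ldots, \ell$. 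A second decomposition $|\phi_0\rangle = \sum_k |\phi_{k,0}\rangle \otimes |k\rangle$ and a further projection onto $|\ell\rangle$ at position $L+1$ kill all of the corrections (their last-site labels are bounded above by $\ell-1$) and, invoking again the triangular form of $\q|k\rangle$, isolate from $\Q|\phi_0\rangle$ the single surviving contribution $\Q|\phi_{\ell,0}\rangle$. The outcome is the clean identity
\begin{equation*}
\chit_\ell\,|\psi\rangle = \Q|\phi_{\ell,0}\rangle.
\end{equation*}
Since $\chit_\ell = 1/\{\ell+1\} \neq 0$, this exhibits $|\psi\rangle$ as a coboundary, as required.

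The main obstacle I anticipate is purely organisational: three successive tensor-factor decompositions (at positions $L+2$, $L+1$ and $L$) have to be propagated while keeping track of the alternating signs $(-1)^{L+1}$ and $(-1)^L$ that come from the definition of $\Q$. The key combinatorial input that makes the argument go through is the triangular shape of $\q$, which ensures that $\q|m\rangle$ produces only basis vectors whose last-site label is strictly less than $m$; this is precisely what prevents the correction terms from interfering with the projections onto $|\ell\rangle$, and it is also what forces only the extremal summand $\chit_\ell|0,\ell\rangle$ of $|\chit\rangle$ to contribute throughout the process.
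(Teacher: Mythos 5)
Your argument is correct and coincides with the paper's own proof: the same decomposition of $|\phi\rangle$ at the last site, the same projection onto $|0\rangle$ at position $L+2$ followed by the projection onto $|\ell\rangle$ at position $L+1$, and the same use of the triangularity of $\q$ to isolate $\Q|\phi_{\ell,0}\rangle$ and exhibit $|\psi\rangle$ as a coboundary. (As a minor aside, $\chit_\ell=1/\{\ell+1\}=1$ for the root of unity $q=e^{\i\pi/(\ell+2)}$, so the prefactor you carefully keep is actually trivial.)
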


We are now ready to prove the main result of this section.

\begin{proof}[Proof of \cref{thm:OCCohom}]
  From \cref{lem:surjectivity,lem:injectivity} we conclude for each $L\geqslant 1$ the mapping $S^\sharp: \mathcal H^{L}(\Q)\to \mathcal H^{L+2}(\Q)$ is both surjective and injective. Hence $ \mathcal H^{L+2}(\Q)$ is isomorphic to $\mathcal H^{L}(\Q)$. By transitivity, we obtain
  \begin{align}
    \mathcal H^{2n-1}(\Q) = (S^{\sharp})^{n-1}\mathcal H^{1}(\Q),\quad
   \mathcal H^{2n}(\Q) = (S^{\sharp})^{n-1}\mathcal H^{2}(\Q)
  \end{align}
for each $n\geqslant 1$. 
$\mathcal H^{1}(\Q)$ and $\mathcal H^{2}(\Q)$ were obtained in \cref{lem:lowdim}. This allows us to compute representatives of the elements of $\mathcal H^L(\Q)$ for odd and even $L$ from the repeated action of $S$ on $|0\rangle$ and $|\chit\rangle$, respectively, which leads to \eqref{eqn:RepE0States}. 
\end{proof}

\section{Components and scalar products of the zero-energy states}

\label{sec:SP}

In this section, we analyse the zero-energy states of the spin-chain Hamiltonians $H$ with $y=0$. Our main goal is to unveil some of their properties with the help of \cref{thm:OCCohom}. In \cref{sec:Representations}, we discuss two representations of the zero-energy states arising from the representatives of the cohomology and their homology analogues. We deduce from these representations their magnetisation, parity and relations between certain components. In \cref{sec:ScalarProducts}, we introduce a family of physically-interesting scalar products that involve an arbitrary number of zero-energy states for systems of different lengths. We use the supersymmetry to show that the knowledge of a single special component of each involved state is sufficient to evaluate the scalar product.

\subsection{Representation of zero-energy states}

\label{sec:Representations}
We denote by $|\psi_L\rangle$ a zero-energy state of the Hamiltonian $H$ for a chain of length $L$. For convenience, we allow $L=1$ even though $H$ is not defined for chains consisting of a single site. Furthermore, we write
\begin{equation}
  (\psi_L)_{m_1m_2\dots m_L} = \langle m_1m_2\cdots m_L|\psi_L\rangle
\end{equation}
for its components with respect to the canonical basis of $V^L$.

\paragraph{Cohomology representation.} It follows from \cref{thm:OCCohom} that $|\psi_L\rangle$ can be written as the sum of a multiple of a special representative of $\mathcal H^L(\Q)$ and an element of the image of the supercharge. Specifically, we have
\begin{proposition}
\label{prop:CohomRepr}
For each $n\geqslant 1$, we have
\begin{equation}
  |\psi_{2n}\rangle = \lambda_{2n}|\chit\cdots \chit\rangle +\Q|\phi_{2n}\rangle, \quad \lambda_{2n}=(\psi_{2n})_{0\ell\cdots 0\ell},
  \label{eqn:psi2n}
\end{equation}
with $|\phi_{2n}\rangle\in V^{2n-1}$ and
\begin{equation}
  |\psi_{2n-1}\rangle =\lambda_{2n-1}  |0\rangle\otimes |\chit\cdots \chit\rangle  +\Q|\phi_{2n-1}\rangle, \quad \lambda_{2n-1}= (\psi_{2n-1})_{0\ell\cdots 0\ell 0},
  \label{eqn:psi2n1}
\end{equation}
with  $|\phi_{2n-1}\rangle\in V^{2(n-1)}$.
\begin{proof} 
  For simplicity, we focus on the case $L=2n$. According to \eqref{eqn:RepresentativeE0State} and \cref{thm:OCCohom}, there is a scalar $\lambda_{2n}$ and a state $|\phi_{2n}\rangle\in V^{2n-1}$ such that the zero-energy state can be written as
  \begin{equation}
    |\psi_{2n}\rangle = \lambda_{2n} |\chit\cdots \chit\rangle +\Q|\phi_{2n}\rangle.
  \end{equation}
  The state $|\phi_{2n}\rangle$ cannot be obtained from $\mathcal H^{2n}(\Q)$. It is fixed by the requirement that the zero-energy state be annihilated by the adjoint supercharge.
  
  We notice however that the unknown term $\Q|\phi_{2n}\rangle$ does not contribute to the scalar product of the zero-energy state with any vector $|\omega\rangle \in \text{ker}\{\Q^\dagger:V^{2n}\to V^{2n-1}\}$. Indeed, we have
  \begin{align}
  \langle \omega |\psi_{2n}\rangle &= \lambda_{2n} \langle \omega |\chit\cdots \chit\rangle+ \langle \omega |\Q|\phi_{2n}\rangle\nonumber= \lambda_{2n} \langle \omega |\chit\cdots \chit\rangle+ \langle \phi_{2n} |\Q^\dagger|\omega\rangle^\ast\nonumber \\
  &=\lambda_{2n} \langle \omega |\chit\cdots \chit\rangle \label{eqn:SPPsi2n}.
\end{align}
A simple choice for $|\omega\rangle$ is given by a canonical basis vector $ |\omega\rangle = |m_1,\dots,m_{2n}\rangle \in 
V^{2n}$ that \textit{(i)} has a non-zero scalar product with $|\chit\cdots \chit\rangle$ and \textit{(ii)} is annihilated by the adjoint supercharge. The requirement \textit{(i)} leads to the constraint
\begin{subequations}
  \label{eqn:ConstraintsBasisVector}
\begin{equation}
  m_{2i-1}+m_{2i} = \ell, \quad i=1,\dots,n,
\end{equation}
because the magnetisation of the state $|\chit\rangle$ vanishes. The requirement \textit{(ii)} leads to the constraint
\begin{equation}
  m_{2i}+m_{2i+1} \geqslant \ell, \quad i=1,\dots,n-1,
\end{equation}
which follows from the action \eqref{eqn:ActionAdjointTrigSC} of the adjoint local supercharge $\q^\dagger$.
\end{subequations}

The constraints \eqref{eqn:ConstraintsBasisVector} have many solutions. One solution is given by
\begin{equation}
  |\omega \rangle =|0\ell \cdots 0\ell\rangle.
\end{equation}
Using \eqref{eqn:DefChi}, we find that $\langle \omega |\chit\cdots \chit\rangle = 1$ and thus $\lambda_{2n}= \langle \omega |\psi_{2n}\rangle = (\psi_{2n})_{0\ell\cdots 0\ell}$. This concludes the proof for $L=2n$.

The proof for $L=2n-1$ is similar and uses the
vector $|\omega \rangle = |0\ell\cdots 0\ell 0\rangle$.
\end{proof}
\end{proposition}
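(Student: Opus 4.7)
The plan is to combine \cref{thm:OCCohom} with the general fact \eqref{eqn:RepresentativeE0State} that any zero-energy state differs from a cohomology representative by a coboundary, and then pin down the unknown scalar via an inner product with a carefully chosen vector in $\ker \Q^\dagger$. I will treat the even case $L=2n$ in detail; the odd case is entirely analogous.

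First, by \cref{thm:OCCohom}, $\mathcal H^{2n}(\Q)$ is spanned by the class of $|\chit\cdots\chit\rangle$. Combined with \eqref{eqn:RepresentativeE0State}, this immediately gives the existence of a scalar $\lambda_{2n}\in\mathbb C$ and a vector $|\phi_{2n}\rangle\in V^{2n-1}$ such that $|\psi_{2n}\rangle = \lambda_{2n}|\chit\cdots\chit\rangle + \Q|\phi_{2n}\rangle$. The only nontrivial step is the identification of $\lambda_{2n}$ with the component $(\psi_{2n})_{0\ell\cdots 0\ell}$.

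To that end, I would exploit the orthogonality between coboundaries and $\ker \Q^\dagger$: for any $|\omega\rangle\in V^{2n}$ with $\Q^\dagger|\omega\rangle=0$, one has $\langle \omega|\Q|\phi_{2n}\rangle = \overline{\langle \phi_{2n}|\Q^\dagger|\omega\rangle}=0$, so that $\langle \omega|\psi_{2n}\rangle = \lambda_{2n}\langle \omega|\chit\cdots\chit\rangle$. The task is then to exhibit a canonical basis vector $|\omega\rangle$ that is simultaneously annihilated by $\Q^\dagger$ and has overlap $1$ with $|\chit\cdots\chit\rangle$. I would take $|\omega\rangle = |0\ell\cdots 0\ell\rangle$: every neighbouring pair $(m_i,m_{i+1})$ in this state satisfies $m_i+m_{i+1}=\ell$, so by \eqref{eqn:ActionAdjointTrigSC} each term $\q^\dagger_{i,i+1}$ in $\Q^\dagger$ annihilates $|\omega\rangle$. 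The overlap is then $\langle 0\ell\cdots 0\ell|\chit\cdots\chit\rangle = \chit_0^n = 1$ (since $\chit_0 = 1/\{1\}=1$). This yields $\lambda_{2n} = \langle\omega|\psi_{2n}\rangle = (\psi_{2n})_{0\ell\cdots 0\ell}$.

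For the odd case $L=2n-1$, the same argument applies with the representative $|0\rangle\otimes|\chit\cdots\chit\rangle$ and the choice $|\omega\rangle = |0\ell\cdots 0\ell 0\rangle$, which again lies in $\ker \Q^\dagger$ (every adjacent pair still sums to $\ell$, except the pair containing the first two entries $0\ell$) and has overlap $1$ with the representative. I do not anticipate a serious obstacle: the only subtlety is checking that the chosen $|\omega\rangle$ is truly annihilated by every local term in $\Q^\dagger$, which reduces to the elementary inequality $m_i+m_{i+1}\geqslant \ell$ for the relevant pairs. The constraints \eqref{eqn:ConstraintsBasisVector} in the statement of the proposition are exactly those needed to produce \emph{any} admissible $|\omega\rangle$; picking the most symmetric solution yields the cleanest formula for $\lambda_L$.
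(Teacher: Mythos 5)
Your proposal is correct and follows essentially the same route as the paper: represent $|\psi_L\rangle$ via \cref{thm:OCCohom} and \eqref{eqn:RepresentativeE0State}, use the orthogonality of coboundaries to $\ker\Q^\dagger$, and extract $\lambda_L$ by pairing with $|0\ell\cdots 0\ell\rangle$ (resp.\ $|0\ell\cdots 0\ell 0\rangle$), which is annihilated by $\Q^\dagger$ since every adjacent pair of labels sums to $\ell$ and has unit overlap with the representative because $\chit_0=\chit_\ell=1$. The only blemish is the parenthetical remark in your odd-length case (``except the pair containing the first two entries''): in $|0\ell\cdots 0\ell 0\rangle$ \emph{every} adjacent pair, including the first, sums to exactly $\ell$, so no exception is needed and the conclusion stands.
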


The representations of the zero-energy states given in \eqref{eqn:psi2n} and \eqref{eqn:psi2n1} allow us to derive a number of simple properties of the zero energy states.
Two immediate consequences are:
\begin{corollary}
  \label{corr:NonZeroComps}
  For each $n\geqslant 1$ the components $(\psi_{2n})_{0\ell\cdots 0\ell}$ and $(\psi_{2n-1})_{0\ell\cdots 0\ell 0}$ are non-zero.
  \begin{proof}
    Consider $L=2n$ and suppose that $(\psi_{2n})_{0\ell\cdots 0\ell}=0$. Then it follows from \eqref{eqn:psi2n} that $|\psi_{2n}\rangle \in \text{im}\,\Q$ which contradicts \cref{lem:E0NotCoboundary}. The case $L=2n-1$ is completely analogue.
  \end{proof}
\end{corollary}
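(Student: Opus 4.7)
The plan is to prove each of the two cases by contradiction using the cohomology representation established in \cref{prop:CohomRepr} combined with \cref{lem:E0NotCoboundary}, which forbids a zero-energy state from being a coboundary. This is essentially a one-line corollary once those two ingredients are in place.

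First I would treat the even case $L=2n$. Suppose for contradiction that $\lambda_{2n}=(\psi_{2n})_{0\ell\cdots 0\ell}=0$. Substituting this into the representation \eqref{eqn:psi2n} immediately collapses it to $|\psi_{2n}\rangle = \Q|\phi_{2n}\rangle$, so the zero-energy state is a coboundary. But by definition, the zero-energy state $|\psi_{2n}\rangle$ is a nonzero eigenvector of the Hamiltonian, so \cref{lem:E0NotCoboundary} forbids this. Hence $\lambda_{2n}\neq 0$. The odd case $L=2n-1$ is strictly parallel: assuming $\lambda_{2n-1}=0$ and inserting into \eqref{eqn:psi2n1} exhibits $|\psi_{2n-1}\rangle$ as $\Q|\phi_{2n-1}\rangle$, again contradicting \cref{lem:E0NotCoboundary}.

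I do not anticipate a real obstacle; all of the substantive work has already been done in \cref{thm:OCCohom} and \cref{prop:CohomRepr}, where it is shown that the special component $(\psi_L)_{0\ell\cdots 0\ell(0)}$ is precisely the scalar projecting $|\psi_L\rangle$ onto the distinguished cohomology representative modulo coboundaries. The only point that deserves mention is the implicit assumption that $|\psi_L\rangle$ is itself nonzero, which is true because by convention we consider an actual (unnormalised but nontrivial) zero-energy state; this is what activates \cref{lem:E0NotCoboundary}. Once this is noted, the proof reduces to the two-line contradiction argument above.
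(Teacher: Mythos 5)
Your argument is correct and coincides with the paper's own proof: both set the special component to zero, observe that \eqref{eqn:psi2n} (respectively \eqref{eqn:psi2n1}) then exhibits the zero-energy state as a coboundary, and invoke \cref{lem:E0NotCoboundary} for the contradiction. Your remark that $|\psi_L\rangle$ must be nonzero for the lemma to apply is a fair point that the paper leaves implicit.
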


\begin{corollary}
  \label{corr:Magnetisation}
  We have $M|\psi_{2n}\rangle = 0$ and $M|\psi_{2n-1}\rangle = (\ell/2)|\psi_{2n-1}\rangle$.
  \begin{proof}
The $E=0$ eigenspace of $H$ is one-dimensional. Furthermore the Hamiltonian conserves the magnetisation $[H,M]=0$. Hence, we must have $M|\psi_{L}\rangle = m_L|\psi_L\rangle$ for any $L\geqslant 1$. To find $m_L$, it is sufficient to project this equality onto simple basis vectors.
  For $L=2n$, we find
  \begin{equation}
    0 = \langle 0\ell \cdots 0\ell|M|\psi_{2n}\rangle = m_{2n} \langle 0\ell \cdots 0\ell|\psi_{2n}\rangle.
  \end{equation}
  According to \cref{corr:NonZeroComps}, $\langle 0\ell \cdots 0\ell |\psi_{2n}\rangle = (\psi_{2n})_{0\ell\cdots 0\ell}$ is non-zero and therefore $m_{2n}= 0$. 
  For $L=2n-1$ the proof is similar. 
  \end{proof}
\end{corollary}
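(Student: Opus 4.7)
The plan is to exploit the one-dimensionality of the zero-energy subspace together with the conservation law $[H,M]=0$. \cref{thm:OCCohom} establishes that $\mathcal H^L(\Q)$ is one-dimensional for every $L\geqslant 1$, and since the zero-energy subspace of $H$ is isomorphic to $\mathcal H^L(\Q)$ it too is one-dimensional. Because $M$ commutes with $H$, it preserves this subspace, so there necessarily exists a scalar $m_L$ with $M|\psi_L\rangle = m_L|\psi_L\rangle$. The whole problem then reduces to computing $m_L$.

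To pin down $m_L$, I would project the eigenvalue equation against a single canonical basis vector $|\omega\rangle$ chosen so that $\langle\omega|\psi_L\rangle\neq 0$. \cref{corr:NonZeroComps} supplies exactly such vectors: for $L=2n$, take $|\omega\rangle = |0\ell\cdots 0\ell\rangle$, and for $L=2n-1$, take $|\omega\rangle = |0\ell\cdots 0\ell\, 0\rangle$. Each of these is itself a magnetisation eigenvector. A direct count using the definition of $M$ gives an eigenvalue $\tfrac{\ell\cdot 2n}{2}-n\ell = 0$ in the first case and $\tfrac{\ell(2n-1)}{2}-(n-1)\ell = \tfrac{\ell}{2}$ in the second.

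Evaluating $\langle\omega|M|\psi_L\rangle$ in two ways — once by letting $M$ act to the left on $|\omega\rangle$, once by letting it act to the right on $|\psi_L\rangle$ — and dividing through by the non-zero scalar $\langle\omega|\psi_L\rangle$ immediately yields $m_{2n}=0$ and $m_{2n-1}=\ell/2$, as required.

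The argument is essentially two lines once the preparatory results are in place. The only non-trivial inputs are \cref{thm:OCCohom} (for one-dimensionality of the zero-energy subspace) and \cref{corr:NonZeroComps} (for the non-vanishing of the pivotal components); neither needs to be reproved here. The only small point requiring care is the elementary magnetisation bookkeeping for the two chosen basis vectors, but this is routine, so I do not anticipate any real obstacle in this corollary.
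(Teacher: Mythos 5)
Your proposal is correct and follows essentially the same route as the paper: one-dimensionality of the zero-energy subspace plus $[H,M]=0$ gives the eigenvalue equation, and projecting onto the basis vectors $|0\ell\cdots 0\ell\rangle$ and $|0\ell\cdots 0\ell\,0\rangle$, whose components are non-zero by \cref{corr:NonZeroComps}, fixes $m_{2n}=0$ and $m_{2n-1}=\ell/2$. The magnetisation bookkeeping for these basis vectors is also correct.
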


For $L=2n$, the constraints \eqref{eqn:ConstraintsBasisVector} have many solutions. This allows us to relate certain components of the zero-energy state to the special component $(\psi_{2n})_{0\ell\cdots 0\ell}$.

\begin{proposition}
\label{prop:SpecialCompL2n}
For any weakly increasing sequence $p_1,\dots,p_n$ of integers, we have the component
\begin{equation}
 (\psi_{2n})_{p_1,\ell-p_1,\dots,p_n,\ell-p_n} =\frac{(\psi_{2n})_{0\ell\cdots 0\ell}}{\prod_{i=1}^n \{p_i+1\}}.
  \label{eqn:lambda2n}
\end{equation}
\begin{proof}
  It is trivial to check that \eqref{eqn:ConstraintsBasisVector} holds for $m_{2i-1}=p_i$ and $m_{2i}=\ell-p_i$ where $i=1,\dots,n$. Hence, using \eqref{eqn:SPPsi2n} and \eqref{eqn:DefChi} we find
  \begin{align}
     (\psi_{2n})_{p_1,\ell-p_1,\dots,p_n,\ell-p_n} &= (\psi_{2n})_{0\ell\cdots 0\ell}\langle p_1,\ell-p_1,\dots,p_n,\ell-p_n|\chit \cdots \chit\rangle
     \nonumber
     \\
     &= (\psi_{2n})_{0\ell\cdots 0\ell}\prod_{i=1}^n\chit_{p_i}=\frac{(\psi_{2n})_{0\ell\cdots 0\ell}}{\prod_{i=1}^n \{p_i+1\}},
  \end{align}
  which concludes the proof.
\end{proof}
\end{proposition}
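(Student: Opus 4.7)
The plan is to apply the key identity \eqref{eqn:SPPsi2n} from the proof of \cref{prop:CohomRepr}, namely that for any $|\omega\rangle \in \ker\{\Q^\dagger:V^{2n}\to V^{2n-1}\}$ one has
\begin{equation*}
\langle \omega|\psi_{2n}\rangle = \lambda_{2n}\langle \omega|\chit\cdots \chit\rangle,
\end{equation*}
with $\lambda_{2n}=(\psi_{2n})_{0\ell\cdots 0\ell}$. The natural choice of $|\omega\rangle$ is the canonical basis vector $|p_1,\ell-p_1,\dots,p_n,\ell-p_n\rangle$. Two things must be checked: this basis vector lies in the kernel of $\Q^\dagger$, and its overlap with $|\chit \cdots \chit\rangle$ reproduces the claimed product.

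First, I would verify the kernel condition using the constraints \eqref{eqn:ConstraintsBasisVector} already singled out in the proof of \cref{prop:CohomRepr}. The pair constraint $m_{2i-1}+m_{2i}=\ell$ is satisfied by construction since $m_{2i-1}=p_i$ and $m_{2i}=\ell-p_i$. The only non-trivial constraint is $m_{2i}+m_{2i+1} \geqslant \ell$ for $i=1,\dots,n-1$, which with the substitution becomes $(\ell-p_i)+p_{i+1}\geqslant \ell$, i.e.\ $p_{i+1}\geqslant p_i$. This is exactly the hypothesis that $p_1,\dots,p_n$ is weakly increasing, and via \eqref{eqn:ActionAdjointTrigSC} it guarantees that every local application of $\q^\dagger$ to two adjacent sites annihilates the vector, so $\Q^\dagger|\omega\rangle=0$.

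Second, I would compute the overlap with $|\chit\cdots\chit\rangle$. Since $|\chit\rangle$ is supported on the set of pairs $|m,\ell-m\rangle$ with coefficient $\chit_m = 1/\{m+1\}$, the tensor-product structure of $|\omega\rangle$ gives
\begin{equation*}
\langle p_1,\ell-p_1,\dots,p_n,\ell-p_n|\chit\cdots\chit\rangle = \prod_{i=1}^n \chit_{p_i} = \prod_{i=1}^n \frac{1}{\{p_i+1\}}.
\end{equation*}
Combining this with the identity from \cref{prop:CohomRepr} applied to $|\omega\rangle$ yields \eqref{eqn:lambda2n}.

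There is no real obstacle here; the weakly increasing hypothesis is precisely what is needed to route the computation through the subspace $\ker\Q^\dagger$ so that the unknown coboundary part $\Q|\phi_{2n}\rangle$ drops out of the overlap. The only bookkeeping to be careful about is matching the indexing of the constraints \eqref{eqn:ConstraintsBasisVector} with the chosen parametrisation $m_{2i-1}=p_i,\; m_{2i}=\ell-p_i$.
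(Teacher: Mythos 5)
Your proposal is correct and follows essentially the same route as the paper: both apply the identity \eqref{eqn:SPPsi2n} to the basis vector $|p_1,\ell-p_1,\dots,p_n,\ell-p_n\rangle$, observing that the constraints \eqref{eqn:ConstraintsBasisVector} reduce exactly to the weakly increasing hypothesis, and then read off the overlap with $|\chit\cdots\chit\rangle$. Your write-up is slightly more explicit than the paper's about why $p_{i+1}\geqslant p_i$ is precisely the kernel condition, but the argument is identical.
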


The \cref{prop:CohomRepr,prop:SpecialCompL2n} imply that the zero-energy state is even under the action of the parity operator.

\begin{corollary}
  For any $L\geqslant 1$ we have $P|\psi_L\rangle = |\psi_L\rangle$.
  \begin{proof}
  The proof follows the lines of \cref{corr:Magnetisation}, using \cref{prop:SpecialCompL2n} in the case of even $L$.
  \end{proof}
\end{corollary}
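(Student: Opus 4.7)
The plan is to mimic the strategy of \cref{corr:Magnetisation}. The Hamiltonian $H$ is parity-invariant since at $y=0$ the boundary interactions are identical at both ends of the chain (as observed in \cref{sec:OneParamDef}), so $[P,H]=0$. Combined with $P^2=1$ and the fact that the zero-energy subspace of $H$ is one-dimensional by \cref{thm:OCCohom}, this forces $P|\psi_L\rangle = \epsilon_L|\psi_L\rangle$ with $\epsilon_L \in \{+1,-1\}$. The task then reduces to showing $\epsilon_L = +1$ in both parities of $L$, which I would do by projecting this eigenvalue equation onto a convenient canonical basis vector and comparing the two sides.

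For odd $L=2n-1$, the natural test vector is $|0\ell 0\ell\cdots 0\ell 0\rangle$, the one already singled out in \cref{prop:CohomRepr}. This vector is a palindrome and is therefore fixed by $P$. Projecting the eigenvalue equation onto it yields $(\psi_{2n-1})_{0\ell\cdots 0\ell 0} = \epsilon_{2n-1}(\psi_{2n-1})_{0\ell\cdots 0\ell 0}$, and the non-vanishing of this component by \cref{corr:NonZeroComps} immediately forces $\epsilon_{2n-1} = +1$. No further input is required in this case.

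For even $L=2n$, I would project onto $|0\ell \cdots 0\ell\rangle$, whose image under reversal is $|\ell 0 \cdots \ell 0\rangle$, giving the relation $(\psi_{2n})_{\ell 0 \cdots \ell 0} = \epsilon_{2n}(\psi_{2n})_{0\ell \cdots 0\ell}$. To eliminate the parity-rotated component on the left-hand side, I would invoke \cref{prop:SpecialCompL2n} with the weakly increasing sequence $p_1=p_2=\cdots=p_n=\ell$, which expresses $(\psi_{2n})_{\ell 0 \cdots \ell 0} = (\psi_{2n})_{0\ell\cdots 0\ell}/\{\ell+1\}^n$. The argument then closes provided the factor $\{\ell+1\}^n$ collapses, after which \cref{corr:NonZeroComps} once again yields $\epsilon_{2n}=+1$.

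The main (mild) obstacle is to recognise the arithmetic identity $\{\ell+1\}=1$ at the supersymmetric root of unity \eqref{eqn:RootOfUnity}. This is the only step that is not a purely formal projection: using $q^{\ell+2}=-1$ one has $q^{\ell+1}=-q^{-1}$, hence $\{\ell+1\} = (q-q^{-1})/(q-q^{-1}) = 1$. Without this special feature of the deformation parameter the even case would not close, so it is worth highlighting as the genuine (if small) ingredient beyond the formal framework developed earlier in the section.
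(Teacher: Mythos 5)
Your proposal is correct and follows exactly the route the paper intends: parity-invariance of $H$ plus one-dimensionality of the zero-energy space give $P|\psi_L\rangle=\epsilon_L|\psi_L\rangle$, the odd case closes by projecting onto the palindromic vector $|0\ell\cdots 0\ell 0\rangle$, and the even case uses \cref{prop:SpecialCompL2n} with $p_i=\ell$ together with $\{\ell+1\}=1$ at $q=e^{\i\pi/(\ell+2)}$. Your explicit verification of that last identity is the only detail the paper leaves implicit.
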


\paragraph{Homology representation.} Up to now, we have focused on the cohomology of the supercharge $\Q$ and the resulting representations of the zero-energy states given in \cref{prop:CohomRepr}. The definition $H = \Q \Q^\dagger + \Q^\dagger \Q$ of a supersymmetric Hamiltonian suggests that we could as well have considered the adjoint supercharge $\Q^\dagger$. Its \textit{homology} is defined by
\begin{equation}
  \mathcal H_\bullet(\Q) = \bigoplus_{L=1}^\infty \mathcal H_L(\Q)
\end{equation}
where $\mathcal H_1(\Q^\dagger) = V/ \text{im} \{\Q^\dagger : V^{2} \to V\}$
and
\begin{equation}
    \mathcal H_L(\Q^\dagger) = \frac{\ker \{\Q^\dagger : V^L \to V^{L-1}\}}{\text{im} \{\Q^\dagger : V^{L+1} \to V^{L}\}}, \quad \text{for}\quad L\geqslant 2.
  \end{equation}
The existence of a non-degenerate inner product on $V^L$ implies that the cohomology and homology are isomorphic \cite{loday:92,masson:08}:
\begin{equation}
  \mathcal H_L(\Q^\dagger)\simeq \mathcal H^L(\Q)\quad \text{for each}\quad L\geqslant 1.
\end{equation}
In particular, we have
\begin{equation}
\dim \mathcal H_L(\Q^\dagger)=\dim \mathcal H^L(\Q)=1\quad  \text{for each}\quad L\geqslant 1.
\end{equation}
We now determine an alternative representation of the zero-energy states using this property. 

\begin{proposition}
  \label{prop:HomRepr}
  For each $n\geqslant 1$, we have
\begin{equation}
  \label{eqn:AdjointRepresentation}
  |\psi_{2n}\rangle = \mu_{2n} |0\ell\cdots 0\ell\rangle + \Q^\dagger |\tilde \phi_{2n}\rangle, \quad \mu_{2n} = \langle \chit\cdots \chit|\psi_{2n}\rangle,
  \end{equation}
  with $ |\tilde \phi_{2n}\rangle\in V^{2n+1}$ and
\begin{equation}
|\psi_{2n-1}\rangle = \mu_{2n-1} |0\ell\cdots 0\ell 0\rangle + \Q^\dagger |\tilde \phi_{2n-1}\rangle, \quad \mu_{2n-1} = \left(\langle 0|\otimes \langle \chit \cdots \chit|\right)|\psi_{2n-1}\rangle,
\end{equation}
with $|\tilde \phi_{2n-1}\rangle\in V^{2n}$.
\begin{proof}
We focus on the case $L=2n$. Since $\mathcal H_{2n}(\Q^\dagger)$ is one-dimensional, each of its non-zero elements can be represented by a non-zero multiple of a fixed vector in $V^{2n}$. This vector is in the kernel (but not in the image) of $\Q^\dagger$. We claim that such a vector is given by
    \begin{equation}
    |\omega\rangle = |0\ell\cdots 0\ell\rangle.
  \end{equation}
 Using \eqref{eqn:ActionAdjointTrigSC}, one readily checks that $|\omega\rangle$ is annihilated by $\Q^\dagger$. Furthermore, it cannot be in the image of $\Q^\dagger$. Indeed, otherwise if $|\omega\rangle = \Q^\dagger |\tilde \omega\rangle$ for some $|\tilde\omega\rangle \in V^{2n+1}$ then it follows that $(\psi_{2n})_{0\ell\cdots 0\ell}= \langle \omega|\psi_{2n}\rangle= \langle \tilde \omega|\Q|\psi_{2n}\rangle=0$. This contradicts \cref{corr:NonZeroComps}.
    
  It follows that the zero-energy states for $L=2n$ have the representation
  \begin{equation}
    |\psi_{2n}\rangle = \mu_{2n} |0\ell\cdots 0\ell\rangle + \Q^\dagger |\tilde \phi_{2n}\rangle
  \end{equation}
  for some non-zero scalar $\mu_{2n}$ and $|\tilde \phi_{2n}\rangle\in V^{2n+1}$. The vector $|\tilde \phi_{2n}\rangle$ cannot be determined from homological arguments. The factor $\mu_{2n}$, however, can be found by computing the scalar product of the zero-energy state with suitable states
   $|\gamma\rangle$ that are annihilated by the supercharge $\Q$ and have a non-zero scalar product with the representative. In the present case, we choose $|\gamma\rangle = |\chit\cdots \chit\rangle$. A short calculation, similar to \eqref{eqn:SPPsi2n}, leads to
    \begin{equation}
     \mu_{2n} = \langle \chit \cdots \chit|\psi_{2n}\rangle.
  \end{equation}
  
  The argument for $L=2n-1$ is similar, with the choice $\ket{\gamma}=\ket{0}\otimes\ket{\chit \dots \chit}$. 
\end{proof}
\end{proposition}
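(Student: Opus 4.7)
The plan is to carry out the mirror image of the argument for \cref{prop:CohomRepr}, with $\Q$ and $\Q^\dagger$ interchanged. Combining the isomorphism $\mathcal H_L(\Q^\dagger) \simeq \mathcal H^L(\Q)$ recalled just above with \cref{thm:OCCohom} shows that $\mathcal H_L(\Q^\dagger)$ is one-dimensional for every $L\geqslant 1$. It therefore suffices to exhibit one distinguished representative of the non-trivial homology class in $V^L$; the zero-energy state $|\psi_L\rangle$, being annihilated by $\Q^\dagger$, will then coincide with a scalar multiple of that representative modulo $\text{im}\,\Q^\dagger$. The scalar can be extracted by pairing both sides with a well-chosen $\Q$-closed vector, which kills the unknown coboundary contribution.

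For $L=2n$, I would take $|\omega\rangle = |0\ell\cdots 0\ell\rangle$. Every neighbouring pair $(0,\ell)$ or $(\ell,0)$ sums to $\ell$, so the action \eqref{eqn:ActionAdjointTrigSC} of $\q^\dagger$ annihilates each such pair and hence $\Q^\dagger|\omega\rangle=0$. To show $|\omega\rangle\notin \text{im}\,\Q^\dagger$, I would argue by contradiction: if $|\omega\rangle = \Q^\dagger|\tilde\omega\rangle$ for some $|\tilde\omega\rangle \in V^{2n+1}$, then $(\psi_{2n})_{0\ell\cdots 0\ell} = \langle \omega|\psi_{2n}\rangle = \overline{\langle \tilde\omega|\Q|\psi_{2n}\rangle} = 0$, contradicting \cref{corr:NonZeroComps}. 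Hence $|\omega\rangle$ represents a non-trivial homology class and $|\psi_{2n}\rangle = \mu_{2n}|\omega\rangle + \Q^\dagger|\tilde\phi_{2n}\rangle$ for some unique $\mu_{2n}$ and $|\tilde\phi_{2n}\rangle\in V^{2n+1}$. Contracting with $|\gamma\rangle = |\chit\cdots\chit\rangle$, which lies in $\ker\Q$ by \cref{thm:OCCohom}, eliminates the coboundary piece since $\langle\gamma|\Q^\dagger|\tilde\phi_{2n}\rangle = \overline{\langle\tilde\phi_{2n}|\Q|\gamma\rangle}=0$. Using $\chit_0 = 1/\{1\} = 1$, the pairing $\langle\gamma|\omega\rangle$ reduces to $\chit_0^n = 1$, giving $\mu_{2n} = \langle\chit\cdots\chit|\psi_{2n}\rangle$ as claimed.

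The odd case $L=2n-1$ is handled identically with $|\omega\rangle = |0\ell\cdots 0\ell 0\rangle$ and $|\gamma\rangle = |0\rangle\otimes|\chit\cdots\chit\rangle$: the former is $\Q^\dagger$-closed by the same pair-sum argument, the latter is $\Q$-closed by \cref{thm:OCCohom}, and the non-coboundary check uses \cref{corr:NonZeroComps} once more. The pairing $\langle\gamma|\omega\rangle$ equals $\chit_\ell^{n-1}$, and at the root of unity \eqref{eqn:RootOfUnity} a short computation shows that $\{\ell+1\}=1$, so $\chit_\ell=1$ and the normalisation is again trivial. The main obstacle in the whole strategy is precisely the non-coboundary verification: without it one could only assert the existence of \emph{some} non-trivial representative, not pin down the clean closed-form expression for $\mu_L$. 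That step, in turn, relies essentially on the cohomological analysis already carried out in \cref{sec:Y0}, so the homological picture is really parasitic on the cohomological one.
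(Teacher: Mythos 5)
Your proposal is correct and follows essentially the same route as the paper: the same representative $|0\ell\cdots 0\ell\rangle$ (resp.\ $|0\ell\cdots 0\ell 0\rangle$), the same non-coboundary argument via \cref{corr:NonZeroComps}, and the same extraction of $\mu_L$ by pairing with the $\Q$-closed vector $|\chit\cdots\chit\rangle$ (resp.\ $|0\rangle\otimes|\chit\cdots\chit\rangle$). The only addition is your explicit verification that the normalisation $\langle\gamma|\omega\rangle=1$ holds because $\{1\}=\{\ell+1\}=1$ at the root of unity, a detail the paper leaves implicit.
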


\subsection{Square norm and scalar products}

\label{sec:ScalarProducts}

In this subsection, we use the \cref{prop:CohomRepr,prop:HomRepr} in order to derive sum rules for the square norm of the zero-energy states and certain scalar products.

An immediate consequence of the (co)homology representations of the zero-energy states given in \cref{prop:CohomRepr,prop:HomRepr} is:
\begin{proposition}
For each $n\geqslant 1$, we have
\begin{subequations}
\label{eqn:SquareNorm}
\begin{equation}
    ||\psi_{2n}||^2 = (\psi_{2n})_{0\ell\cdots 0 \ell}\langle \psi_{2n}|\chit\cdots \chit\rangle
    \end{equation}
    and
\begin{equation}
    ||\psi_{2n-1}||^2 = (\psi_{2n-1})_{0\ell\cdots 0 \ell 0}\langle \psi_{2n-1}|\Bigl( |0\rangle\otimes |\chit\cdots \chit\rangle\Bigr).
\end{equation}
\end{subequations}
\begin{proof}
  For $L=2n$, we find
  \begin{align}
    ||\psi_{2n}||^2 &= (\mu_{2n}^\ast \langle 0\ell\cdots 0\ell| + \langle \tilde{\phi}_{2n}|\Q)|\psi_{2n}\rangle = \mu_{2n}^\ast (\psi_{2n})_{0\ell\cdots 0 \ell}\\
    &=(\psi_{2n})_{0\ell\cdots 0 \ell}\langle \psi_{2n}|\chit\cdots \chit\rangle \nonumber.
  \end{align}
  For $L=2n-1$ the proof is similar.
\end{proof}
\end{proposition}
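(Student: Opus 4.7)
The plan is to feed the zero-energy state into itself, using the homology representation from \cref{prop:HomRepr}, and let the supercharge annihilation conditions kill everything but the advertised rank-one piece. Concretely, for $L=2n$ I would take the Hermitian conjugate of \eqref{eqn:AdjointRepresentation} to write
\[
\langle \psi_{2n}| = \mu_{2n}^\ast\,\langle 0\ell\cdots 0\ell| + \langle \tilde \phi_{2n}|\,\Q,
\]
and then evaluate $\|\psi_{2n}\|^2=\langle \psi_{2n}|\psi_{2n}\rangle$ by applying this bra to $|\psi_{2n}\rangle$. The term $\langle \tilde\phi_{2n}|\Q|\psi_{2n}\rangle$ vanishes because a zero-energy state lies in $\ker\Q$ (see \eqref{eqn:ZeroEnergyStateEqns}), which leaves $\mu_{2n}^\ast (\psi_{2n})_{0\ell\cdots 0\ell}$. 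The final step is to observe that $\mu_{2n}^\ast = \overline{\langle \chit\cdots\chit|\psi_{2n}\rangle} = \langle \psi_{2n}|\chit\cdots\chit\rangle$, which delivers the desired identity.

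The odd-length case is completely parallel: I would start from the homology representation $|\psi_{2n-1}\rangle = \mu_{2n-1}|0\ell\cdots0\ell 0\rangle + \Q^\dagger|\tilde\phi_{2n-1}\rangle$, take its adjoint, and again use $\Q|\psi_{2n-1}\rangle=0$ to kill the $\Q$-term. Conjugating $\mu_{2n-1}$ turns $\langle 0|\otimes\langle\chit\cdots\chit|\,|\psi_{2n-1}\rangle$ into $\langle \psi_{2n-1}|(|0\rangle\otimes|\chit\cdots\chit\rangle)$, giving the second formula.

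There is essentially no obstacle here beyond keeping the complex conjugations straight: the proposition is really a one-line corollary of \cref{prop:HomRepr} combined with the defining identities \eqref{eqn:ZeroEnergyStateEqns} of a supersymmetry singlet. As a consistency check, the same formulas can be derived dually from the cohomology representation of \cref{prop:CohomRepr} by writing $\|\psi_L\|^2 = \lambda_L\langle \psi_L|\gamma_L\rangle$ for the appropriate representative $|\gamma_L\rangle$ and eliminating the $\Q|\phi_L\rangle$-piece via $\langle\psi_L|\Q=(\Q^\dagger|\psi_L\rangle)^\ast = 0$. Both routes produce the stated sum rules, which then serve as the starting point in \cref{sec:SPXXZ} for the bipartite fidelity computation.
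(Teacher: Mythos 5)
Your proposal is correct and follows essentially the same route as the paper: both take the adjoint of the homology representation from \cref{prop:HomRepr}, use $\Q|\psi_L\rangle=0$ to eliminate the $\langle\tilde\phi_L|\Q$ term, and identify $\mu_L^\ast$ with the stated overlap. The only difference is your added (valid) remark that the dual derivation from \cref{prop:CohomRepr} works equally well.
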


The expressions for the square norm \eqref{eqn:SquareNorm} motivate the analysis of a family of scalar products involving the zero-energy states of arbitrary length. To be specific, let us consider $L=2n$. We find that
\begin{equation}
  \frac{\langle \psi_{2n}|\chit\cdots \chit\rangle}{||\psi_{2n}||} = \frac{1}{(\psi_{2n})_{0\ell\cdots 0 \ell}/||\psi_{2n}||}.
  \label{eqn:SimpleExample}
\end{equation}
On the left-hand side of this equality, we have the projection of the normalised zero-energy state onto an $n$-fold tensor product of zero-energy states of the two-site chain. On the right-hand site, we find the inverse of a special component of the normalised zero-energy state on $2n$ sites.

It is natural to investigate if the result remains equally simple when we replace the $|\chit\rangle$'s by the zero-energy energy states of spins chains of generic lengths. To this end, we cut an open chain of length $L$ into $m$ subchains of lengths $L_1,\dots,L_m>0$ as is illustrated in \cref{fig:Subdivision}.
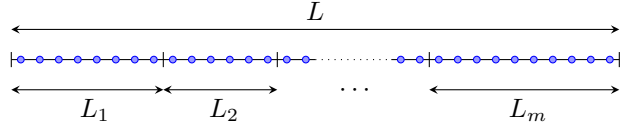
\begin{figure}[h]
  \centering
  \begin{tikzpicture}[>=stealth]
    \draw (0,0) -- (4,0);
    \draw[ dotted] (4,0) -- (5,0);
    \draw (5,0) -- (8,0);
    \foreach \x in {0,2,3.5,5.5,8}
    \draw[xshift = \x cm] (0,-.1) -- (0,.1);
    \draw[<->] (0,-.4) -- (2,-.4);
    \draw[<->] (2,-.4) -- (3.5,-.4);
    \draw[<->] (5.5,-.4) -- (8,-.4);
    \draw[<->] (0,.4) -- (8,.4);
    
    \foreach \x in {0,0.25,...,3.75}
      \draw[draw=blue,fill=blue!40] (\x+.125,0) circle (1.25pt);
    
    \foreach \x in {5,5.25,...,7.75}
      \draw[draw=blue,fill=blue!40] (\x+.125,0) circle (1.25pt);
    
    \draw (4.55,-0.4) node {$\cdots$};
    \draw (4,.4) node[above] {$L$};
    \draw (1.1,-.4) node[below] {$L_1$};
    \draw (2.8,-.4) node[below] {$L_2$};
    \draw (6.8,-.4) node[below] {$L_m$};

  \end{tikzpicture}
  \caption{Subdivision of an open chain of length $L$ into subchains.}
  \label{fig:Subdivision}
\end{figure}

Our aim is to compute projection of the zero-energy state of the complete chain onto the tensor product of the zero-energy states of these $m$ subchains. We consider thus the scalar product

\begin{equation}
   Z(L_1,\dots,L_m) = \frac{\langle \psi_L|\left(\bigotimes_{j=1}^m|\psi_{L_j}\rangle\right)}{||\psi_L||\prod_{j=1}^m ||\psi_{L_j}||}, \quad\text{for} \quad L= L_1+\dots+L_m .
   \label{eqn:MultipleOverlap}
\end{equation}
The division by the norms of the states makes this quantity normalisation-independent. For $L=2n$ and $L_j=2$ for $j=1,\dots,n$, we recover (up to factor) the left-hand side of \eqref{eqn:SimpleExample}.

\begin{theorem}
\label{thm:ResultForZ}
If $L_1,\dots,L_m$ are even, then
\begin{equation}
  Z(L_1,\dots,L_m) = \frac{\prod_{j=1}^m (\psi_{L_j})_{0\ell\cdots  0\ell }/ ||\psi_{L_j}||}{ (\psi_L)_{0\ell \cdots 0\ell }/ ||\psi_L||}.
  \label{eqn:Zeven}
\end{equation}
If $L_k$ is odd (for some $1\leqslant k\leqslant m$) and $L_1,\dots,L_{k-1},L_{k+1},\dots,L_m$ are even, then 
\begin{equation}
  Z(L_1,\dots,L_m) = \frac{(\psi_{L_k})_{0\ell 0\cdots \ell 0}/ ||\psi_{L_k}||\prod_{j=1,j\neq k}^m (\psi_{L_j})_{0\ell \cdots 0\ell }/ ||\psi_{L_j}||}{ (\psi_L)_{0\ell \cdots 0\ell 0}/ ||\psi_L||}.
  \label{eqn:Zoddeven}
\end{equation}
In all other cases, the scalar product vanishes.
\end{theorem}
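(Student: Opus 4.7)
The plan is to evaluate the numerator $\langle\psi_L|\bigl(\bigotimes_{j=1}^m|\psi_{L_j}\rangle\bigr)$ by applying the homology representation of $\langle\psi_L|$ supplied by \cref{prop:HomRepr} and then exploiting that the tensor product of zero-energy states on the subchains is itself a cocycle for the full-chain supercharge $\Q$. Concretely, write $\langle\psi_L|=\mu_L^\ast\langle r'_L|+\langle\tilde\phi_L|\Q$, where $|r'_L\rangle$ is $|0\ell\cdots 0\ell\rangle$ for even $L$ and $|0\ell\cdots 0\ell 0\rangle$ for odd $L$. If the $\Q$-part vanishes on the tensor product, the scalar product reduces to $\mu_L^\ast\langle r'_L|\bigl(\bigotimes_j|\psi_{L_j}\rangle\bigr)$.

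To prove $\Q\bigl(\bigotimes_j|\psi_{L_j}\rangle\bigr)=0$, decompose $\Q=\sum_{i=1}^L(-1)^i\q_i$ according to which subchain hosts the site $i$. With offsets $\ell_k=L_1+\cdots+L_{k-1}$, the identity $(-1)^{\ell_k+p}=(-1)^{\ell_k}(-1)^p$ shows that the restriction of $\Q$ to the sites of the $k$-th subchain acts on the tensor product as $(-1)^{\ell_k}$ times the subchain supercharge $\Q_k$ applied to the $k$-th factor. Since each $|\psi_{L_k}\rangle$ is a zero-energy state, $\Q_k|\psi_{L_k}\rangle=0$, and the sum vanishes term by term.

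Next I would compute $\langle r'_L|\bigl(\bigotimes_j|\psi_{L_j}\rangle\bigr)$. When all $L_j$ are even, $|0\ell\cdots 0\ell\rangle$ factorises cleanly into $\bigotimes_j|0\ell\cdots 0\ell\rangle$ on $L_j$ sites, yielding the product $\prod_j(\psi_{L_j})_{0\ell\cdots 0\ell}$. When a single $L_k$ is odd (so $L$ is odd), tracking the parity of the offsets shows that the subchains with $j<k$ are fed the bra $\langle 0\ell\cdots 0\ell|$, subchain $k$ receives $\langle 0\ell\cdots 0\ell 0|$, while the subchains with $j>k$ instead receive the reversed pattern $\langle\ell 0\cdots\ell 0|$. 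I then invoke the parity invariance $P|\psi_{L_j}\rangle=|\psi_{L_j}\rangle$ (the corollary following \cref{prop:SpecialCompL2n}) to identify $(\psi_{L_j})_{\ell 0\cdots\ell 0}=(\psi_{L_j})_{0\ell\cdots 0\ell}$ for $j>k$, so the whole product collapses to $(\psi_{L_k})_{0\ell\cdots 0\ell 0}\prod_{j\neq k}(\psi_{L_j})_{0\ell\cdots 0\ell}$.

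The remaining vanishing statement follows from magnetisation conservation: by \cref{corr:Magnetisation}, the tensor product is an $M$-eigenstate with eigenvalue $(\ell/2)$ times the number of odd $L_j$, while $|\psi_L\rangle$ has $M$-eigenvalue $0$ or $\ell/2$ depending on the parity of $L=\sum_j L_j$. Matching eigenvalues forces zero or exactly one odd subchain; otherwise orthogonality of distinct $M$-eigenspaces annihilates the overlap. Finally I substitute $\mu_L^\ast$ using \eqref{eqn:SquareNorm}, namely $\mu_L^\ast=||\psi_L||^2/(\psi_L)_{0\ell\cdots 0\ell}$ or $||\psi_L||^2/(\psi_L)_{0\ell\cdots 0\ell 0}$, and divide by $||\psi_L||\prod_j||\psi_{L_j}||$ to obtain \eqref{eqn:Zeven} and \eqref{eqn:Zoddeven}. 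The main obstacle I anticipate lies in the mixed-parity case: verifying that the pattern reversal in the tail subchains really is healed by the parity of the singlets and that the odd subchain carries exactly the right subscript, so that the combinatorial structure of $|0\ell\cdots 0\ell 0\rangle$ on $L$ sites matches the target formula.
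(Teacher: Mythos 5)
Your proposal is correct and follows essentially the same route as the paper: expand $\langle\psi_L|$ via the homology representation of \cref{prop:HomRepr}, kill the $\Q$-term because the tensor product of singlets is annihilated by the full-chain supercharge, and convert $\mu_L^\ast$ into norms and alternating components via \eqref{eqn:SquareNorm}. The only place you go beyond the paper's text is in spelling out the mixed-parity case (the offset-parity bookkeeping and the use of $P|\psi_{L_j}\rangle=|\psi_{L_j}\rangle$ to heal the reversed pattern $\ell 0\cdots\ell 0$ in the tail subchains), which the paper dismisses as ``similar'' but handles by exactly this parity argument in the proof of \cref{thm:ResultForZTilde}.
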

\noindent
We note that the results \eqref{eqn:Zeven} and \eqref{eqn:Zoddeven} are remarkably simple. The zero-energy states typically are very complicated states with many non-zero components. Nonetheless, the scalar product can be inferred from the sole knowledge of a single component of each involved (normalised) state. Furthermore, the exchange of $L_i$ and $L_j$ for $i\neq j$ leaves the result invariant even though it can completely change the subdivision as shown in \cref{fig:Subdivision}.

\begin{proof}[Proof of \cref{thm:ResultForZ}]
 For simplicity, we focus on the proof of \eqref{eqn:Zeven}. Using \eqref{eqn:AdjointRepresentation}, we find
 \begin{equation}
   \langle \psi_L|\Biggl(\bigotimes_{j=1}^m|\psi_{L_j}\rangle\Biggr) = \mu_L^\ast \prod_{j=1}^m (\psi_{L_j})_{0\ell \cdots  0\ell}+ \langle \tilde{\phi}_L|\Q \Biggl(\bigotimes_{j=1}^m|\psi_{L_j}\rangle\Biggr).
 \end{equation}
The second term on the right-hand side of this equation vanishes. Indeed, we obtain
 \begin{equation} 
   \Q \Biggl(\bigotimes_{j=1}^m|\psi_{L_j}\rangle\Biggr) = \sum_{i=1}^m (-1)^{L_1+\dots+L_{i-1}}\bigotimes_{j=1}^{i-1}|\psi_{L_j}\rangle\otimes \Q|\psi_{L_i}\rangle\otimes \bigotimes_{j=i+1}^{m}|\psi_{L_j}\rangle 
 \end{equation}
 which is zero because $\Q|\psi_{L_i}\rangle=0$ for $i=1,\dots,m$. Hence, we find
 \begin{equation}
\langle \psi_L|\Biggl(\bigotimes_{j=1}^m|\psi_{L_j}\rangle\Biggr) = \mu_L^\ast \prod_{j=1}^m (\psi_{L_j})_{0\ell \cdots  0\ell}. 
 \end{equation}
Using \eqref{eqn:SquareNorm}, we write $\mu_L^\ast$ in terms of the square norm and the alternating component. Upon division by the norms of the zero-energy states, we obtain \eqref{eqn:Zeven}.
 
The proof of \eqref{eqn:Zoddeven} is similar. Eventually, the scalar product vanishes in all other cases because of the definite magnetisation of the zero-energy states.
 \end{proof}

There are many generalisations of \eqref{eqn:MultipleOverlap} that have equally simple expressions in terms of special components. As an example, we consider the product
\begin{subequations}
\begin{equation}
  \tilde Z(L_1,\dots, L_m) = \frac{\langle \psi_L|\left(|\psi_{L_1}\rangle\otimes  \bigotimes_{j=2}^{m}(|\ell\rangle \otimes |\psi_{L_j}\rangle)\right)}{||\psi_L||\prod_{j=1}^m ||\psi_{L_j}||}
\end{equation}
for
\begin{equation}
  L=L_1+\dots+L_m +m-1.
\end{equation}
\end{subequations}

\begin{theorem}
  \label{thm:ResultForZTilde}
  If $L_1,\dots,L_m$ are odd then
  \begin{equation}
  \tilde Z(L_1,\dots,L_m) = \frac{ (\psi_L)_{0\ell \cdots 0\ell 0}/ ||\psi_L||}{\prod_{j=1}^m (\psi_{L_j})_{0\ell\cdots  0\ell 0}/ ||\psi_{L_j}||}.
  \label{eqn:ZTildeodd}
  \end{equation}
  If $L_k$ is even (for some $1\leqslant k\leqslant m$) and $L_1,\dots,L_{k-1},L_{k+1},\dots,L_m$ are odd, then
\begin{equation}
 \tilde Z(L_1,\dots,L_m) =\frac{ (\psi_L)_{0\ell \cdots 0\ell}/ ||\psi_L||}{(\psi_{L_k})_{0\ell \cdots 0\ell}/ ||\psi_{L_k}||\prod_{j=1,j\neq k}^m (\psi_{L_j})_{0\ell \cdots 0\ell 0 }/ ||\psi_{L_j}||}.
  \label{eqn:ZTildeoddeven}
\end{equation}
  In all other cases, the scalar product vanishes.
\begin{proof}
Let us prove
\eqref{eqn:ZTildeodd}. The reasoning is similar to the proof of \cref{thm:ResultForZ}. We use \eqref{eqn:psi2n1} in order to write
\begin{equation}
  \label{eqn:ZTildeIntermediate}
\langle \psi_L|\Biggl(|\psi_{L_1}\rangle\otimes  \bigotimes_{j=2}^{m}(|\ell\rangle \otimes |\psi_{L_j}\rangle)\Biggr) = \left(\lambda_L^\ast \bra{0}\otimes
    \bra{\chit\cdots\chit} + \langle \phi_L|\Q^\dagger \right)\Biggl(|\psi_{L_1}\rangle\otimes  \bigotimes_{j=2}^{m}(|\ell\rangle \otimes |\psi_{L_j}\rangle)\Biggr).
\end{equation}
The term involving $\Q^\dagger$ vanishes. Indeed, one checks that if both $|\psi\rangle$ and $|\psi'\rangle$ are in the kernel of $\Q^\dagger$ then $\Q^\dagger(|\psi\rangle\otimes |\ell\rangle \otimes |\psi'\rangle)=0$ as a consequence of \eqref{eqn:ActionAdjointTrigSC}. Applying this property repeatedly, we reduce \eqref{eqn:ZTildeIntermediate} to 
\begin{equation}
\langle \psi_L|\Biggl(|\psi_{L_1}\rangle\otimes  \bigotimes_{j=2}^{m}(|\ell\rangle \otimes |\psi_{L_j}\rangle)\Biggr) = \lambda_L^\ast \prod_{j=1}^m \mu_{L_j}. 
\end{equation}
Using \eqref{eqn:SquareNorm}, we find \eqref{eqn:ZTildeodd}.

The proof of \eqref{eqn:ZTildeoddeven} is similar, using in addition the 
parity invariance of the zero-energy state. In all other cases, the scalar products vanish because of the definite magnetisation of the zero-energy states.
\end{proof}
\end{theorem}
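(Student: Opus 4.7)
The plan is to follow the same template as the proof of \cref{thm:ResultForZ}, but with the roles of the supercharge and its adjoint interchanged: where the previous proof decomposed $|\psi_L\rangle$ via \cref{prop:HomRepr} and used that $\Q$ annihilates a tensor product of zero-energy states, here I would decompose $\langle\psi_L|$ using \cref{prop:CohomRepr} and show instead that $\Q^\dagger$ annihilates the interleaved state $|\Omega\rangle=|\psi_{L_1}\rangle\otimes|\ell\rangle\otimes\cdots\otimes|\ell\rangle\otimes|\psi_{L_m}\rangle$. First I would dispose of the vanishing cases by magnetisation bookkeeping: by \cref{corr:Magnetisation} each $|\psi_{L_j}\rangle$ carries magnetisation $0$ if $L_j$ is even and $\ell/2$ if $L_j$ is odd, while each inserted $|\ell\rangle$ contributes $-\ell/2$. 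Matching with the magnetisation of $|\psi_L\rangle$ leaves only the two parity configurations listed in the theorem, which correspond respectively to $L$ odd and $L$ even.

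The key technical step, and the one I expect to be the most delicate, is to prove $\Q^\dagger|\Omega\rangle=0$. I would write $\Q^\dagger$ as the signed sum of local operators $\q^\dagger$ acting on adjacent pairs of sites. The terms whose pair lies entirely within a single block $|\psi_{L_j}\rangle$ combine, after accounting for a common position-dependent sign, into $\Q^\dagger|\psi_{L_j}\rangle$ tensored with the spectator blocks, and therefore vanish since each $|\psi_{L_j}\rangle$ is annihilated by $\Q^\dagger$. The terms whose pair straddles an inserted $|\ell\rangle$ vanish term by term because \eqref{eqn:ActionAdjointTrigSC} gives $\q^\dagger|m,\ell\rangle=\q^\dagger|\ell,m\rangle=0$ for any admissible $m$.

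With this annihilation in hand, applying $\langle\psi_L|=\lambda_L^\ast\langle \mathrm{rep}|+\langle\phi_L|\Q^\dagger$ reduces the computation to the overlap of the cohomology representative of $|\psi_L\rangle$ with $|\Omega\rangle$. The remaining obstacle is to show that this overlap factorises across blocks. Expanding $\langle\chit|=\sum_m\chit_m\langle m,\ell-m|$ and imposing that the bra indices at the sites carrying the inserted $|\ell\rangle$'s equal $\ell$ introduces forced factors $\chit_\ell$. The factorisation then rests on the numerical identity $\chit_\ell=1/\{\ell+1\}=1$ at the root of unity $q=e^{\i\pi/(\ell+2)}$ (which one verifies by noting $q^{\ell+1}=-q^{-1}$), so the forced factors are trivial; simultaneously, the forced indices align the remainder of the bra, restricted to the sites of each block, with the representative of a chain of length $L_j$. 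One concludes $\langle\psi_L|\Omega\rangle=\lambda_L^\ast\prod_{j=1}^m\mu_{L_j}$.

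Finally I would combine this identity with the norm formula \eqref{eqn:SquareNorm}, using $\|\psi_{L_j}\|^2=\lambda_{L_j}\mu_{L_j}^\ast$ (with real $\lambda$'s and $\mu$'s because the Hamiltonian is real), to rewrite each factor $\mu_{L_j}/\|\psi_{L_j}\|$ as $\|\psi_{L_j}\|/\lambda_{L_j}$ and recover \eqref{eqn:ZTildeodd}. The case with one even $L_k$ is handled by the same three steps applied to the even-$L$ representative $|\chit\cdots\chit\rangle$ of $|\psi_L\rangle$; the forced-index analysis in the bra then yields a factor $\mu_{L_k}$ from the unique even block and $\lambda_{L_j}$-type factors from the odd blocks, producing \eqref{eqn:ZTildeoddeven}. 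All remaining parity configurations give vanishing overlaps by magnetisation conservation.
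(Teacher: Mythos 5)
Your proposal follows essentially the same route as the paper's proof: decompose $\langle\psi_L|$ via the cohomology representation of \cref{prop:CohomRepr}, kill the $\Q^\dagger$ term using $\q^\dagger|m,\ell\rangle=\q^\dagger|\ell,m\rangle=0$, factorise the remaining overlap into the product of the $\mu_{L_j}$ (your check that $\chit_\ell=\chit_0=1$ at $q=e^{\i\pi/(\ell+2)}$ is precisely the detail hiding behind the paper's ``applying this property repeatedly''), and then convert to components via \eqref{eqn:SquareNorm}. The one point you gloss over is that in the mixed-parity case \eqref{eqn:ZTildeoddeven} the forced indices hand the odd blocks preceding the even one the parity-reversed bra $\langle\chit\cdots\chit|\otimes\langle 0|$ rather than $\langle 0|\otimes\langle\chit\cdots\chit|$, so identifying those overlaps with $\mu_{L_j}$ requires $P|\psi_{L_j}\rangle=|\psi_{L_j}\rangle$ --- the parity invariance the paper invokes explicitly for this case.
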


It is possible to extend the definition of $\tilde Z$ in order to account for the case where $L_k=0$ for some $1\leqslant k\leqslant m$ and the $L_1,\dots,L_{k-1},L_{k+1},\dots,L_m$ are odd. In this case, we have
\begin{align}
  \tilde Z(L_1,\dots,L_k=0,\dots, L_m) &= \frac{\langle \psi_L|\left(  \bigotimes_{j=1}^{k-1}( |\psi_{L_j}\rangle\otimes |\ell\rangle)\otimes\bigotimes_{j=k+1}^{m}( |\ell\rangle\otimes |\psi_{L_j}\rangle )\right)}{||\psi_L||\prod_{j=1,j\neq k}^m ||\psi_{L_j}||}
  \nonumber
  \\
  &=\frac{ (\psi_L)_{0\ell \cdots 0\ell}/ ||\psi_L||}{\prod_{j=1,j\neq k}^m (\psi_{L_j})_{0\ell \cdots 0\ell 0 }/ ||\psi_{L_j}||}.
\end{align}

\section{The bipartite fidelity of the supersymmetric open XXZ chain}
\label{sec:SPXXZ}

In this section, we discuss the scalar products found in \cref{thm:ResultForZ,thm:ResultForZTilde} for $\ell=1$. In this case, the Hamiltonian is given by \eqref{eqn:XXZOpenChainND} and describes the open XXZ spin chain with anisotropy parameter $\Delta=-1/2$ and special diagonal boundary terms. In \cref{sec:FiniteSize}, we provide exact finite-size expressions for certain special components of the normalised zero-energy state. We use them to exactly evaluate the scalar products $Z(L_1,\dots,L_m)$ and $\tilde Z(L_1,\dots,L_m)$. Furthermore, we evaluate the scaling limits of the scalar products. In \cref{sec:LBF}, we compare these scaling limits for $m=2$ to the predictions of conformal field theory.

In contrast to the previous sections, the results presented here below are non-rigorous. Many statements are based on conjectures that we inferred from  the exact diagonalisation of the spin-chain Hamiltonian.

\subsection{Special components and scaling behaviour}

\label{sec:FiniteSize}
\paragraph{A finite-size conjecture.} We computed the zero-energy state $|\psi_L\rangle$ of the XXZ Hamiltonian \eqref{eqn:XXZOpenChainND} up to $L=16$ sites using \textsc{Mathematica}. We find that the ratio of the components $(\psi_{2n-1})_{01\cdots 010}$, $(\psi_{2n})_{01\cdots 01}$ and the norm of the corresponding states can be expressed in terms of two integer sequences $A_{\text{\rm \tiny V}}(2n+1)$ and $N_8(2n)$.
 These two sequences enumerate $(2n+1)\times (2n+1)$ vertically-symmetric alternating sign matrices and cyclically-symmetric self-complementary plane partitions in a $2n\times 2n \times 2n$ cube, respectively \cite{bressoudbook}. Explicitly, they are given by
\begin{equation}
  \label{eqn:DefAVN8}
  A_{\text{\rm \tiny V}}(2n+1)= \frac{1}{2^n}\prod_{k=1}^n \frac{(6k-2)!(2k-1)!}{(4k-1)!(4k-2)!}, \quad N_{8}(2n)=\prod_{k=0}^{n-1} \frac{(3k+1)(6k)!(2k)!}{(4k)!(4k+1)!}.
\end{equation}
Our numerical investigation is consistent with the following conjecture: 
\begin{conjecture}
  \label{conj:NormalisedComponents}
For each $n\geqslant 1$ we have 
\begin{equation}
  \label{eqn:NormalisedComponents}
   \frac{(\psi_{2n-1})_{01\cdots 010}}{||\psi_{2n-1}||} = \sqrt{\frac{N_8(2n)}{A_{\text{\rm \tiny V}}(2n+1)}}
, \quad  \frac{(\psi_{2n})_{01\cdots 01}}{||\psi_{2n}||} = \sqrt{\frac{A_{\text{\rm \tiny V}}(2n+1)}{N_8(2n+2)}}.
\end{equation}
\end{conjecture}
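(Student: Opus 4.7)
The approach I would take is to exploit the isospectrality noted earlier in the paper between the Hamiltonian \eqref{eqn:XXZOpenChainND} and the Hamiltonian of the Temperley-Lieb stochastic process \cite{nichols:05,degier:05}, and then to import the exact combinatorial results that are known or conjectured for the stationary state of the latter. Under the unitary equivalence intertwining these two Hamiltonians, the zero-energy state $|\psi_L\rangle$ of $H$ corresponds to the (non-negative) stationary state of the stochastic process. In the standard integer normalisation in which the smallest component equals $1$, the components of this stationary state carry a well-studied combinatorial meaning: they enumerate link-pattern or fully-packed-loop configurations on a strip, and several distinguished components admit closed product formulas.

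First I would set up a precise dictionary. The ``Néel-like'' alternating components $(\psi_{2n-1})_{01\cdots 010}$ and $(\psi_{2n})_{01\cdots 01}$ should be identified with specific distinguished entries of the stationary state (typically the ``rainbow'' or ``fully nested'' components), while $||\psi_L||^2$ is the sum of squares of all stationary-state components. Next, I would invoke the product formulas known for these quantities: the distinguished alternating component should equal a known product, and the sum of squared components should factor into a product involving the combinatorial sequences $A_{\text{\rm \tiny V}}(2n+1)$ and $N_8(2n)$ counting vertically symmetric alternating sign matrices and cyclically symmetric self-complementary plane partitions, respectively. The appearance of precisely these two sequences in \eqref{eqn:NormalisedComponents}, together with the parity dependence $L=2n$ versus $L=2n-1$, strongly suggests that one of them corresponds to the Néel component squared and the other to the norm squared, and one then simply matches exponents in the quotient to obtain \eqref{eqn:NormalisedComponents}.

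The main obstacle will be the evaluation of $||\psi_L||$. Individual components of the stationary state of an open Temperley-Lieb loop model can be controlled by recursions coming from the Temperley-Lieb or Hecke algebra action, or from solutions of the boundary quantum Knizhnik-Zamolodchikov equations; rational-number evaluations at the combinatorial point $\Delta=-1/2$ then produce product formulas for specific components. The sum of squares, however, is notoriously more delicate: it requires either an auxiliary six-vertex partition function identity of Izergin-Korepin type with appropriate open boundaries, or a Pfaffian-to-product reduction, and for the precise boundary terms in \eqref{eqn:XXZOpenChainND} such an identity may itself only be available as a conjecture in the literature. I therefore expect that a rigorous derivation of \eqref{eqn:NormalisedComponents} would proceed by reducing it to such a product/determinant identity for the open loop-model partition function; this is almost certainly the reason why the authors state \eqref{eqn:NormalisedComponents} as a conjecture rather than as a theorem.
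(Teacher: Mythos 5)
First, a point of comparison: the paper does not prove \cref{conj:NormalisedComponents} at all. The statement is inferred from exact diagonalisation of the Hamiltonian \eqref{eqn:XXZOpenChainND} up to $L=16$ sites and is explicitly presented as a conjecture, with the proof deferred to a separate publication \cite{hagendorf:tbp}. There is therefore no argument in the paper against which your plan can be matched; any complete derivation would go beyond what the paper actually does.

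That said, your plan as written does not constitute a proof, and two of its steps are genuinely problematic. (i) You pass from the isospectrality of \eqref{eqn:XXZOpenChainND} with the Temperley--Lieb stochastic Hamiltonian \cite{nichols:05,degier:05} to a ``unitary equivalence'' carrying $|\psi_L\rangle$ to the stationary state of the process. Isospectrality does not imply unitary equivalence; the stochastic generator is non-Hermitian, so the two operators are at best related by a non-unitary similarity transformation (equivalently, in the loop-model picture, by a change between the spin basis and the non-orthogonal link-pattern basis). Under such a map neither a distinguished component such as $(\psi_{2n})_{01\cdots 01}$ nor the Euclidean norm $||\psi_L||$ transforms in a controlled way, so the ``dictionary'' you propose is not automatic --- it is in fact the crux of the problem. (ii) You concede that the evaluation of $||\psi_L||^2$ would rest on a product or determinant identity for an open loop-model partition function that ``may itself only be available as a conjecture''; a reduction of one conjecture to another is not a proof. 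Your closing observation is essentially correct --- the difficulty of the norm sum is precisely why the statement is left as a conjecture --- but it also means your proposal should be read as a plausible research programme in the spirit of the Razumov--Stroganov literature cited in the paper, not as a verification of \eqref{eqn:NormalisedComponents}.
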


The occurrence of sequences that enumerate alternating sign matrices and plane partitions in the ground state of an XXZ spin chain at $\Delta=-1/2$ comes by no means as a surprise (see for example \cite{razumov:00,razumov:01,batchelor:01,degier:02,difrancesco:06,razumov:07}). We provide a proof of \cref{conj:NormalisedComponents} along with many other combinatorial properties of the ground state $|\psi_L\rangle$ in a separate publication \cite{hagendorf:tbp}.

\paragraph{Scaling behaviour.} The sequences $A_{\textrm{\tiny V}}(2n+1)$ and $N_{8}(2n)$ are given by ratios of products of factorials. Hence, they can be expressed in terms of Barnes' $G$-function \cite{whittaker:27}. We use the well-known asymptotic expansion of this function to evaluate the components \eqref{eqn:NormalisedComponents} for large system sizes:
\begin{subequations}
\begin{align}
  \frac{(\psi_{2n-1})_{01\cdots 010}}{||\psi_{2n-1}||} & = C_1(2n)^{1/12}\left(\frac{3^{3/4}}{2}\right)^{-2n}\left(1+O(n^{-1})\right),\\\frac{(\psi_{2n})_{01\cdots 01}}{||\psi_{2n}||} &= C_2(2n)^{-1/12}\left(\frac{3^{3/4}}{2}\right)^{-2n}\left(1+O(n^{-1})\right).
\end{align}
Here $C_1$ and $C_2$ are the constants 
\begin{equation}
   C_1 = \frac{\sqrt{\Gamma(1/3)}}{\pi^{1/4}}, \quad C_2=\left(\frac{2}{\sqrt{3}}\right)^{3/2} \frac{\pi^{1/4}}{\sqrt{\Gamma(1/3)}}.
\end{equation}
\label{eqn:asymptotics}%
\end{subequations}%
We use these expressions to extract the scaling behaviour of $Z(L_1,\dots,L_m)$, for $\ell=1$. It is obtained when the lengths of the subintervals $L_1,\dots,L_m$ become large in such a way that the ratios $L_i/L$ approach certain scaling variables $0<x_i<1$ for each $i=1,\dots,m$. $Z(L_1,\dots,L_m)$ is then given by an asymptotic series with respect to the system size $L$. The series coefficients are functions of $x_1,\dots,x_m$. Notice that together, \cref{thm:ResultForZ} and \eqref{eqn:asymptotics} imply that this asymptotic series is only well defined if the parity of the integers $L_1,\dots,L_m$ is fixed. 
There are two interesting cases, corresponding to \eqref{eqn:Zeven} and \eqref{eqn:Zoddeven}:
\begin{subequations}
\begin{enumerate}[label=\textit{(\roman*)}]
\item $L_i$ is even for each $i=1,\dots,m$. In this case the length of the chain $L$ is even. Using \eqref{eqn:asymptotics}, we obtain
\begin{equation}
  \label{eqn:ZScalingEven}
  Z(L_1,\dots,L_m) = L^{-(m-1)/12}C_2^{m-1}\prod_{i=1}^{m}x_i^{-1/12}\left(1+O(L^{-1})\right).
\end{equation}
\item $L_j$ is odd for a certain $j$ and $L_i$ is even for each $i=1,\dots,j-1,j+1,\dots,m$. In this case, $L$ is odd. From \eqref{eqn:asymptotics}, we obtain 
\begin{equation}
 \label{eqn:ZScalingEvenOdd}
 Z(L_1,\dots,L_m)=L^{-(m-1)/12}C_2^{m-1}x_j^{1/12}\prod_{i=1, i\neq j}^{m}x_i^{-1/12}\left(1+O(L^{-1})\right).
\end{equation}
\end{enumerate}

We obtain the scaling behaviour of the scalar product $\tilde Z(L_1,\dots,L_m)$ from \cref{thm:ResultForZTilde} and \eqref{eqn:asymptotics}. The resulting leading-order terms of \eqref{eqn:ZTildeodd} and \eqref{eqn:ZTildeoddeven} are up to a factor equal to \eqref{eqn:ZScalingEven} and \eqref{eqn:ZScalingEvenOdd}, respectively. Hence, without loss of generality we focus in the following on $Z(L_1,\dots,L_m)$.

\label{eqn:ScalingLimitsZ}
\end{subequations}

\subsection{Scaling behaviour and conformal field theory}

\label{sec:LBF}

The power-law decay of the scalar products in \eqref{eqn:ScalingLimitsZ} as well as their algebraic dependence on the scaling coordinates $x_1,\dots,x_{m}$ suggest that they could be related to correlation functions of conformal field theory (CFT). In this section, we discuss this relation for the special case of $m=2$. Specifically, we consider the so-called \textit{logarithmic bipartite fidelity} (LBF):
\begin{equation}
  \mathcal F(L_1,L_2) = - \ln |Z(L_1,L_2)|^2.
\end{equation}
It was introduced by Dubail and St\'ephan as an entanglement measure for the ground state of interacting quantum many-body systems in one dimension\cite{dubail:11,dubail:13}. In particular, they predicted the leading-order terms of the asymptotic expansion of the LBF with respect to the system size $L$ for one-dimensional quantum critical systems from CFT arguments.

From \eqref{eqn:ZScalingEven} and \eqref{eqn:ZScalingEvenOdd} we exactly compute these leading-order terms. We express them in terms of $x=x_1$. In case \textit{(i)}, we obtain
\begin{subequations}
\begin{equation}
  \mathcal F = \frac{1}{6}\ln L + \frac{1}{6}\ln x(1-x)-2\ln C_2+O(L^{-1}).
  \label{eqn:LBFScalingEven}
\end{equation}
In the case \textit{(ii)}, we take the scaling limit with $L_1$ even and $L_2$ odd. This leads to
 \begin{equation}
  \mathcal F = \frac{1}{6}\ln L + \frac{1}{6}\ln \left(\frac{x}{1-x}\right)-2\ln C_2+O(L^{-1}).
  \label{eqn:LBFScalingOdd}
\end{equation}
\label{eqn:LBFScaling}%
\end{subequations}%
The aim of this section is to show that \eqref{eqn:LBFScaling} perfectly matches the CFT predictions. To this end, we briefly discuss the relation between the scaling limit of the open XXZ chain \eqref{eqn:XXZOpenChainND} at $\Delta=-1/2$ and superconformal CFT in \cref{sec:cft}. In \cref{sec:lbf} we compare our findings for the scaling limit of the spin chain's LBF to the CFT results.

\subsubsection{CFT connection}

\label{sec:cft}
It is well known that in suitable scaling limits, many properties of the XXZ spin with anisotropy parameter $-1\leqslant \Delta \leqslant 1$ are accurately described by CFT. For open chains and real diagonal boundary terms, the field theory is expected to be given by a free boson theory with central charge $c=1$ and a compactification radius that depends on the value of $\Delta$ \cite{affleck:90}. If $\Delta=-1/2$ then this compactification radius is fixed to a value where the field theory coincides with the first model of the so-called $\mathcal N=2$ superconformal minimal series \cite{waterson:86,friedan:88,lerche:89} (see also \cite{huijse:10} for a compact introduction accessible to non-experts). This implies in particular that the CFT space of states divides into the so-called Ramond and Neveu-Schwarz sectors. Each sector organises in a finite number of irreducible highest-weight representations of the so-called $\mathcal N=2$ superconformal algebra. From the representation theory of this algebra, we know that the corresponding highest-weight states are labeled by a pair $(h,\alpha)$. Here $h$ denotes the conformal weight and $\alpha$ the so-called $U(1)$ charge of the state.\footnote{We use the common notation $h$ for conformal weights. This is not to be confused with the Hamiltonian density discussed in previous sections. Furthermore, for the $U(1)$-charges $\alpha$, we follow the conventions of \cite{dubail:13} that differ from the standard choice in the literature on superconformal field theory. For the first $\mathcal N=2$ superconformal minimal model that choice is $q_{\text{\tiny SCFT}}=\alpha/\sqrt{3}$.}

In the following, we focus on the Ramond sector. For the first $\mathcal N=2$ superconformal minimal model, it contains three highest-weight representations. Two of them are singlet representations with the corresponding highest-weight states labeled by
\begin{subequations}
\begin{equation}
  \left(\frac{1}{24},\frac{1}{2\sqrt{3}}\right),\,\left(\frac{1}{24},-\frac{1}{2\sqrt{3}}\right).
\end{equation}
Furthermore, there is a doublet representation with a highest-weight state labeled by
\begin{equation}
  \left(\frac{3}{8},\frac{\sqrt{3}}{2}\right)
\end{equation}
\label{eqn:CFTData}%
\end{subequations}%
It has a superpartner with the same conformal weight but the opposite $U(1)$-charge: $(3/8,-\sqrt{3}/2)$. Notice that since we deal with a free boson theory we have $h=\alpha^2/2$ for all these states \cite{difrancesco:97}.

\paragraph{Lattice and CFT quantities.}
The XXZ Hamiltonian \eqref{eqn:XXZOpenChainND} possesses an explicit lattice realisation of the supersymmetry. This suggests that for finite $L$ its low-energy eigenstates constitute an approximation to the Ramond sector of the CFT Hilbert space. In order to substantiate this claim, we identify the representation data \eqref{eqn:CFTData} with certain properties of the ground state and first excited state of the lattice Hamiltonian (see also \cite{degier:05} and furthermore \cite{huijse:11_2} for a related analysis of the $M_1$ model of supersymmetric fermions).

The conformal weight $h$ of a low-energy eigenstate of the Hamiltonian is related to the finite-size scaling of its eigenvalue. Let us denote by $E_0(L) \leqslant E_1(L) \leqslant E_2(L) \leqslant \cdots$ the eigenvalues of the Hamiltonian at size $L$. Then we have for large $L$ and small $i$ the following expansion \cite{cardy:86_3,affleck:86}:
\begin{equation}
  E_i(L) = L E_{\text{bulk}}+E_{\text{bdr}}+\frac{\pi v_F}{L}\left(h_i-\frac{c}{24}\right)+O(L^{-2}),
  \label{eqn:FiniteSizeScaling}
\end{equation}
where the central charge takes the value $c=1$. The factor $v_F$ is the so-called Fermi velocity that can be computed by Bethe-ansatz techniques \cite{korepin:93}. For $\Delta=-1/2$ we have $v_F = 3\sqrt{3}/2$. Furthermore, $E_{\text{bulk}}$ and $E_{\text{bdr}}$ are non-universal constants that depend on the definition of the Hamiltonian. In our case, these constants have to be zero, $E_{\text{bulk}}=E_{\text{bdr}}=0$, because the spectrum contains the ground-state eigenvalue $E_0(L)=0$ for any $L\geqslant 2$. Furthermore, \eqref{eqn:FiniteSizeScaling} fixes the conformal weight of the zero-energy states to $h_0=1/24$. To probe the consistency of these assignments, we have numerically examined the dependence of the eigenvalue $E_1(L)$ on $L$. Its scaling behaviour matches well \eqref{eqn:FiniteSizeScaling} with $h_1=3/8$.

Furthermore, it is natural to assume that the $U(1)$-charge of an eigenstate of the Hamiltonian is related to a linear function of its magnetisation. We fix this function by comparing the $U(1)$-charges of the two Ramond ground states in \eqref{eqn:CFTData} and the magnetisation of the lattice ground states that we obtained in \cref{corr:Magnetisation}. This suggests that the operator
\begin{equation}
  J_0 = \frac{(1-4M)}{2\sqrt{3}}.
\end{equation}
measures the CFT $U(1)$-charge: we have
\begin{equation}
   J_0|\psi_{2n-1}\rangle = -\frac{1}{2\sqrt{3}}|\psi_{2n-1}\rangle \quad \text{and} \quad  J_0|\psi_{2n}\rangle = \frac{1}{2\sqrt{3}}|\psi_{2n}\rangle
   \label{eqn:U1ZeroEnergyStates}
\end{equation}
for each $n\geqslant 1$. To see that this choice is consistent, we inspect the first excited state of the Hamiltonian. We denote this state by $|\phi_{L}\rangle$ for a system of length $L$. Our exact diagonalisation data supports the conjecture that for each $n\geqslant 1$, we have $\Q^\dagger |\phi_{2n-1}\rangle =0$ and $\Q|\phi_{2n-1}\rangle = |\mathcal \phi_{2n}\rangle$. Furthermore, the magnetisation of the states that we computed for small systems is compatible with the conjecture
\begin{equation}
   J_0|\phi_{2n-1}\rangle = \frac{\sqrt{3}}{2}|\phi_{2n-1}\rangle \quad \text{and} \quad  J_0|\phi_{2n}\rangle = - \frac{\sqrt{3}}{2}|\phi_{2n}\rangle.
\end{equation}
for each $n\geqslant 1$. This suggests that the first excited state for spin chains of odd length is a lattice approximation to the CFT highest-weight state $(3/8,\sqrt{3}/2)$.
 
 \paragraph{Conformal weight of the cut.} In order to compare our results \eqref{eqn:LBFScalingEven} and \eqref{eqn:LBFScalingOdd} to the predictions from conformal field theory, we need a last ingredient: the $U(1)$-charge of the cut. To motivate the introduction of this quantity, recall that the scalar product $Z(L_1,L_2)$ is non-zero only if the product-state $|\psi_{L_1}\rangle\otimes |\psi_{L_2}\rangle$ and $|\psi_L\rangle$ have the same magnetisation. Let us now rephrase this statement in terms of the $U(1)$-charges of the three zero-energy states $|\psi_{L_1}\rangle$, $|\psi_{L_2}\rangle$ and $|\psi_L\rangle$, which we denote by $\alpha_1,\alpha_2$ and $\alpha_3$, respectively. We find that $Z(L_1,L_2)$ is non-vanishing only if
\begin{equation}
  \alpha_\text{c} \equiv \alpha_3-\alpha_1-\alpha_2 =-\frac{1}{2\sqrt{3}}.
\end{equation}
The quantity $\alpha_\text{c}$ is the $U(1)$-charge of the cut. It allows us to formally define the conformal weight of the cut $h_\text{c}=\alpha_\text{c}^2/2=1/24$. Within CFT, it is possible to identify $h_\text{c}$ with the conformal weight of a so-called boundary condition changing operator. We refer to \cite{dubail:13} for more details.
 
\subsubsection{Logarithmic bipartite fidelity}

\label{sec:lbf}

\paragraph{Finite-size scaling of the LBF.} For quantum critical systems, the leading-order terms of the asymptotic expansion of the LBF with respect to the system size $L$ is given by \cite{dubail:13}:
\begin{subequations}
\begin{equation}
  \mathcal F = \left(\frac{c}{8}+h_{\text{c}}\right)\ln L + f(x) + g(x) L^{-1}\ln L+ O(L^{-1}).
 \label{eqn:LBFCFT}
\end{equation}
Here, $c$ is the central charge of the theory and $h_{\text{c}}$
 the conformal weight of the cut. Furthermore, $x=x_1$ denotes the scaling variable defined above. The functions $f(x)$ and $g(x)$ depend on the conformal weights and $U(1)$-charges associated to the involved states and the cut. Their explicit general form can be found in \cite{dubail:13}. Here, we only consider the case of a $c=1$ free field theory, where they can be written in terms of the $U(1)$ charges alone:
\begin{align}
f(x) =& \left(\frac{1}{24}\left(2x-1+\frac{2}{x}\right)+\left(1-\frac{1}{x}\right)\alpha_1^2 +(1-x)\alpha_3^2 -\frac{\alpha_\text{c}^2}{2}-\alpha_2^2-2\alpha_\text{c}\alpha_2\right)\ln (1-x)
\nonumber\\
&+ \left\{x\to 1-x; \,\alpha_1\leftrightarrow \alpha_2\right\} + C
\label{eqn:fFromCFT}
\end{align}
and
\begin{equation}
  g(x) = \xi\times \frac{1}{2}\left(\alpha_3^2 - \frac{1}{12} + \left(\frac{1}{12}-\alpha_1^2\right)\frac{1}{x}+\left(\frac{1}{12}-\alpha_2^2\right)\frac{1}{1-x}\right).
  \label{eqn:gFromCFT}%
\end{equation}%
\label{eqn:CFTPrediction}%
\end{subequations}%
Here $C$ and $\xi > 0$ are non-universal constants. In particular, $\xi$ is called the \textit{extrapolation length} \cite{diehl:81}.
Apart from these two constants, the expressions are universal.

\paragraph{Comparison to scaling limit.} The identification of the $U(1)$-charges of the zero-energy states and the cut allow us to finally compare our findings \eqref{eqn:LBFScaling} with the CFT prediction \eqref{eqn:CFTPrediction}. First, we notice that since $c=1$ and $h_\text{c}=1/24$, the leading order term in \eqref{eqn:LBFCFT} is $\frac16 \ln L$. This is consistent with our findings in \eqref{eqn:LBFScaling}. Second, for the subleading terms we find the following results. In case \textit{(i)} we have
\begin{equation}
  \alpha_1=\alpha_2=\alpha_3 = \frac{1}{2\sqrt{3}}, \, \alpha_\text{c} = -\frac{1}{2\sqrt{3}}, \quad f(x) =\frac{1}{6}\ln x(1-x)+C,\quad g(x) = 0.
\end{equation}
In case \textit{(ii)} we have
\begin{equation}
  \alpha_2=\alpha_3=\alpha_\text{c}=-\frac{1}{2\sqrt{3}}, \, \alpha_1=\frac{1}{2\sqrt{3}}, \quad f(x) =\frac{1}{6}\ln \left(\frac{x}{1-x}\right)+C,\quad g(x) = 0.
\end{equation}
In both cases, the expressions for $f(x)$ and $g(x)$ match perfectly the scaling limit of the lattice results, provided that we set $C=-2\ln C_2$. The absence of the $O(L^{-1}\ln L)$-term is worth mentioning. Similar instances of vanishing finite-size corrections for the quantum spin $1/2$ XXZ chain at $\Delta=-1/2$ were reported in the literature \cite{banchi:09}. In the present case, the vanishing is a direct consequence of the supersymmetry as it fixes the $U(1)$-charges of the zero-energy states to $\pm 1/(2\sqrt{3})$.

We end our analysis with a comment on the higher-spin cases $\ell>1$. For these cases we expect the $O(L^{-1}\ln L)$ term to vanish, too. Indeed, for arbitrary $\ell$ the scaling limit of the spin chains
  should be described by the Ramond sector of the $\ell$-th $\mathcal N=2$ superconformal minimal model. Indeed, this conjecture was made for the related $M_\ell$ models of strongly-interacting fermions with supersymmetry \cite{fendley:03}. It is well-known that the Ramond ground states have the conformal weight $h=c/24$ . Furthermore, for arbitrary $c$ the scaling function $g(x)$ is given by \cite{dubail:13}
\begin{equation}
  g(x) = \xi\times \left(h_3 - \frac{c}{24} + \left(\frac{c}{24}-h_1\right)\frac{1}{x}+\left(\frac{c}{24}-h_2\right)\frac{1}{1-x}\right), \quad \xi > 0.
\end{equation}%
If all of the involved states are Ramond ground states (or at least lattice approximations thereof) then this expression identically vanishes.

\section{Conclusion}

\label{sec:Conclusion}

In this article, we have studied the dynamic lattice supersymmetry for the open XXZ Heisenberg chains at their supersymmetric point. We have determined the family $\Q_{j,k}(y)$ of supercharges and identified a family of non-diagonal boundary interactions that are compatible with the supersymmetry. Furthermore, we have computed the cohomology of the supercharges and shown that it is non-trivial if and only if $y=0$. From this cohomology computation we have deduced the existence of a zero-energy state, unique up to normalisation, of the spin-chain Hamiltonians with $y=0$. A central result of this article is the sum rules presented in \cref{thm:ResultForZ,thm:ResultForZTilde}: They reveal that a large family of scalar products involving an arbitrary number of normalised zero-energy states can be computed in terms of certain distinguished components. We emphasise that these results are solely based on the supersymmetric structure and do not make use of any integrability techniques. Eventually, we have computed the scaling behaviour of logarithmic bipartite fidelity for the open spin $1/2$ quantum XXZ chain at $\Delta=-1/2$ with special boundary magnetic fields. Its scaling behaviour matches the predictions from conformal field theory at both leading and sub-leading orders.

The results of this work have a few interesting generalisations. First, it is natural to consider closed spin chains with periodic or twisted boundary conditions. In this case, the dynamic lattice supersymmetry only exists in certain (anti-)cyclic subspaces of the spin-chain Hilbert space \cite{hagendorf:13}. Therefore, the computation of the cohomology of the supercharges is more challenging. Nonetheless, we expect that there is a connection between the cohomology of supercharges of the open and the closed chains, based on known examples in the mathematical literature \cite{loday:92}. It should allow to establish some generalisations of \cref{thm:ResultForZ,thm:ResultForZTilde} to closed spin chains for the class of models discussed in this article.
 Second, the computation of similar scalar products in off-critical models with lattice supersymmetry, in particular their scaling limit in the vicinity of a critical point, is of potential interest. An example is the open staggered $M_1$ model \cite{beccaria:12}. We hope to address these problems in forthcoming publications.

\subsection*{Acknowledgements}

This work is supported by the Belgian Interuniversity Attraction Poles Program P7/18 through the network DYGEST (Dynamics, Geometry and Statistical Physics). We thank Alexi Morin-Duchesne for many interesting remarks and stimulating suggestions. Furthermore, we thank Luigi Cantini, J\'er\^ome Dubail, Axel Marcillaud de Goursac and Anita Ponsaing for discussions.

\end{document}